\documentclass[11pt]{article}
\usepackage[a4paper,margin=1in]{geometry}
\usepackage[T1]{fontenc}
\usepackage[utf8]{inputenc}
\usepackage{amsmath,amssymb,amsthm,mathtools}
\usepackage{graphicx}
\usepackage{anyfontsize}
\usepackage{colortbl}
\usepackage{xcolor}
\usepackage{steinmetz}
\usepackage{esvect}
\usepackage[hidelinks]{hyperref}
\graphicspath{{./images/}}
\allowdisplaybreaks

\newcommand{\trn}{^{\scriptscriptstyle T}}

\newcommand{\om}{\omega}
\newcommand{\ep}{\varepsilon}
\newcommand{\de}{\delta}
\newcommand{\rank}{\hbox{rank\,}}

\newcommand{\la}{\lambda}

\newcommand{\ga}{\gamma}

\newtheorem{theorem}{Theorem}[section]
\newtheorem{lemma}[theorem]{Lemma}
\newtheorem{statement}[theorem]{Claim}
\newtheorem{cor}[theorem]{Corollary}
\theoremstyle{remark}
\newtheorem{rem}[theorem]{Remark}

\title{On orbital stabilization of a circular motion primitive for a dynamic extension of the Dubins car model}
\author{%
Artem Angelchev-Shiryaev$^{1}$,\quad Pavel E. Aleshin$^{2}$,\quad Anton S. Shiriaev$^{3}$, \\ Pavel A. Shamanaev$^{2}$,\quad and\quad Leonid B. Freidovich$^{4}$\\[1ex]
\small $^{1}$Department of Industrial and Mechanical Sciences, Lund University, Lund, Sweden\\
\small $^{2}$Department of Information Technologies, Sirius University, Sirius, Russia\\
\small $^{3}$Department of Engineering Cybernetics, NTNU, Trondheim, Norway\\
\small $^{4}$Department of Applied Physics and Electronics, Ume\r{a} University, Ume\r{a}, Sweden\\
}
\date{}

\begin{document}
\maketitle

\begin{abstract}
This paper addresses orbital stabilization of a circular motion primitive for a dynamic extension of the Dubins car model within a transverse-linearization framework.
We show that the corresponding transverse linearization is unstable and not stabilizable by linear state feedback.
Therefore, the standard linearization-based approach to orbital stabilization cannot be applied directly.
The main contribution is a set of explicit and verifiable conditions that characterize when a controller design based on transverse linearization remains applicable.
These conditions rely on the specific structure of the dynamics in a neighborhood of the motion and on the use of non-standard transverse coordinates for controller design and analysis.
Numerical simulations illustrate the proposed design procedure.

\end{abstract}

\section{Introduction and Problem Settings}\label{sec1}

Designing a smooth feedback controller for orbital stabilization of a motion of a controlled mechanical system can often be achieved by stabilizing the origin of suitable transverse dynamics or its linear approximation; see, for example, \cite{HauserBanaszuk,NielsenMaggiore,Urabe}.
For controlled mechanical systems, constructive procedures for deriving transverse coordinates and the corresponding transverse linearization are available in a number of important cases; see \cite{Shiriaev,FreidovichShiriaevCDC09,Satre2021}.
As shown in this paper, for mechanical systems subject to non-integrable constraints, this linearization-based route may fail to provide a stabilizable transverse linearization, which leads to non-standard problem settings for motion control based solely on the linearization.
Moreover, the transverse linearization may also exhibit an unstable linear drift term, which makes the problem even more challenging.
Taken together, these two properties -- the lack of stabilizability and the instability of the linearized transverse dynamics -- raise the question of whether orbital stability of the motion can be achieved for the nonlinear mechanical system at all by means of a smooth feedback controller. 
The main result (Theorem~\ref{Thm4:main_result}), established through the analysis of an illustrative example, shows that {\em orbital stabilization of the motion can be achieved for the nonlinear system\/} even under these non-standard conditions and despite the structural properties of the transverse linearization.
In addition to proving orbital stability of the motion, the result reveals links between distinct areas of nonlinear control theory and analytical mechanics:
\begin{itemize}
\item
on the one hand, the example illustrates the classical challenge of analyzing the dynamics of a nonholonomic system in a neighborhood of the motion by means of linearization \cite{NF:1972};
\item
on the other hand, the example shows how a center manifold of the dynamics can be characterized in a neighborhood of the motion \cite{Kuznetsov}.
\end{itemize}
The paper is organized as follows.
This section continues with a description of the controlled mechanical system under study, namely, a dynamic extension of the Dubins car model, examines several properties of its dynamics in a neighborhood of one of its basic maneuvers, which forms part of the classical Dubins path \cite{Dubins,Laumond}, and formulates the problem under consideration.
The main contributions of the paper are presented in Sections~\ref{Sec:transverse_coordinates_and_linearization} and~\ref{Sec:Main_Result}.
Results of numerical simulations are reported in Section~\ref{Sec:Result_CS}.
Concluding remarks are given in Section~\ref{Sec:Concluding_Remarks}.
For readability and completeness, proofs of the nontrivial statements are collected in the Appendix.
 

\subsection{Dynamics of Dubins car model}
As an illustrative example, we consider the classical nonholonomic mechanical system commonly known as the Dubins car \cite{Bloch,Kolmanovsky,DeLuca}.
It represents a simplified model of a mobile platform in which
\begin{itemize}
    \item the position of the center of mass in the plane is described by two coordinates, $(x,y)\in\mathbb{R}^2$;
    \item the orientation of the platform is described by the heading angle $\theta\in\mathbb{S}^1$.
\end{itemize}
Then the model partly reformulates the non-sliding assumption imposed on the robot motion through the differential relations
\begin{equation}\label{kinematics_for_XY}
    \dot x(t) = v(t)\cos\bigl(\theta(t)\bigr), \qquad
    \dot y(t) = v(t)\sin\bigl(\theta(t)\bigr),
\end{equation}
where $v(t)$ denotes the longitudinal velocity of the center of mass of the robot.

Another part of the model describes the angular velocity of the mobile platform.
For planar motion, the direction of the angular velocity is fixed and orthogonal to the plane, while its magnitude is given by the rate of change of the heading angle, that is,
\begin{equation}\label{kinematics_for_Theta}
    \dot{\theta}(t) = \omega(t).
\end{equation}
The classical Dubins car model is defined by \eqref{kinematics_for_XY}--\eqref{kinematics_for_Theta} together with the assumption that the longitudinal velocity of the center of mass is constant, that is,
\begin{equation}\label{kinematics_constraint_for_V}
    v(t) = v_0, \qquad \forall t,
\end{equation}
where $v_0$ is a constant.

To derive a simple dynamic model for the coordinates
\[
q \triangleq \bigl[\theta; x; y\bigr]
\]
of the mechanical system subject to \eqref{kinematics_for_XY}--\eqref{kinematics_constraint_for_V}, we use the following facts:
\begin{enumerate}
    \item the non-sliding condition \eqref{kinematics_for_XY} implies the identity
    \begin{equation}\label{Nonholonomic_constraint}
        \dot y(t)\cos\bigl(\theta(t)\bigr) - \dot x(t)\sin\bigl(\theta(t)\bigr) = 0,
        \qquad \forall t,
    \end{equation}
    which holds for every motion of the system;

    \item if only the constraint \eqref{Nonholonomic_constraint} is taken into account and $\theta(t)$ is treated as the only directly controlled variable, then the dynamics of the three coordinates can be described by the Newton--Euler equations\footnote{Here any additional constraint, if present and deduced from \eqref{kinematics_for_XY} and \eqref{kinematics_constraint_for_V}, is temporarily ignored.} \cite{Bloch,Kolmanovsky}
    \begin{equation}\label{NE_eqns_with_R}
        \left\{
        \begin{array}{rcl}
               m\ddot x(t) &=& R_x\bigl(q(t),\dot q(t)\bigr),\\[1mm]
               m\ddot y(t) &=& R_y\bigl(q(t),\dot q(t)\bigr),\\[1mm]
               J\ddot \theta(t) &=& R_{\theta}\bigl(q(t),\dot q(t)\bigr) + u(t),
        \end{array}
        \right.
    \end{equation}
    where $m$ is the mass of the mobile platform, $J$ is its moment of inertia about the axis orthogonal to the plane, $R_x(\cdot)$, $R_y(\cdot)$, and $R_{\theta}(\cdot)$ are the components of the generalized constraint force associated with \eqref{Nonholonomic_constraint}, and $u(t)$ is the control input (torque).
\end{enumerate}

As is well known, \eqref{Nonholonomic_constraint} and \eqref{NE_eqns_with_R} are sufficient to derive the equations of motion of the constrained system in closed form.
\begin{lemma}\label{Lemma_on_constraint_force}
The dynamics of the constrained system \eqref{Nonholonomic_constraint}, \eqref{NE_eqns_with_R} is described by
\begin{equation}\label{Constrained_dynamics}
\left\{
\begin{array}{rcl}
\ddot x &=& -\bigl[\dot y\sin(\theta) + \dot x\cos(\theta)\bigr]\dot\theta\sin(\theta),\\[1mm]
\ddot y &=& \phantom{-}\bigl[\dot y\sin(\theta) + \dot x\cos(\theta)\bigr]\dot\theta\cos(\theta),\\[1mm]
\ddot\theta &=& \dfrac{1}{J}u.
\end{array}
\right.
\end{equation}
\end{lemma}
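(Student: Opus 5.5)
We will use the Lagrange--d'Alembert principle for the linear (Pfaffian) constraint \eqref{Nonholonomic_constraint}. Write the constraint as $A(q)\dot q=0$ with $A(q)=\bigl[0,\,-\sin\theta,\,\cos\theta\bigr]$ in the coordinates $q=[\theta;x;y]$. An ideal constraint force is orthogonal to every admissible virtual displacement, so the generalized constraint force is $R=A(q)\trn\lambda$ for a scalar multiplier $\lambda$. This gives $R_\theta=0$, $R_x=-\lambda\sin\theta$ and $R_y=\lambda\cos\theta$. Substituting into \eqref{NE_eqns_with_R} immediately yields the third equation $\ddot\theta=u/J$, since the constraint does not involve $\dot\theta$.

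To find $\lambda$, differentiate \eqref{Nonholonomic_constraint} along motions:
\[
\ddot y\cos\theta-\ddot x\sin\theta-\bigl(\dot y\sin\theta+\dot x\cos\theta\bigr)\dot\theta=0 .
\]
Substituting $m\ddot x=-\lambda\sin\theta$ and $m\ddot y=\lambda\cos\theta$ gives $\lambda/m=\bigl(\dot y\sin\theta+\dot x\cos\theta\bigr)\dot\theta$, because $\cos^2\theta+\sin^2\theta=1$. Plugging this back produces exactly the first two lines of \eqref{Constrained_dynamics}. As a consistency check, the trajectory of \eqref{Constrained_dynamics} preserves $\dot y\cos\theta-\dot x\sin\theta$: its derivative vanishes identically by the same computation. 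Hence solutions starting on the constraint remain on it, so the reduced system is equivalent to \eqref{Nonholonomic_constraint}, \eqref{NE_eqns_with_R}.

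The main obstacle is conceptual rather than computational: the paper says $R$ is ``the generalized constraint force associated with \eqref{Nonholonomic_constraint}'' without stating that it is ideal (workless). I would make this explicit, since it is the standard assumption for nonholonomic constraints in \cite{Bloch}. The derivation relies on two consequences of it. First, $R$ lies in the span of $A(q)\trn$, which is what forces $R_\theta=0$. Second, the multiplier is uniquely determined because $A M^{-1}A\trn=1/m\neq0$, where $M=\mathrm{diag}(J,m,m)$ is the mass matrix in the coordinates $q=[\theta;x;y]$. I would also note that $m$ cancels, which explains why it does not appear in \eqref{Constrained_dynamics}.
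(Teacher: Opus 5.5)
Your proof is correct and follows the paper's argument essentially step by step. Both use ideality to write $R=\lambda\,[0,\,-\sin\theta,\,\cos\theta]\trn$, differentiate the constraint to get $\lambda=m(\dot y\sin\theta+\dot x\cos\theta)\dot\theta$, and substitute back. Your additions are sound and slightly sharpen the paper's justification: the check that $\dot y\cos\theta-\dot x\sin\theta$ is preserved, and phrasing ideality as orthogonality to all admissible virtual displacements, which is what actually forces $R\in\mathrm{span}\,A(q)\trn$, rather than the paper's weaker ``zero power along motions''.
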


In deriving the dynamics \eqref{Constrained_dynamics} of the auxiliary system \eqref{Nonholonomic_constraint}, \eqref{NE_eqns_with_R}, we temporarily ignored both \eqref{kinematics_for_XY} and \eqref{kinematics_constraint_for_V}.
Let us now verify that every motion of the Dubins car model is a solution of \eqref{Constrained_dynamics}.
\begin{lemma}\label{Lem:V=const}
Let $\bigl[\theta(t);x(t);y(t)\bigr]$ be a forced solution of \eqref{Constrained_dynamics}.
Then the magnitude of the velocity vector $\bigl[\dot x(t);\dot y(t)\bigr]$ is constant.
\end{lemma}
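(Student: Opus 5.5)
The plan is to show directly that the squared speed $V(t)\triangleq \dot x(t)^2+\dot y(t)^2$ has zero time derivative along every solution of \eqref{Constrained_dynamics}, whatever the control $u(\cdot)$. Differentiating gives $\dot V = 2\bigl(\dot x\ddot x + \dot y\ddot y\bigr)$. The third equation of \eqref{Constrained_dynamics} plays no role in this expression, so the torque $u$ never enters. This is why the statement holds for forced solutions, not just free ones.

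The key step is to substitute the first two equations of \eqref{Constrained_dynamics}. Write $s(t)\triangleq \dot y\sin\theta + \dot x\cos\theta$, the longitudinal velocity. Then
\begin{equation*}
\dot x\ddot x + \dot y\ddot y = s\,\dot\theta\,\bigl(-\dot x\sin\theta + \dot y\cos\theta\bigr).
\end{equation*}
The bracket on the right is exactly the left-hand side of the nonholonomic constraint \eqref{Nonholonomic_constraint}. So I must argue that this lateral velocity $w(t)\triangleq \dot y\cos\theta - \dot x\sin\theta$ vanishes along the solution.

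The only point needing care is whether \eqref{Constrained_dynamics} preserves \eqref{Nonholonomic_constraint} or merely assumes it. I will take the solution as one of the constrained system, so that \eqref{Nonholonomic_constraint} holds by construction, as in Lemma~\ref{Lemma_on_constraint_force}. To be safe, I will also check invariance directly. Computing $\dot w = \ddot y\cos\theta - \ddot x\sin\theta - \dot\theta\bigl(\dot y\sin\theta + \dot x\cos\theta\bigr)$ and substituting the first two equations gives $s\dot\theta(\cos^2\theta+\sin^2\theta) - \dot\theta s = 0$. Hence $w$ is constant along any solution, and it is zero whenever the initial data satisfy the constraint.

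With $w\equiv 0$, we get $\dot V\equiv 0$, so $\sqrt{\dot x^2+\dot y^2}$ is constant. As a consistency check, there is a stronger alternative: differentiating $s$ directly gives $\dot s = \ddot x\cos\theta + \ddot y\sin\theta + \dot\theta w = \dot\theta w$. So $s$ is constant under the constraint, and $V = s^2 + w^2$ is constant in any case, since $w$ is conserved as well. This last identity $V=s^2+w^2$ makes the claim hold even without invoking the constraint, and I would present it as the cleanest route.
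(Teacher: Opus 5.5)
Your main argument is correct and is essentially the paper's. The paper also computes $\frac{d}{dt}(\dot x^2+\dot y^2)=2(\dot x\ddot x+\dot y\ddot y)$, writes it as a multiple of the multiplier $\lambda$ (which equals $m\,s\dot\theta$ in your notation) times $\dot y\cos\theta-\dot x\sin\theta$, and kills the bracket using \eqref{Nonholonomic_constraint}. Your extra computation $\dot w=0$ improves on this. The paper simply assumes the constraint holds along the motion, whereas you show that $w$ is a first integral of \eqref{Constrained_dynamics}, so it is enough to impose the constraint at the initial time.

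Your closing ``cleanest route'', however, is false and should not be presented as the proof. From $V=s^2+w^2$, $\dot w=0$ and $\dot s=\dot\theta w$ you only get $\dot V=2s\,\dot s=2s\dot\theta w$, which is exactly the expression you started with. Conservation of $w$ does not make $s$ constant when $w\neq 0$.

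Here is a concrete counterexample. Take $u\equiv 0$, so $\dot\theta\equiv c\neq 0$, and initial data with $w(0)=w_0\neq 0$. Then $s(t)=s(0)+c\,w_0\,t$, and $\dot x^2+\dot y^2=\bigl(s(0)+c\,w_0\,t\bigr)^2+w_0^2$ grows without bound.

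So the constraint \eqref{Nonholonomic_constraint}, equivalently $w(0)=0$, is indispensable. The lemma is false for arbitrary solutions of \eqref{Constrained_dynamics} and holds only for those that represent motions of the car, which is also how the paper's proof reads it. Keep your first argument, together with the invariance of $w$, as the proof, and drop the ``even without invoking the constraint'' claim.
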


\begin{lemma}\label{Lem:solutions_of_(1)_are_solutions_of_(8)}
Any set of $C^2$ functions $\bigl[\theta(t);x(t);y(t)\bigr]$ satisfying \eqref{kinematics_for_XY}--\eqref{kinematics_constraint_for_V} is a solution of the nonlinear system \eqref{Constrained_dynamics} for some control input $u(t)$.
\end{lemma}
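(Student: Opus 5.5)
The plan is to verify directly that the first two equations of \eqref{Constrained_dynamics} hold identically along any such triple, and then to \emph{define} the control through the third equation. Let $[\theta(t);x(t);y(t)]$ be $C^2$ and satisfy \eqref{kinematics_for_XY}--\eqref{kinematics_constraint_for_V}. Set
\[
u(t)\triangleq J\ddot\theta(t).
\]
This is a well-defined continuous function because $\theta\in C^2$. With this choice the third equation of \eqref{Constrained_dynamics} holds trivially, so only the two translational equations remain to be checked.

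For these, first evaluate the bracket appearing in \eqref{Constrained_dynamics}. Substituting $\dot x=v_0\cos\theta$ and $\dot y=v_0\sin\theta$ from \eqref{kinematics_for_XY} and \eqref{kinematics_constraint_for_V} gives
\[
\dot y\sin\theta+\dot x\cos\theta=v_0\bigl(\sin^2\theta+\cos^2\theta\bigr)=v_0 .
\]
Hence the right-hand sides of the first two equations reduce to $-v_0\dot\theta\sin\theta$ and $v_0\dot\theta\cos\theta$.

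Next compute the left-hand sides. Since $v(t)=v_0$ is constant and $\theta\in C^2$, the expressions $\dot x=v_0\cos\theta$ and $\dot y=v_0\sin\theta$ are $C^1$, and differentiating them gives
\[
\ddot x=-v_0\dot\theta\sin\theta,\qquad \ddot y=v_0\dot\theta\cos\theta .
\]
These coincide with the reduced right-hand sides, so both translational equations are satisfied. The relation $\dot\theta=\omega$ from \eqref{kinematics_for_Theta} only identifies $\omega$ with $\dot\theta$ and imposes no further condition.

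There is no real obstacle here. The only points needing care are that the $C^2$ assumption makes $u=J\ddot\theta$ meaningful, and that the constancy of $v$ from \eqref{kinematics_constraint_for_V} is used when differentiating $\dot x$ and $\dot y$. Without it, extra terms $\dot v\cos\theta$ and $\dot v\sin\theta$ would appear. Their absence is consistent with Lemma~\ref{Lem:V=const}, which says the speed along solutions of \eqref{Constrained_dynamics} is constant anyway.
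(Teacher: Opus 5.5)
Your proof is correct and matches the paper's argument: differentiate $\dot x=v_0\cos\theta$ and $\dot y=v_0\sin\theta$ using the constancy of $v$, identify $v_0$ with the bracket $\dot y\sin\theta+\dot x\cos\theta$, and define $u(t)=J\ddot\theta(t)$. The only difference is order: the paper differentiates first and then substitutes $v=\dot x\cos\theta+\dot y\sin\theta$, while you simplify the bracket first.
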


It is worth noting that \eqref{Constrained_dynamics} is an overparameterized model for the motion of the Dubins car.
For example, if an initial state
\(\bigl[\theta(t_0);x(t_0);y(t_0);\dot\theta(t_0);\dot x(t_0);\dot y(t_0)\bigr]\)
does not satisfy the constraint \eqref{Nonholonomic_constraint}, then the corresponding solution of \eqref{Constrained_dynamics} starting from this state at $t=t_0$ does not represent a feasible motion of the Dubins car, regardless of the control input $u(t)$.


\subsection{Parametrization of a Dubins car motion primitive and problem formulation}

In this paper, we consider one of the standard maneuvers of the Dubins car: a motion in which the coordinates $x(t)$ and $y(t)$ remain on a circle of radius $r_c>0$, while the heading angle $\theta(t)$ is tangent to the curve. 
As expected, it admits a compact description.

\begin{lemma}\label{Lemma_on_motions_on_circle}
Any feasible motion of the controlled mechanical system \eqref{Constrained_dynamics} on a circle of constant radius $r_c$ centered at the origin admits the nested\footnote{In this representation, time explicitly defines the evolution of one coordinate, often called a motion generator, while the remaining coordinates vary as functions of the motion generator.} representation
\begin{equation}\label{Lem2:XYTh_c}
    \theta_c(t) = \omega_0 t + \theta_0, \qquad
    \left\{
    \begin{array}{rcl}
        x_c(t) &=& \phantom{-}r_c\sin\bigl(\theta_c(t)\bigr),\\[1mm]
        y_c(t) &=& -r_c\cos\bigl(\theta_c(t)\bigr).
    \end{array}
    \right.
\end{equation}
Here $\theta_0$ and $\omega_0$ are constants. The control input generating the motion \eqref{Lem2:XYTh_c} is zero, that is, $u_c(t) \equiv 0$.
\end{lemma}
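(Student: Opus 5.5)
Our approach is to write the circular constraint in polar form and then use the nonholonomic constraint \eqref{Nonholonomic_constraint} to pin down the heading angle. We then use Lemma~\ref{Lem:V=const} and the last equation of \eqref{Constrained_dynamics} to show that $\theta$ is affine in time and that $u\equiv 0$.

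\emph{Step 1 (kinematic form of the motion).} Let $\bigl[\theta(t);x(t);y(t)\bigr]$ be a feasible motion, so that it solves \eqref{Constrained_dynamics} and satisfies \eqref{Nonholonomic_constraint}, with $x^2+y^2\equiv r_c^2$. Write $x=r_c\cos\varphi$ and $y=r_c\sin\varphi$ with $\varphi(t)$ a $C^2$ function, obtained by continuous lifting. Substituting into \eqref{Nonholonomic_constraint} gives
\[
r_c\dot\varphi\,\bigl[\cos\varphi\cos\theta+\sin\varphi\sin\theta\bigr]=r_c\dot\varphi\cos(\theta-\varphi)=0 .
\]
By Lemma~\ref{Lem:V=const}, the speed $|\dot x^2+\dot y^2|^{1/2}=r_c|\dot\varphi|$ is constant. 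If this constant is zero, the motion is an equilibrium point on the circle, which we treat as the degenerate case $\omega_0=0$. Otherwise $\dot\varphi\neq 0$ for all $t$, so $\cos(\theta-\varphi)\equiv 0$. By continuity this gives $\varphi=\theta-\pi/2+k\pi$ for a fixed integer $k$, which is exactly the statement that the heading is tangent to the circle.

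\emph{Step 2 (uniform rotation and the sign convention).} Since $\dot\varphi=\dot\theta$, the constancy of the speed gives $|\dot\theta|\equiv\mathrm{const}$. Because $\dot\theta$ is continuous and never vanishes, $\dot\theta\equiv\omega_0$ is constant, and hence $\theta_c(t)=\omega_0t+\theta_0$. For $k$ even we get $x=r_c\sin\theta$ and $y=-r_c\cos\theta$, which is \eqref{Lem2:XYTh_c}. For $k$ odd the signs flip; this corresponds to traversing the circle with $v=-r_c\omega_0$ instead of $v=r_c\omega_0$. We will absorb this case into \eqref{Lem2:XYTh_c} by replacing $\theta_0$ with $\theta_0+\pi$. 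This is consistent because only $\dot\theta$ enters the dynamics up to the heading convention, i.e.\ backward driving is identified with reversing the heading. This bookkeeping of $k$, together with the degenerate $\omega_0=0$ case, is the step that needs the most care, though it is not deep.

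\emph{Step 3 (the motion is a solution and $u_c\equiv 0$).} The third equation of \eqref{Constrained_dynamics} gives $u=J\ddot\theta_c=0$. For completeness we check the first two equations. Here $\dot x_c=r_c\omega_0\cos\theta_c$ and $\dot y_c=r_c\omega_0\sin\theta_c$, so $\dot y_c\sin\theta_c+\dot x_c\cos\theta_c=r_c\omega_0$. The right-hand sides are then $-r_c\omega_0^2\sin\theta_c$ and $r_c\omega_0^2\cos\theta_c$, which equal $\ddot x_c$ and $\ddot y_c$ respectively. Alternatively, this verification follows directly from Lemma~\ref{Lem:solutions_of_(1)_are_solutions_of_(8)}. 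This completes the proposed proof.
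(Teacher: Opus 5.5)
\emph{Gap in Steps 1--2.} Your main line is correct for the nondegenerate, $k$ even case: the polar lift, $r_c\dot\varphi\cos(\theta-\varphi)=0$ from \eqref{Nonholonomic_constraint}, constant speed from Lemma~\ref{Lem:V=const}, hence tangency and a constant $\dot\theta$. It also differs from the paper. The paper simply posits $x_c=r_c\cos\psi$, $y_c=r_c\sin\psi$, $\theta_c=\psi+\frac{\pi}{2}$. It then projects the first two equations of \eqref{Constrained_dynamics} onto the heading, $\ddot x\cos\theta+\ddot y\sin\theta=0$, which gives $r_c\ddot\psi=0$ with no case split. Deriving tangency instead of assuming it is a real improvement. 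But it exposes two cases, and you dispose of both invalidly.

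\emph{(i) The case $k$ odd.} Replacing $\theta_0$ by $\theta_0+\pi$ changes $\theta_c(t)$, which is the heading coordinate of the motion itself. So you obtain a representation of a different motion, $[\theta+\pi;x;y]$. That motion is indeed a solution, but the reason is that $\theta\mapsto\theta+\pi$ is a symmetry of \eqref{Constrained_dynamics} and \eqref{Nonholonomic_constraint}: $\sin\theta$ and $\cos\theta$ enter in pairs. It is not because ``only $\dot\theta$ enters the dynamics.'' The original motion still violates \eqref{Lem2:XYTh_c}. For example, with $r_c=1$, the triple $\theta(t)=-t$, $x(t)=\sin t$, $y(t)=\cos t$ is a feasible forward ($v_0=1$) clockwise motion with $u\equiv 0$, yet $x\neq r_c\sin\theta$.

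\emph{(ii) The zero-speed case.} If $\dot x=\dot y\equiv 0$, the constraint and the first two equations of \eqref{Constrained_dynamics} hold for every $C^2$ heading $\theta(t)$, with $u=J\ddot\theta$. So none of the following follows: $\theta$ affine, $u\equiv 0$, or $x=r_c\sin\theta$ (take the car at rest at $(r_c,0)$ with $\theta\equiv 0$). This is not ``the case $\omega_0=0$.''

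Both cases are genuine exceptions to the statement as literally written, so no bookkeeping can absorb them. The fix is to make the orientation convention part of the hypothesis, as the paper's proof does implicitly when it asserts the form $\theta_c=\psi+\frac{\pi}{2}$. This means the heading is tangent and points toward increasing polar angle, which is $k$ even in your notation. Under that hypothesis, the zero-speed case gives $\dot\theta=\dot\varphi=0$, which really is the $\omega_0=0$ instance. Your Steps 1--3 then form a complete proof. Alternatively, assume $v_0\neq 0$ and state the conclusion modulo the symmetry $\theta\mapsto\theta+\pi$.
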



The above parametrization of circular motion primitives and the representation of the system dynamics allow us to formulate the problem under study.
Given a forced motion $q^*(t)\triangleq \bigl[\theta^*(t); x^*(t); y^*(t)\bigr]$ of the Dubins car model whose center of mass remains on a circle of radius $r_c$ and whose heading angle remains tangent to the curve\footnote{According to Lemma~\ref{Lemma_on_motions_on_circle}, such a motion is parameterized by two scalars, $\theta_0^*$ and $\omega_0^*$, where $\theta_0^*$ determines the initial position of the Dubins car on the circle and $\omega_0^*$ defines the constant angular velocity; see \eqref{Lem2:XYTh_c}.} for all $t$,
the tasks are as follows:
\begin{description}
    \item[A:] introduce a set of functions of the system state, called transverse coordinates, that characterize deviations of a perturbed motion from the nominal one in directions transverse to $q^*(t)$;
    \item[B:] linearize the dynamics of the transverse coordinates along the nominal motion $q^*(t)$;
    \item[C:] study the properties of the transverse linearization and of the transverse coordinates defined for the Dubins car dynamics in a neighborhood of the nominal motion $q^*(t)$; and
    \item[D:] propose constructive methods for synthesizing feedback controllers aimed at stabilization of the nominal motion $q^*(t)$.
\end{description}


\section{Motion-dependent transverse coordinates and their variational dynamics}\label{Sec:transverse_coordinates_and_linearization}

Given constants $\theta_0^*$ and $\om_0^*$ defining the nominal motion $q^*(t)$ of the Dubins car along the circle of radius $r_c$, see \eqref{Lem2:XYTh_c}, consider the following five scalar functions of the state vector
$X \triangleq \bigl[\theta; x; y; \dot\theta; \dot x; \dot y\bigr]:$
\begin{equation}\label{Transverse_coordinates}
\begin{array}{rcl@{\qquad}rcl}
x_{1_\bot}(X) &\triangleq& x - r_c\sin(\theta),
&
x_{2_\bot}(X) &\triangleq& y + r_c\cos(\theta), \\[1mm]
x_{3_\bot}(X) &\triangleq& \dot x - r_c\cos(\theta)\dot\theta,
&
x_{4_\bot}(X) &\triangleq& \dot y - r_c\sin(\theta)\dot\theta,
\qquad
x_{5_\bot}(X) \triangleq \dot\theta - \om_0^*.
\end{array}
\end{equation}
Taking into account the parametrization of the nominal motion \eqref{Lem2:XYTh_c}, one readily verifies that each of the functions defined in \eqref{Transverse_coordinates} vanishes on $q^*(t)$, that is when $X=X^*\triangleq \bigl[\theta^*; x^*; y^*; \dot\theta^*; \dot x^*; \dot y^*\bigr]$.
Hence, they are natural candidates for transverse coordinates associated with $q^*(t)$.
As shown below, these five functions are functionally independent and therefore provide one possible choice for representing transverse directions to the nominal motion.
For future reference, we refer to them as {\em the standard set of transverse coordinates} for the motion \eqref{Lem2:XYTh_c}.

\begin{lemma}\label{Lem3:Jacobian}
Given the nominal motion $q^*(t)$ of the controlled mechanical system \eqref{Constrained_dynamics} along the circle of radius $r_c$, parameterized by $\theta_0=\theta_0^*$ and $\om_0=\om_0^*$, see \eqref{Lem2:XYTh_c}, let
$X_\bot \triangleq \bigl[x_{1_\bot}; \dots; x_{5_\bot}\bigr]$
be the vector function with components defined in \eqref{Transverse_coordinates}.
Then the Jacobian $J_\bot(X)$ of the mapping from the state
$X=\bigl[\theta; x; y; \dot\theta; \dot x; \dot y\bigr]$
to $X_\bot(X)$ is the $5\times 6$ matrix
\begin{equation}\label{Lem3:J}
J_\bot(X)=
\left[
\begin{array}{cccccc}
 \phantom{-}j_1(X)              & 1 & 0 & 0      & 0 & 0 \\
 \phantom{-}j_2(X)              & 0 & 1 & 0      & 0 & 0 \\
 -j_2(X)\dot\theta              & 0 & 0 & j_1(X) & 1 & 0 \\
 \phantom{-}j_1(X)\dot\theta    & 0 & 0 & j_2(X) & 0 & 1 \\
 0                              & 0 & 0 & 1      & 0 & 0
\end{array}
\right],
\end{equation}
where, for brevity,\quad
$j_1(X) \triangleq -r_c\cos(\theta),$\quad
$j_2(X) \triangleq -r_c\sin(\theta).$\quad
Moreover, the rank of $J_\bot(X)$ is constant and equal to $5$ for every state vector $X$.
\end{lemma}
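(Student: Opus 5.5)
The plan is to compute the Jacobian entry by entry, directly from the definitions in \eqref{Transverse_coordinates}. We order the state as $X=[\theta;x;y;\dot\theta;\dot x;\dot y]$ and differentiate each $x_{i_\bot}$ with respect to these six variables.

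For $x_{1_\bot}=x-r_c\sin\theta$, the partial derivative in $\theta$ is $-r_c\cos\theta=j_1(X)$, the partial in $x$ is $1$, and all other partials vanish. The same pattern holds for $x_{2_\bot}=y+r_c\cos\theta$: the $\theta$-partial is $-r_c\sin\theta=j_2(X)$ and the $y$-partial is $1$. For $x_{3_\bot}=\dot x-r_c\cos(\theta)\dot\theta$, the $\theta$-partial is $r_c\sin(\theta)\dot\theta=-j_2(X)\dot\theta$, the $\dot\theta$-partial is $-r_c\cos\theta=j_1(X)$, and the $\dot x$-partial is $1$. Similarly, for $x_{4_\bot}=\dot y-r_c\sin(\theta)\dot\theta$ we get a $\theta$-partial of $-r_c\cos(\theta)\dot\theta=j_1(X)\dot\theta$, a $\dot\theta$-partial of $j_2(X)$, and a $\dot y$-partial of $1$. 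Finally, $x_{5_\bot}$ has only the $\dot\theta$-partial, equal to $1$. These entries reproduce \eqref{Lem3:J} exactly.

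For the rank claim, we exhibit a $5\times5$ minor that is nonzero for every $X$. We keep columns $2,3,4,5,6$ (those of $x,y,\dot\theta,\dot x,\dot y$), deleting the $\theta$-column. Reordering these columns as $(x,y,\dot x,\dot y,\dot\theta)$ turns the submatrix into an upper unitriangular matrix. Its first four rows carry unit entries in the $x,y,\dot x,\dot y$ positions, and the only other entries they contain sit in the final $\dot\theta$-column, namely $0,0,j_1,j_2$. The fifth row is $[0,0,0,0,1]$. The determinant of this submatrix is therefore $\pm1\neq0$, independent of $X$.

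This gives rank at least $5$, and since the matrix has only $5$ rows the rank is exactly $5$ for every state. There is no real obstacle here. The only care needed is in the sign bookkeeping in the $\theta$-derivatives of rows 3 and 4, that is, checking that $r_c\sin(\theta)\dot\theta$ equals $-j_2\dot\theta$ and $-r_c\cos(\theta)\dot\theta$ equals $j_1\dot\theta$.
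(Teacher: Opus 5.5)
Your proposal is correct and matches the paper's proof: entrywise differentiation of \eqref{Transverse_coordinates}, then the $5\times5$ minor on the $x,y,\dot\theta,\dot x,\dot y$ columns, whose determinant is independent of $X$. Your reordering of those columns into upper unitriangular form makes the paper's bare assertion that the minor has determinant $1$ transparent. The sign is in fact $+1$, because moving the $\dot\theta$ column to the end is an even permutation.
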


Any non-singular transformation of the vector function $X_\bot(X)$ on the nominal motion defines a new set of five transverse coordinates for $q^*(t)$.

Since each of the five functions $x_{i_\bot}(X)$, $i=1,\dots,5$, defined in \eqref{Transverse_coordinates}, vanishes on the nominal motion $q^*(t)$, and since their variations in a neighborhood of the motion are linearly independent\footnote{As shown in Lemma~\ref{Lem3:Jacobian}, the Jacobian \eqref{Lem3:J} has full rank.}, these five functions form a full set of transverse coordinates for the motion.
As a natural next step, we compute the linearization of the transverse coordinates \eqref{Transverse_coordinates} along the nominal motion.
The following statement gives the result.

\begin{theorem}\label{Thm1:TL}
The linearization of the transverse coordinates \eqref{Transverse_coordinates} along the nominal motion
$q^*(t)=\bigl[\theta^*(t);x^*(t);y^*(t)\bigr],$
parameterized by the constants $\theta_0^*$ and $\om_0^*$, see \eqref{Lem2:XYTh_c}, of the nonlinear control system \eqref{Constrained_dynamics}, has the form
\begin{equation}\label{Transverse_linearization}
    \hbox{$\frac{d}{dt}$}{\de X}_\bot = A(t)\de X_\bot + B(t)\de u,
\end{equation}
\[\hbox{where}\qquad
\de X_\bot \triangleq
\left[\!\!
\begin{array}{c}
\de x_{1_\bot}\\
\de x_{2_\bot}\\
\de x_{3_\bot}\\
\de x_{4_\bot}\\
\de x_{5_\bot}
\end{array}
\!\!\right],
\qquad
A(t) =
\left[\!\!
\begin{array}{ccccc}
0 & 0 & 1 & 0 & 0\\
0 & 0 & 0 & 1 & 0\\
0 & 0 & a_{33}(t) & a_{34}(t) & 0\\
0 & 0 & a_{43}(t) & a_{44}(t) & 0\\
0 & 0 & 0 & 0 & 0
\end{array}
\!\!\right],
\qquad
B(t) =
\left[\!\!
\begin{array}{c}
0\\
0\\
b_3(t)\\
b_4(t)\\
b_5(t)
\end{array}
\!\!\right].
\]
The nontrivial entries of $A(t)$ and $B(t)$ are $\left(2\pi/\om_0^*\right)$-periodic functions given by
\begin{equation*}
\begin{array}{rcl@{\qquad}rcl}
a_{33}(t) &=& -\om_0^*\sin\bigl(\om_0^* t+\theta_0^*\bigr)\cos\bigl(\om_0^* t+\theta_0^*\bigr),
&
a_{34}(t) &=& -\om_0^*\sin^2\bigl(\om_0^* t+\theta_0^*\bigr), \\[1mm]
a_{43}(t) &=& \phantom{-}\om_0^*\cos^2\bigl(\om_0^* t+\theta_0^*\bigr),
&
a_{44}(t) &=& \phantom{-}\om_0^*\sin\bigl(\om_0^* t+\theta_0^*\bigr)\cos\bigl(\om_0^* t+\theta_0^*\bigr), \\[1mm]
b_3(t) &=& -\frac{r_c}{J}\cos\bigl(\om_0^* t+\theta_0^*\bigr),
&
b_4(t) &=& -\frac{r_c}{J}\sin\bigl(\om_0^* t+\theta_0^*\bigr),
\qquad
b_5(t) = \frac{1}{J}.
\end{array}
\end{equation*}
Here $\de x_{i_\bot}(t)$, $i=1,\dots,5$, and $\de u(t)$ denote the first-order variations of the corresponding quantities along the nominal motion.
\end{theorem}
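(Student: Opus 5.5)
I will compute the time derivatives of the five transverse coordinates \eqref{Transverse_coordinates} along arbitrary solutions of \eqref{Constrained_dynamics}, express the right-hand sides as functions of $X$ and $u$, and then linearize these expressions at $X=X^*(t)$, $u=u_c\equiv 0$ (Lemma~\ref{Lemma_on_motions_on_circle}). Because the Jacobian $J_\bot$ in Lemma~\ref{Lem3:Jacobian} has full rank, the first-order variations $\de X_\bot$ are related to $\de X$ by $\de X_\bot = J_\bot(X^*(t))\,\de X$. One direction of $\de X$, the one tangent to the orbit, is annihilated by $J_\bot(X^*(t))$. The goal is therefore to show that the linearized right-hand sides depend on $\de X$ only through $\de X_\bot$, with the stated coefficients.

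First, the kinematic rows. We have $\dot x_{1_\bot} = \dot x - r_c\cos(\theta)\dot\theta = x_{3_\bot}$ and $\dot x_{2_\bot} = \dot y - r_c\sin(\theta)\dot\theta = x_{4_\bot}$ exactly. We also have $\dot x_{5_\bot}=\ddot\theta = u/J$ exactly. This already gives rows 1, 2 and 5 of $A$ and $B$, including $b_5=1/J$.

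Next, I differentiate $x_{3_\bot}$ and $x_{4_\bot}$ using \eqref{Constrained_dynamics}:
\[
\dot x_{3_\bot} = -S\dot\theta\sin\theta + r_c\sin\theta\,\dot\theta^2 - \tfrac{r_c}{J}\cos\theta\, u,
\qquad
\dot x_{4_\bot} = S\dot\theta\cos\theta - r_c\cos\theta\,\dot\theta^2 - \tfrac{r_c}{J}\sin\theta\, u,
\]
where $S\triangleq \dot y\sin\theta+\dot x\cos\theta$. The key step is to substitute $\dot x = x_{3_\bot}+r_c\cos\theta\,\dot\theta$ and $\dot y = x_{4_\bot}+r_c\sin\theta\,\dot\theta$. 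This gives $S = x_{3_\bot}\cos\theta + x_{4_\bot}\sin\theta + r_c\dot\theta$, so the $r_c\dot\theta^2$ terms cancel exactly. We are left with
\[
\dot x_{3_\bot} = -\dot\theta\sin\theta\,(x_{3_\bot}\cos\theta + x_{4_\bot}\sin\theta) - \tfrac{r_c}{J}\cos\theta\,u,
\qquad
\dot x_{4_\bot} = \dot\theta\cos\theta\,(x_{3_\bot}\cos\theta + x_{4_\bot}\sin\theta) - \tfrac{r_c}{J}\sin\theta\,u.
\]
These are bilinear in $(x_{3_\bot},x_{4_\bot})$ and $u$, with coefficients depending on $(\theta,\dot\theta)$. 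Linearizing at $x_{3_\bot}=x_{4_\bot}=0$ and $u=0$ kills the variations of $\theta$ and $\dot\theta$, since they multiply quantities that vanish on the nominal motion. We then set $\theta=\theta^*(t)=\om_0^*t+\theta_0^*$ and $\dot\theta=\om_0^*$ in the coefficients. This yields exactly $a_{33},a_{34},a_{43},a_{44},b_3,b_4$ and the zero entries in rows 3 and 4. Periodicity with period $2\pi/\om_0^*$ is immediate.

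The main obstacle is conceptual rather than computational. One must justify that the linearized dynamics close in $\de X_\bot$, i.e.\ that the variation $\de\theta$, which is not itself a transverse coordinate, never appears. The cancellation above handles this: after the substitution, every occurrence of $\theta$ sits in a coefficient multiplying a vanishing transverse quantity or $u$. I would state this explicitly as the reason the linearization is well defined.
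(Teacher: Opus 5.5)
Your proposal is correct. Rows 1, 2 and 5 are handled as in the paper, but for rows 3 and 4 you take a different route.

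\textbf{The paper's route.} The paper takes $f_3$ and $f_4$, the time derivatives of $x_{3_\bot}$ and $x_{4_\bot}$ written as functions of $X$ and $u$, and Taylor-expands them in the independent variations $\de\theta,\de x,\de y,\de\dot\theta,\de\dot x,\de\dot y,\de u$. It posits the ansatz $\sum_j a_{ij}\de x_{j_\bot}+b_i\de u$ and expands it through $J_\bot$. It then reads off the coefficients by matching the $\de x,\de y,\de u,\de\dot x,\de\dot y,\de\dot\theta$ terms, with $a_{35}=a_{45}=0$ coming out of a small trigonometric cancellation. The $\de\theta$ coefficient is declared ``not needed''. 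So the consistency of this overdetermined matching, which is exactly the closure issue you raise, is left implicit.

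\textbf{Your route.} You first rewrite the dynamics exactly in $(\theta,X_\bot)$ by substituting for $\dot x$ and $\dot y$. This is the computation the paper defers to Claim~\ref{LemAA:dynamics_in_(th,Xbot)_coordinates}. In the resulting bilinear form, every $\theta$- or $\dot\theta$-dependent coefficient multiplies either $x_{3_\bot}\cos\theta+x_{4_\bot}\sin\theta$ or $u$. Hence $a_{35}=a_{45}=0$ and the absence of $\de\theta$ are immediate.

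\textbf{Comparison.} The paper's recipe is mechanical and applies to any choice of transverse coordinates without having to find a good substitution. Yours is shorter, checks closure explicitly rather than leaving it implicit, and produces the exact transverse dynamics that the later analysis needs anyway. In fact closure holds automatically in both approaches. Since $X_\bot$ vanishes on the whole orbit and $u^*\equiv 0$, the linearization of each $f_i$ annihilates $\dot X^*(t)$, and $\dot X^*(t)$ spans $\ker J_\bot(X^*(t))$.

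A minor wording point: $\de X_\bot=J_\bot(X^*(t))\de X$ is just the chain rule. Full rank of $J_\bot$ is what makes that kernel one-dimensional, not what gives the relation.
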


The explicit form of the transverse coordinates \eqref{Transverse_coordinates} and of their variational dynamics \eqref{Transverse_linearization} in a neighborhood of the nominal motion suggests several possibilities for analysis and feedback design; see \cite{Shiriaev} and related works.
We next summarize the key properties of the linear control system \eqref{Transverse_linearization}.
\begin{lemma}\label{Lem4:invariant_subspace_of_TL}
Given constants $\theta_0^*$ and $\om_0^*$ defining the nominal motion \eqref{Lem2:XYTh_c} of the Dubins car along the circle of radius $r_c$, the variational dynamics \eqref{Transverse_linearization} of the transverse coordinates \eqref{Transverse_coordinates} has the following properties:
\begin{enumerate}
    \item For any solution
        $\de X_\bot(t)=\bigl[\de x_{1_\bot}(t);\dots;\de x_{5_\bot}(t)\bigr]$ 
    of the linear control system \eqref{Transverse_linearization}, the scalar function
    \begin{equation}\label{Lem4:Invariant}
        I\bigl(t,\de X_\bot(t)\bigr) \triangleq \de x_{4_\bot}(t)\cos\bigl(\theta^*(t)\bigr) - \de x_{3_\bot}(t)\sin\bigl(\theta^*(t)\bigr)
    \end{equation}
    remains constant for every control input $\de u(t)$.

    \item The periodic linear control system \eqref{Transverse_linearization} is not stabilizable, that is, there is no matrix function $K(t)$ such that the origin of the closed-loop system
    \begin{equation*}
        \hbox{$\frac{d}{dt}$}{\de X}_\bot(t)=A(t)\de X_\bot(t)+B(t)\de u(t), \qquad
        \de u(t)=K(t)\de X_\bot(t),
    \end{equation*}
    is asymptotically stable.

    \item The origin of the homogeneous linear system
    \begin{equation}\label{Lem4:linear_system}
        \hbox{$\frac{d}{dt}$}{\de X}_\bot(t)=A(t)\de X_\bot(t)
    \end{equation}
    is unstable, and the system has at least a one-dimensional subspace of unbounded solutions.
\end{enumerate}
\end{lemma}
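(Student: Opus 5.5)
The plan is a direct computation with the explicit entries of $A(t)$ and $B(t)$ from Theorem~\ref{Thm1:TL}. Throughout I abbreviate $\varphi(t)\triangleq\theta^*(t)=\om_0^* t+\theta_0^*$, $s\triangleq\sin\varphi$, $c\triangleq\cos\varphi$, so that $\dot s=\om_0^* c$ and $\dot c=-\om_0^* s$.

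\emph{Item 1.} I will differentiate $I=\de x_{4_\bot}c-\de x_{3_\bot}s$ along solutions of \eqref{Transverse_linearization}:
\[
\dot I=\dot{\de x}_{4_\bot}c-\dot{\de x}_{3_\bot}s-\om_0^*\bigl(\de x_{4_\bot}s+\de x_{3_\bot}c\bigr).
\]
The next step is to substitute the rows of $A(t)$ and $B(t)$. The control terms give $-\frac{r_c}{J}sc\,\de u+\frac{r_c}{J}cs\,\de u=0$. The drift terms give $\om_0^*(c^3+s^2c)\de x_{3_\bot}+\om_0^*(sc^2+s^3)\de x_{4_\bot}=\om_0^*(c\,\de x_{3_\bot}+s\,\de x_{4_\bot})$, which cancels the last bracket. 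Hence $\dot I\equiv0$ for every $\de u(t)$.

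\emph{Item 2.} I will argue by contradiction. Suppose some $K(t)$ made the origin of the closed-loop system asymptotically stable. Take initial data with $I(t_0,\de X_\bot(t_0))\neq0$, for instance $\de x_{4_\bot}(t_0)=\cos\varphi(t_0)$, $\de x_{3_\bot}(t_0)=-\sin\varphi(t_0)$ and all other components zero. By item 1 the value of $I$ stays equal to $1$ along the closed-loop solution. On the other hand, $|I|\le|\de x_{3_\bot}|+|\de x_{4_\bot}|\to0$. This is a contradiction, so the system is not stabilizable. The argument is insensitive to the form of the feedback.

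\emph{Item 3.} Set $\de u\equiv0$. Then $\de x_{5_\bot}$ is constant and $I$ is constant. I will introduce the complementary combination $P\triangleq\de x_{3_\bot}c+\de x_{4_\bot}s$. The same substitution as in item 1 should show that the $A$-terms cancel, giving $\dot P=\om_0^*\bigl(\de x_{4_\bot}c-\de x_{3_\bot}s\bigr)=\om_0^* I$. Hence $P(t)=P(t_0)+\om_0^* I\,(t-t_0)$. Inverting the rotation gives
\[
\de x_{3_\bot}=Pc-Is,\qquad \de x_{4_\bot}=Ps+Ic.
\]
So for $I\neq0$ and $\om_0^*\neq0$ the pair $(\de x_{3_\bot},\de x_{4_\bot})$ grows linearly in norm. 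Consequently the origin is unstable. For example, the solution through the initial vector used in item 2 is unbounded, so at least a one-dimensional subspace (its span) consists of unbounded solutions.

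If one allows $\om_0^*=0$, I would instead note that $\de x_{3_\bot}$ is constant and $\dot{\de x}_{1_\bot}=\de x_{3_\bot}$, so $\de x_{1_\bot}$ grows linearly. The main point requiring care is the algebra for $\dot P$ and $\dot I$; it is routine, but it is what the whole argument rests on.
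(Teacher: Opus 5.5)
Your proof is correct. Items 1 and 2 follow the paper's argument essentially verbatim. In item 1 you differentiate $I$, substitute the entries of $A(t)$ and $B(t)$, and check that the coefficients of $\delta x_{3_\bot}$, $\delta x_{4_\bot}$ and $\delta u$ vanish. In item 2 you use the conserved nonzero value of $I$ to rule out convergence to the origin.

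Item 3 is where you take a genuinely different route.

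\emph{The paper's route.} It simply exhibits the explicit solution $\vv{\varphi}_4(t)$ from \eqref{Lem5:z4}, whose terms growing linearly in $t$ make it unbounded. The check that it actually solves \eqref{Lem4:linear_system} is deferred to the identity $\omega_0^*\frac{d}{d\tau}\Phi_\bot=A_\bot\Phi_\bot$ in Claim~\ref{LemBB:on_dPhi_nonlinear}.

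\emph{Your route.} You rotate $(\delta x_{3_\bot},\delta x_{4_\bot})$ into $(P,I)$ and derive $\dot P=\omega_0^* I$. This is self-contained, since it needs no fundamental matrix, and it exposes the mechanism: the conserved quantity $I$ drives a secular drift in $P$. It also gives more than the statement asks for.
\begin{itemize}
\item Every solution with $I\neq 0$ is unbounded, which is a complement of a hyperplane rather than just a one-dimensional subspace.
\item With one more line the unbounded solutions are characterized exactly: for $I=0$, $P$ is constant, and $\delta x_{1_\bot},\delta x_{2_\bot}$ are then periodic.
\item It covers the degenerate case $\omega_0^*=0$, which the paper implicitly excludes.
\end{itemize}

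\emph{What each buys.} The paper's route yields a full closed-form solution, including $\delta x_{1_\bot}$ and $\delta x_{2_\bot}$. It also reuses the matrix $\Phi(t)$, which is needed anyway for the coordinate change in Lemma~\ref{Lem6:Linear_system_in_new_coordinates}. The two arguments describe the same phenomenon: $\vv{\varphi}_4$ has $I\equiv 1$ and $P(t)=\theta^*(t)$, so it is one of your solutions.
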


%


\section{Main Result}\label{Sec:Main_Result}

The lack of stabilizability of the variational dynamics \eqref{Transverse_linearization} of the transverse coordinates \eqref{Transverse_coordinates} corresponding to the nominal motion \eqref{Lem2:XYTh_c} of the nonlinear control system \eqref{Constrained_dynamics}, established in Lemma~\ref{Lem4:invariant_subspace_of_TL}, creates an obstacle to asymptotic orbital stabilization of the nominal motion if one follows the approach of \cite{Shiriaev} or related works.
Indeed, that approach relies on the possibility of stabilizing the origin of the variational dynamics, and this property is absent in the present case.
Moreover, as stated in Lemma~\ref{Lem4:invariant_subspace_of_TL}, the uncontrolled linear system \eqref{Transverse_linearization}, with $\de u \equiv 0$, is unstable.

Nevertheless, the following result, which is the main contribution of the paper, shows that smooth state-feedback laws ensuring orbital stability of the nominal motion do exist for the considered dynamic extension of the Dubins car model, despite these non-standard problem settings.

\begin{theorem}\label{Thm4:main_result}
Let $q^*(t)=\bigl[\theta^*(t);x^*(t);y^*(t)\bigr]$ be a motion of the Dubins car \eqref{kinematics_for_XY}--\eqref{kinematics_constraint_for_V} along a circle of radius $r_c$, parameterized by the constants $\theta_0^*$ and $\om_0^*$, see \eqref{Lem2:XYTh_c}. Suppose that the nonlinear control system \eqref{Constrained_dynamics}, augmented with a $C^1$-smooth state-feedback law
\begin{equation}\label{Thm4:nonlinear_feedback}
    u = U(X), \qquad X=\bigl[\theta;x;y;\dot\theta;\dot x;\dot y\bigr],
\end{equation}
has the following property:
\begin{enumerate}
    \item[(PL)] The variational dynamics of the transverse coordinates \eqref{Transverse_coordinates}, along the nominal motion $q^*(t)$, of the closed-loop system \eqref{Constrained_dynamics}, \eqref{Thm4:nonlinear_feedback} has an invariant linear subspace of dimension three, and the restriction of the variational dynamics to this subspace is quadratically stable.
\end{enumerate}
Then the following statements hold:
\begin{enumerate}
    \item The nominal motion $q^*(t)$ of the Dubins car is orbitally stable. More precisely, for every $\varepsilon>0$, there exists $\de=\de(\varepsilon)>0$ such that, for any initial condition $X_0=\bigl[\theta_0;x_0;y_0;\dot\theta_0;\dot x_0;\dot y_0\bigr]$ of a perturbed motion $X(t)=X(t,X_0)$ starting sufficiently close to the initial point of the nominal motion, the implication
    \begin{equation}\label{Thm4:orbital_stability}
        \|X_0-X_0^*\|<\de
        \quad\Rightarrow\quad
        \hbox{\rm dist}\,\bigl\{X(t),\Gamma^*\bigr\}<\ep,
        \qquad \forall t\ge 0
    \end{equation}
    holds. Here $X_0^*=\bigl[\theta_0^*;x_0^*;y_0^*;\dot\theta_0^*;\dot x_0^*;\dot y_0^*\bigr]$ is the initial condition of the nominal motion, and $X^*(t)=X^*(t,X_0^*)$ is its state-space representation. The set $\Gamma^*$ is the curve in the state space of the closed-loop system \eqref{Constrained_dynamics}, \eqref{Thm4:nonlinear_feedback}, which is the orbit defined by the nominal motion. That is, $X=\bigl[\theta;x;y;\dot\theta;\dot x;\dot y\bigr]\in\Gamma^*$ if and only if there exists $\tau\ge 0$ such that $X=X^*(\tau)$. Finally, $\hbox{\rm dist}\,\bigl\{X,\Gamma^*\bigr\}$ denotes the distance between the point $X$ and the set $\Gamma^*$;

    \item If, in addition, the initial condition $X_0$ of the perturbed motion satisfies
    \begin{equation}\label{Thm4:v0=Om*rc}
        \sqrt{\dot x_0^2+\dot y_0^2}=\om_0^* r_c,
    \end{equation}
    then, in addition to \eqref{Thm4:orbital_stability}, the limit relation
    \begin{equation}\label{Thm4:a_orbital_stability}
        \hbox{\rm dist}\,\bigl\{X(t),\Gamma^*\bigr\}\to 0
        \qquad \hbox{as} \qquad t\to+\infty
    \end{equation}
    holds, and the convergence in \eqref{Thm4:a_orbital_stability} is exponential.
\end{enumerate}
\end{theorem}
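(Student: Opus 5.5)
The plan is to exploit the two exact conserved quantities of the nonlinear closed-loop system (\ref{Constrained_dynamics}), (\ref{Thm4:nonlinear_feedback}), which hold for every feedback $U$. These are the speed $v=\sqrt{\dot x^2+\dot y^2}$ (Lemma~\ref{Lem:V=const}) and the nonholonomic quantity $N(X)\triangleq \dot y\cos\theta-\dot x\sin\theta$. Differentiating $N$ with the first two equations of (\ref{Constrained_dynamics}) gives $\dot N = -\dot\theta(\dot x\cos\theta+\dot y\sin\theta) + \dot\theta(\dot y\sin\theta+\dot x\cos\theta)\cdot 1\ldots$, and it is routine to check that $\dot N\equiv 0$. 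Along the nominal motion, $N=0$ and $v=\om_0^* r_c$.

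Next I would check that the linearizations of $N$ and $v$ along $X^*(t)$, written in the transverse coordinates (\ref{Transverse_coordinates}), are two linearly independent functionals on $\de X_\bot$. Using Lemma~\ref{Lem3:Jacobian}, $\de N$ coincides with the invariant $I$ of Lemma~\ref{Lem4:invariant_subspace_of_TL}. The function $\de v$ is $\de x_{3_\bot}\cos\theta^*+\de x_{4_\bot}\sin\theta^* + r_c\,\de x_{5_\bot}$, which is again constant along solutions. The level sets $\{N=N_0,\ v=v_0\}$ foliate a neighborhood of $\Gamma^*$ into invariant $4$-dimensional manifolds, that is, $3$-dimensional in transverse directions. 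On the leaf $\{N=0, v=\om_0^*r_c\}$, the tangent of the transverse directions at $X^*(t)$ is $\ker \de N\cap\ker\de v$, a $3$-dimensional subspace invariant for the closed-loop variational dynamics. I would identify it with the subspace of hypothesis (PL); this identification is the point to check. Since the linear functionals $\de N,\de v$ are conserved, any invariant complementary-dimension subspace on which solutions decay must lie in their common kernel. So (PL) forces the stable subspace to equal this kernel.

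For statement 2, I would restrict the closed-loop nonlinear dynamics to the invariant leaf $M_0=\{N=0,\ v=\om_0^*r_c\}$, which contains $\Gamma^*$. Initial conditions satisfying (\ref{Thm4:v0=Om*rc}) lie in a leaf with $v=v_0^*$ but possibly $N_0\neq 0$. To handle this, I would use the fact that the nominal family is parameterized by speed, and more importantly use a translation/rotation symmetry. The system (\ref{Constrained_dynamics}) is invariant under translations of $(x,y)$, and a constant $N$ corresponds to... Here is an obstacle: $N\ne0$ is not a Dubins motion. So I would instead treat each leaf $M_{c}$, $c=(N_0,v_0)$, as a smooth perturbation of $M_0$. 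On $M_0$, the transverse linearization is the 3D system of (PL), which is quadratically stable with a periodic Lyapunov matrix $P(t)$. The standard transverse-linearization theorem (as in \cite{Shiriaev}, \cite{HauserBanaszuk}) applied within the leaf then gives exponential orbital stability of $\Gamma^*$ relative to $M_0$.

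For leaves with small $c\neq0$, I would use persistence of hyperbolic (normally attracting) periodic orbits. Within each leaf, the orbit $\Gamma^*$ is normally exponentially stable with uniform rates. By smooth dependence on $c$, which follows from $C^1$ smoothness of $U$ and the implicit function theorem, each nearby leaf contains an exponentially stable periodic orbit $\Gamma_c$ at distance $O(|c|)$ from $\Gamma^*$, with a uniform basin. This yields statement 1. Given $\ep$, choose $\de$ so that $|c|$ is small enough that $\Gamma_c$ lies within $\ep/2$ of $\Gamma^*$ and the trajectory stays within $\ep/2$ of $\Gamma_c$. For statement 2, my intended route is that condition (\ref{Thm4:v0=Om*rc}) pins $v_0$, and then the remaining quantity $N_0$ shifts the attracting orbit only along directions tangent to the leaf family. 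I expect the main obstacle to be precisely this: showing that under (\ref{Thm4:v0=Om*rc}) the limiting orbit is $\Gamma^*$ itself. This means either showing $\Gamma_c=\Gamma^*$ when only $N_0$ varies, or reinterpreting $\Gamma^*$ up to translation in $(x,y)$. I would try to verify directly that solutions with $N_0\neq0$ but $v=v_0^*$ still track a circle of radius $r_c$ at speed $\om_0^*$, using the explicit form $\dot x=\,\cdot$ from the conserved quantities. I would also check whether the paper's transverse coordinates make this automatic.
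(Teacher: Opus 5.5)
\textbf{The gap: $v$ is not a first integral, and the theorem is false for $N_0\neq0$.}
Your plan rests on treating $v=\sqrt{\dot x^2+\dot y^2}$ as a first integral of \eqref{Constrained_dynamics} along every solution, and it is not. Let $N=\dot y\cos\theta-\dot x\sin\theta$ and $S\triangleq\dot x\cos\theta+\dot y\sin\theta$. The first two equations of \eqref{Constrained_dynamics} give $\dot S=\dot\theta N$ and $\frac{d}{dt}(\dot x^2+\dot y^2)=2\dot\theta NS$. The proof of Lemma~\ref{Lem:V=const} uses the constraint \eqref{Nonholonomic_constraint}.

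The same failure appears at the linear level. With $A(t),B(t)$ from Theorem~\ref{Thm1:TL}, one gets $\frac{d}{dt}\de v=\om_0^*I$, not $0$. For example, the unbounded solution \eqref{Lem5:z4} has $I\equiv1$ and $\de v=\om_0^*t+\theta_0^*$. The correct second first integral on the full state space is $S-\theta N$; up to an affine change, this is the paper's $z_5$.

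Consequently, the sets $\{N=N_0,\ v=v_0\}$ with $N_0\neq0$ are not invariant. Worse, they contain no orbit $\Gamma_c$ that could be obtained by continuation. Indeed $S(t)=S(0)+N_0\bigl(\theta(t)-\theta(0)\bigr)$, while any trajectory staying near $\Gamma^*$ has $\dot\theta\approx\om_0^*\neq0$, which forces $|S(t)|\to\infty$. So for $N_0\neq0$ no feedback whatsoever keeps the trajectory near $\Gamma^*$. Your persistence step for statement~1 has nothing to find. Your proposed ``direct verification'' for statement~2 with $N_0\neq0$ would discover that these solutions do \emph{not} track the circle.

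\textbf{The missing observation and the repaired route.}
The theorem concerns perturbed \emph{motions of the Dubins car}, so $N_0=0$. This is exactly how the paper uses Claim~\ref{Lem12:z4_is_nonholonomic_constraint} ($z_4=N\equiv0$), and on $N=0$ the integral $z_5$ reduces to $v_0/r_c-\om_0^*$ (Claim~\ref{LemCC:z5=}). With this restriction, your statement-2 obstacle disappears: \eqref{Thm4:v0=Om*rc} places $X_0$ on the invariant manifold $\{z_4=z_5=0\}\supset\Gamma^*$.

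The rest of your plan then becomes a legitimate alternative to the paper's argument.
\begin{itemize}
\item Your identification of the (PL) subspace with $\ker I\cap\ker\de v$ survives. A decaying solution must have $I=0$; then $\de v$ is constant along it, hence zero.
\item Your leafwise argument works on $\{N=0\}$. It combines exponential orbital stability on the leaf $v=\om_0^*r_c$ with Poincaré-map persistence of an exponentially stable orbit $\Gamma_{v_0}$, at distance $O(|v_0-\om_0^*r_c|)$, on nearby leaves.
\end{itemize}

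This replaces the paper's route, which has three ingredients: the explicit coordinates $(\theta;Z)$, the Floquet--Lyapunov transformation $z=L(\theta)W$, and the function $W^{\trn}PW$, with $z_5(0)$ treated as a small parameter. Your version is more abstract but handles $z_5(0)\neq0$ more carefully. For a general $U$, the term $\frac{\partial U}{\partial z_5}\,z_5(0)$ acts as an $O(|z_5(0)|)$ forcing in the $W$-equation. One therefore gets convergence to a nearby orbit rather than $W\to0$, which is all statement~1 requires.
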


\begin{rem}\label{rem01}
The theorem introduces a new sufficient condition, namely {\em (PL)}, formulated in terms of the variational dynamics \eqref{Transverse_linearization}. This condition allows one to infer orbital stability of the nominal motion $q^*(t)$ for the nonlinear closed-loop system.
Indeed, smoothness of the state-feedback law \eqref{Thm4:nonlinear_feedback} ensures that, locally in a neighborhood of the nominal motion, it can be written in the form
\begin{equation*}
    u = U(X) = K(X)X_\bot(X),
\end{equation*}
where $K(X)$ is a continuous gain and $X_\bot(X)$ is the vector of transverse coordinates defined in \eqref{Transverse_coordinates}; see \cite{Hartman}.
Therefore, the linearization of the transverse dynamics of the nonlinear system \eqref{Constrained_dynamics}, \eqref{Thm4:nonlinear_feedback} along the nominal motion $X^*(t)$ has the form
\begin{equation*}
    \hbox{$\frac{d}{dt}$}{\de X}_\bot(t) = A(t)\de X_\bot(t) + B(t)\de u(t), \qquad
    \de u(t) = \left.K(X)\right|_{X=X^*(t)}\de X_\bot(t).
\end{equation*}
According to property {\em (PL)}, this periodic linear system has an invariant asymptotically stable subspace of dimension three.
\end{rem}

To construct feedback laws satisfying the new sufficient condition {\em (PL)} in Theorem~\ref{Thm4:main_result}, we present the following facts about the linear control system \eqref{Transverse_linearization} expressed in different coordinates.

\begin{lemma}\label{Lem6:Linear_system_in_new_coordinates}
Consider the change of coordinates
\begin{equation}\label{Lem6:change_of_coordinates}
    \de X_\bot = L(t)\de Z,
    \qquad
    L(t)=\Phi(t)D^{-1},
    \qquad
    D=
    \left[\!\begin{array}{ccccc}
        d_{11} & 0      & 0      & 0 & 0\\
        0      & d_{22} & d_{23} & 0 & 0\\
        0      & 0      & d_{33} & 0 & 0\\
        0      & 0      & 0      & 1 & 0\\
        0      & 0      & d_{53} & 0 & 1
    \end{array}\!\right].
\end{equation}
\begin{equation*}
\hbox{Here}
    \qquad
    \de Z \triangleq
    \left[\!\begin{array}{c}
        \de z_1\\
        \de z_2\\
        \de z_3\\
        \de z_4\\
        \de z_5
    \end{array}\!\right]=
    \left[\!\begin{array}{c}
        \de z\\
        \de z_4\\
        \de z_5
    \end{array}\!\right],
    \qquad
    \de z \triangleq
    \left[\!\begin{array}{c}
        \de z_1\\
        \de z_2\\
        \de z_3
    \end{array}\!\right],
\end{equation*}
and $D$ is the constant matrix with coefficients
\begin{equation*}
    d_{11}=\frac{J\om_0^*}{r_c}, \qquad
    d_{22}=-\frac{J\om_0^*}{r_c}, \qquad
    d_{23}=d_{33}=-\frac{J}{r_c}, \qquad
    d_{53}=\frac{1}{r_c}.
\end{equation*}
The matrix function $\Phi(t)$ is defined by
\begin{equation}\label{Lem5:Phi}
    \Phi(t)=
    \left[\begin{array}{ccccc}
        1 & 0 & \frac{1}{\om_0^*}\sin(\om_0^* t+\theta_0^*) & \varphi_{14}(t) & 0\\
        0 & 1 & \frac{1}{\om_0^*}\bigl[1-\cos(\om_0^* t+\theta_0^*)\bigr] & \varphi_{24}(t) & 0\\
        0 & 0 & \cos(\om_0^* t+\theta_0^*) & \varphi_{34}(t) & 0\\
        0 & 0 & \sin(\om_0^* t+\theta_0^*) & \varphi_{44}(t) & 0\\
        0 & 0 & 0 & \varphi_{54}(t) & 1
    \end{array}\right],
\end{equation}
where the components of $\vv{\varphi}_4(t)=\bigl[\varphi_{14}(t);\dots;\varphi_{54}(t)\bigr]$ are
\begin{eqnarray}\label{Lem5:z4}
    \vv{\varphi}_4(t)
    &=&
    \left[\!\!
    \begin{array}{c}
        \sin(\om_0^* t+\theta_0^*)\\[2.5mm]
        -\cos(\om_0^* t+\theta_0^*)\\[2.5mm]
        \om_0^* \cos(\om_0^* t+\theta_0^*)\\[2.5mm]
        \om_0^* \sin(\om_0^* t+\theta_0^*)\\[2.5mm]
        0
    \end{array}\!\!
    \right] t
    +
    \left[\!\!
    \begin{array}{c}
        \frac{\theta_0^*}{\om_0^*}\sin(\om_0^* t+\theta_0^*)
        + \frac{2}{\om_0^*}\bigl[\cos(\om_0^* t+\theta_0^*)-1\bigr]\\[2mm]
        -\frac{\theta_0^*}{\om_0^*}\cos(\om_0^* t+\theta_0^*)
        + \frac{2}{\om_0^*}\sin(\om_0^* t+\theta_0^*)\\[2mm]
        \theta_0^*\cos(\om_0^* t+\theta_0^*)-\sin(\om_0^* t+\theta_0^*)\\[2mm]
        \theta_0^*\sin(\om_0^* t+\theta_0^*)+\cos(\om_0^* t+\theta_0^*)\\[2mm]
        0
    \end{array}\!\!
    \right].
\end{eqnarray}
This transformation converts the  linear periodic control system \eqref{Transverse_linearization} into
\begin{equation}\label{Transverse_linearization_in_Z}
    \hbox{$\frac{d}{dt}$}{\de Z} = A_z(t)\de Z + B_z(t)\de u,
\end{equation}
where
\begin{equation*}
    A_z(t)=
    \left[\!\begin{array}{ccccc}
        0 & 0 & 0 & 0 & 0\\
        0 & 0 & 0 & 0 & 0\\
        0 & 0 & 0 & 0 & 0\\
        0 & 0 & 0 & 0 & 0\\
        0 & 0 & 0 & 0 & 0
    \end{array}\!\right],
    \qquad
    B_z(t)=
    \left[\!\begin{array}{c}
        \sin(\om_0^* t+\theta_0^*)\\
        \cos(\om_0^* t+\theta_0^*)\\
        1\\
        0\\
        0
    \end{array}\!\right].
\end{equation*}
Thus, \eqref{Transverse_linearization_in_Z} is a linear periodic control system with zero drift.
\end{lemma}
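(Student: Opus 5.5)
The plan is to read $\Phi(t)$ in \eqref{Lem5:Phi} as a fundamental matrix of the homogeneous system \eqref{Lem4:linear_system}, that is, to show $\dot\Phi(t)=A(t)\Phi(t)$ with $\det\Phi(t)\neq0$. Once this is established, substituting $\de X_\bot=\Phi(t)D^{-1}\de Z$ into \eqref{Transverse_linearization} gives
\[
\Phi D^{-1}\hbox{$\frac{d}{dt}$}\de Z + A\Phi D^{-1}\de Z = A\Phi D^{-1}\de Z + B\,\de u .
\]
The drift cancels identically, and what remains is $\frac{d}{dt}\de Z = D\Phi^{-1}(t)B(t)\,\de u$. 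Hence $A_z\equiv 0$, and the lemma reduces to two verifications: the fundamental-matrix property, and the identity $\Phi(t)D^{-1}B_z(t)=B(t)$.

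\textbf{Step 1: the fundamental-matrix property, checked column by column.} The first, second and fifth columns of $A(t)$ vanish, so the constant columns $e_1,e_2,e_5$ of $\Phi$ satisfy $\dot\Phi e_i=0=A\Phi e_i$ trivially.

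For the third column, write $s=\sin(\om_0^*t+\theta_0^*)$ and $c=\cos(\om_0^*t+\theta_0^*)$. Its derivative is $[c;\,s;\,-\om_0^*s;\,\om_0^*c;\,0]$. Multiplying $A(t)$ by the third column gives the same vector:
\begin{itemize}
\item the first two rows pick out the entries $c$ and $s$;
\item the third row gives $a_{33}c+a_{34}s=-\om_0^*s(c^2+s^2)$;
\item the fourth row gives $a_{43}c+a_{44}s=\om_0^*c(c^2+s^2)$.
\end{itemize}

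The fourth column $\vv{\varphi}_4(t)$ contains secular terms proportional to $t$, and this is where I expect the only real bookkeeping effort. I would differentiate \eqref{Lem5:z4} with the product rule and group the terms into three families: the coefficients of $t$, of $\theta_0^*$, and the remaining ones. Each family is then compared with the corresponding part of $A(t)\vv{\varphi}_4(t)$, using the same Pythagorean simplifications as for the third column. The constants $2/\om_0^*$ in the first two entries are exactly those needed so that the derivatives of $\varphi_{14},\varphi_{24}$ equal $\varphi_{34},\varphi_{44}$.

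Invertibility then follows from the block-triangular structure of $\Phi$:
\[
\det\Phi(t)=c\,\varphi_{44}(t)-s\,\varphi_{34}(t)=c(\om_0^*st+\theta_0^*s+c)-s(\om_0^*ct+\theta_0^*c-s)=1 .
\]
Since $D$ is also invertible, $L(t)$ is a valid time-varying change of coordinates.

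\textbf{Step 2: the input identity.} Rather than invert $\Phi$, I would verify $B=\Phi D^{-1}B_z$ directly. Solving $Dv=B_z$ by back substitution gives
\[
v_3=-\tfrac{r_c}{J},\qquad v_1=\tfrac{r_c s}{J\om_0^*},\qquad v_2=\tfrac{r_c(1-c)}{J\om_0^*},\qquad v_4=0,\qquad v_5=-d_{53}v_3=\tfrac1J .
\]
The value of $v_2$ uses $d_{23}v_3=1$. Multiplying by $\Phi(t)$ then yields:
\begin{itemize}
\item first row: $\tfrac{r_c s}{J\om_0^*}-\tfrac{s}{\om_0^*}\tfrac{r_c}{J}=0$;
\item second row: $0$ by the same cancellation;
\item third and fourth rows: $-\tfrac{r_c}{J}c$ and $-\tfrac{r_c}{J}s$;
\item fifth row: $\tfrac1J$.
\end{itemize}
These are exactly $b_3$, $b_4$, $b_5$ from Theorem~\ref{Thm1:TL}, with zeros in the first two positions as required. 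Hence $D\Phi^{-1}B=B_z$, which completes the argument.
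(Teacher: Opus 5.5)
Your proposal is correct and follows essentially the same route as the paper, whose argument for this lemma is spread over the appendix. Claim~\ref{LemBB:on_dPhi_nonlinear} shows that $\Phi$ (written there as $\Phi_\bot(\tau)$ with $\tau=\om_0^*t+\theta_0^*$) is a unit-determinant fundamental matrix of the drift, and Claim~\ref{P:Thm2:statement1} together with \eqref{P:Thm2:DR1} gives $D\Phi^{-1}B=B_z$. The only difference is computational: the paper gets $\Phi^{-1}B$ from the block-inverse formula, whereas you check $\Phi D^{-1}B_z=B$ forward by back-substitution and never invert $\Phi$. Working in $\tau$, as the paper does, would also merge your $t$- and $\theta_0^*$-families in the fourth-column check.
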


Since the linear system \eqref{Transverse_linearization_in_Z} consists of two decoupled subsystems,
\begin{eqnarray}
    \hbox{{\bf LS1:}\qquad }\hbox{$\frac{d}{dt}$}\de z
    &=&
    \left[\!\begin{array}{c}
        \sin(\om_0^* t+\theta_0^*)\\
        \cos(\om_0^* t+\theta_0^*)\\
        1
    \end{array}\!\right]\de u, \label{Transverse_linearization_Z_2_subsystems}\\[2mm]
    \hbox{{\bf LS2:}\qquad }\hbox{$\frac{d}{dt}$}
    \left[\!\begin{array}{c}
        \de z_4\\
        \de z_5
    \end{array}\!\right]
    &=&
    \left[\!\begin{array}{c}
        0\\
        0
    \end{array}\!\right].\nonumber
\end{eqnarray}
and since the second subsystem, {\bf LS2}, is obviously not stabilizable, Lemma~\ref{Lem6:Linear_system_in_new_coordinates} clarifies the construction of a feedback law satisfying condition {\em (PL)} in Theorem~\ref{Thm4:main_result}. In particular, condition {\em (PL)} holds if and only if the control input $\de u$ stabilizes subsystem {\bf LS1} in \eqref{Transverse_linearization_Z_2_subsystems}.

Before discussing approaches to stabilizing the variational dynamics, we comment on the complexity of the task and show that no ``simple'' feedback law solves the problem.

\begin{lemma}\label{Lem8:no_static_feedback_for_TL}
Given constants $\theta_0^*$ and $\om_0^*$, consider the linear control system \eqref{Transverse_linearization}, rewritten in the form \eqref{Transverse_linearization_in_Z}, or equivalently \eqref{Transverse_linearization_Z_2_subsystems}. Then there is no linear feedback law
\begin{equation}\label{Lem8:static_feedback}
    \de u = k_1 \de z_1 + k_2 \de z_2 + k_3 \de z_3
\end{equation}
with constant gains $k_1$, $k_2$, and $k_3$ that makes subsystem {\bf LS1} of \eqref{Transverse_linearization_Z_2_subsystems} asymptotically stable.
\end{lemma}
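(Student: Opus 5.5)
The plan is to exploit the rank-one structure of the closed loop rather than compute Floquet multipliers. Substituting \eqref{Lem8:static_feedback} into subsystem {\bf LS1} of \eqref{Transverse_linearization_Z_2_subsystems} gives
\[
\hbox{$\frac{d}{dt}$}\de z = b(t)\,k\trn \de z,
\qquad
b(t)=\bigl[\sin(\om_0^* t+\theta_0^*);\ \cos(\om_0^* t+\theta_0^*);\ 1\bigr],
\qquad
k=\bigl[k_1;k_2;k_3\bigr].
\]
The closed-loop matrix $b(t)k\trn$ has rank at most one for every $t$. Its image is spanned by $b(t)$, and its kernel is the time-independent plane $\ker k\trn$, or all of $\mathbb{R}^3$ if $k=0$.

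The key step is to introduce the scalar output $w(t)\triangleq k\trn\de z(t)$. It satisfies the scalar linear equation
\[
\dot w = \bigl(k\trn b(t)\bigr)w,
\]
so $w(t)=w(0)\exp\bigl(\int_0^t k\trn b(s)\,ds\bigr)$. Hence $w\equiv 0$ whenever $w(0)=0$. I would then take any nonzero initial condition $\de z(0)=\de z_0$ with $k\trn\de z_0=0$, which exists because $\ker k\trn\subset\mathbb{R}^3$ has dimension at least two. Along this solution $\dot{\de z}=b(t)w(t)\equiv 0$, so $\de z(t)\equiv\de z_0\neq 0$ for all $t$. This is a nonzero constant solution of the closed loop, so the origin is not attractive, and therefore not asymptotically stable, for any constant $k_1,k_2,k_3$. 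Equivalently, the closed loop has a Floquet multiplier equal to $1$ of multiplicity at least two.

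There is essentially no obstacle once the invariance of $\ker k\trn$ under the flow is noticed. The only point needing care is the degenerate case $k=0$, where every solution is constant and the conclusion is immediate. The invariance check itself is one line: $\frac{d}{dt}(k\trn\de z)=(k\trn b)(k\trn\de z)$.

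I would also remark that the argument uses only the rank-one form $b(t)k\trn$, not the specific periodic entries of $b(t)$. It therefore shows that any static constant-gain output of the form $k\trn\de z$ fails. This motivates the time-varying or dynamic feedback laws needed to satisfy condition \emph{(PL)}.
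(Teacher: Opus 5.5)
Your argument is correct and complete. The paper states Lemma~\ref{Lem8:no_static_feedback_for_TL} without proof, and no argument for it appears in the Appendix, so there is no authors' route to compare against; your proof fills that gap.

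The scalar equation for $w$ is not needed for the conclusion. For any $\de z_0\in\ker k\trn$, the right-hand side $b(t)k\trn\de z_0$ vanishes for every $t$. Hence every point of the (at least two-dimensional) subspace $\ker k\trn$ is an equilibrium of the closed loop, and uniqueness of solutions gives $\de z(t)\equiv\de z_0$.

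The $w$-equation does sharpen your Floquet remark. It says that $k\trn$ is a left eigenvector of the monodromy matrix with eigenvalue $\exp\bigl(\int_0^{T_0}k\trn b(s)\,ds\bigr)=e^{k_3T_0}$, with $T_0=2\pi/\om_0^*$, because the $\sin$ and $\cos$ terms integrate to zero over a period. This agrees with Liouville's formula. So for $k\neq 0$ the multipliers are exactly $1,1,e^{k_3T_0}$: a constant gain can contract at most one direction.

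This also explains why the stabilizing law in Lemma~\ref{Lem9:fb_for_om=2} must use time-varying gains. With a gain $k(t)$, the kernel of $k(t)\trn$ moves with time. The continuum of equilibria your argument exhibits therefore disappears, and asymptotic stability is no longer excluded.
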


This observation motivates the search for non-static feedback laws that stabilize a maximally stabilizable subspace of the linear control system \eqref{Transverse_linearization_Z_2_subsystems}. Such a controller can be constructed readily.
Indeed, the controllability Gramian of subsystem {\bf LS1} in \eqref{Transverse_linearization_Z_2_subsystems} is
\begin{equation}\label{Lem7:Gc}
    \Gamma_{\hbox{\small (\ref{Transverse_linearization_Z_2_subsystems})}}
    =
    \int\limits_{t_0}^{T_0+t_0}
    \left[\!\!\begin{array}{c}
        \sin(\om_0^* t+\theta_0^*)\\
        \cos(\om_0^* t+\theta_0^*)\\
        1
    \end{array}\!\!\right]
    \left[\!\!\begin{array}{c}
        \sin(\om_0^* t+\theta_0^*)\\
        \cos(\om_0^* t+\theta_0^*)\\
        1
    \end{array}\!\!\right]^{\trn}
    dt
    =
    \left[\!\!
    \begin{array}{ccc}
        \frac{\pi}{\om_0^*} & 0 & 0\\
        0 & \frac{\pi}{\om_0^*} & 0\\
        0 & 0 & \frac{2\pi}{\om_0^*}
    \end{array}\!\!
    \right],
\end{equation}
which has full rank. Hence, subsystem {\bf LS1} of \eqref{Transverse_linearization_Z_2_subsystems} is completely controllable over the period $T_0=\frac{2\pi}{\om_0^*}$. Therefore, one may follow a systematic design route and use feedback laws obtained from an auxiliary {\em linear-quadratic} optimization problem.
The advantages and computational complexity of this approach are discussed and illustrated on benchmark examples in \cite{Gusev_IJC:2016,Gusev_BIT:2010}.
However, for many examples, including the present case study, there are clear alternatives.
To make the digital implementation as transparent as possible, we illustrate the design of a state-feedback controller for specific values of the constants $\theta_0^*$ and $\om_0^*$ parameterizing the motion \eqref{Lem2:XYTh_c}.

\begin{lemma}\label{Lem9:fb_for_om=2}
Let $\theta_0^*=0~[\mathrm{rad}]$ and $\om_0^*=2~\bigl[\frac{\mathrm{rad}}{\mathrm{s}}\bigr]$. Consider subsystem {\bf LS1} of \eqref{Transverse_linearization_Z_2_subsystems} augmented with the linear state-feedback law
\begin{eqnarray}\label{Lem9:fb1}
    \de u
    &=&
    -4\sin\bigl(\om_0^* t\bigr)\de z_1
    -5\cos\bigl(\om_0^* t\bigr)\de z_2
    -3\de z_3
    \nonumber\\
    &=&
    C(t)\de Z,
    \qquad
    C(t)\triangleq
    \bigl[-4\sin\bigl(\om_0^* t\bigr),\,-5\cos\bigl(\om_0^* t\bigr),\,-3,\,0,\,0\bigr].
\end{eqnarray}
Then the corresponding closed-loop system \eqref{Transverse_linearization_Z_2_subsystems}, \eqref{Lem9:fb1} is quadratically stable.
Consequently, the linear control system \eqref{Transverse_linearization}, augmented with the linear state-feedback law
\begin{equation}\label{Lem9:fb}
    \de u = K(t)\de X_\bot,
    \qquad
    K(t)=C(t)D\bigl[\Phi(t)\bigr]^{-1},
\end{equation}
where the matrix $D$ and the matrix function $\Phi(t)$ are defined in \eqref{Lem6:change_of_coordinates}--\eqref{Lem5:Phi}, possesses a quadratically stable invariant linear subspace of maximal dimension three.
\end{lemma}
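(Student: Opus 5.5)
The plan is to prove quadratic stability of the closed-loop subsystem \textbf{LS1} directly with a constant quadratic Lyapunov function adapted to the gains. Once that is done, the result transfers to the original coordinates through the change of variables of Lemma~\ref{Lem6:Linear_system_in_new_coordinates}.

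Write $b(t)=\bigl[\sin(\om_0^* t);\cos(\om_0^* t);1\bigr]$ and $k(t)=\bigl[4\sin(\om_0^* t);5\cos(\om_0^* t);3\bigr]$. The closed loop \eqref{Transverse_linearization_Z_2_subsystems}, \eqref{Lem9:fb1} is then $\frac{d}{dt}\de z=-b(t)k(t)\trn\de z$, which is periodic with period $\pi$ when $\om_0^*=2$.

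A constant $P$ cannot yield a strictly negative definite derivative. Indeed, $PA_{cl}+A_{cl}\trn P$ has rank at most two, while $\de z$ lives in $\mathbb{R}^3$. So I would instead use the weights suggested by the gains and set $V(\de z)=4\de z_1^2+5\de z_2^2+3\de z_3^2$. Along solutions this gives
\[
\dot V=2\,\de u\,\bigl(4\sin(\om_0^* t)\de z_1+5\cos(\om_0^* t)\de z_2+3\de z_3\bigr)=-2\,\de u^2=-2\bigl(k(t)\trn\de z\bigr)^2\le 0 .
\]
This identity works precisely because each gain equals the corresponding weight times the matching entry of $b(t)$.

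The second step upgrades $\dot V\le0$ to uniform exponential decay, via a LaSalle/observability argument for linear periodic systems. Integrating over one period gives
\[
V(t+\pi)-V(t)=-2\int_t^{t+\pi}\bigl(k(s)\trn\de z(s)\bigr)^2ds .
\]
It therefore suffices to show that the pair $(A_{cl}(t),k(t)\trn)$ is uniformly observable over a period. Observability is invariant under output injection, since $A_{cl}=0-b\,k\trn$ is obtained from the zero drift by injecting the output $k\trn\de z$. Hence it is enough to check the observability Gramian of $(0,k\trn)$:
\[
\int_t^{t+\pi}k(s)k(s)\trn ds=\hbox{diag}\bigl(8\pi/\om_0^*\cdot\tfrac12\cdot2,\;\dots\bigr).
\]
This is diagonal with positive entries proportional to $16$, $25$ and $9$, because the cross terms average to zero over the period, exactly as in \eqref{Lem7:Gc}. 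A standard perturbation estimate (Gronwall on the difference between the closed-loop and the drift-free trajectories over one period) then yields $V(t+\pi)\le(1-\gamma)V(t)$ for some $\gamma\in(0,1)$ independent of $t$. This gives uniform exponential stability, and hence, by the converse Lyapunov theorem for periodic linear systems, a periodic quadratic Lyapunov function $\de z\trn P(t)\de z$ with strictly negative derivative. That is the claimed quadratic stability. This step is the main technical obstacle: the observability and perturbation estimate must be done carefully, or alternatively the monodromy matrix can be computed and shown to have spectral radius below one. I would try the Gramian route first.

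The final step transfers the result to $\de X_\bot$. In the $Z$-coordinates, the subspace $\{\de z_4=\de z_5=0\}$ is invariant because \textbf{LS2} has zero dynamics and $C(t)$ does not involve $\de z_4,\de z_5$. Its image under $L(t)=\Phi(t)D^{-1}$ is spanned by the first three columns of $L(t)$. Since the fourth column of $D^{-1}$ is $e_4$, the first three columns of $D^{-1}$ have zero fourth entry. Those three columns of $L(t)$ are therefore combinations of columns $1,2,3,5$ of $\Phi(t)$, which are bounded and periodic; the secularly growing $\vv{\varphi}_4$ never enters. Hence this image is a three-dimensional invariant subspace of \eqref{Transverse_linearization} under $\de u=K(t)\de X_\bot$ with $K=C D\Phi^{-1}$, and boundedness of the relevant columns and of their left inverse preserves quadratic (exponential) stability.

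Maximality follows because any stabilizable invariant subspace must avoid the non-stabilizable \textbf{LS2} directions. Equivalently, by Lemma~\ref{Lem4:invariant_subspace_of_TL}, at least two directions are uncontrollable and non-decaying, which caps the dimension at three.
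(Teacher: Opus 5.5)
The paper states this lemma without proof; the appendix skips it, and the text only motivates it through the Gramian \eqref{Lem7:Gc}. So there is no argument of the authors' to compare with, and your proposal is essentially correct.

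The decisive observation is that the gains satisfy $k(t)=Wb(t)$ with $W=\mathrm{diag}(4,5,3)$. Then $V=\de z\trn W\de z$ gives $\dot V=-2\,\de u^2$, which is exactly why the numbers $4,5,3$ work.

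Your rank-two remark is also correct and worth keeping. No constant $P$ gives a strictly decreasing $V$, so ``quadratically stable'' can only mean a periodic quadratic Lyapunov function, i.e.\ exponential stability. That is also the sense in which the proof of Theorem~\ref{Thm4:main_result} uses it (Floquet--Lyapunov reduction to a Hurwitz $R_0$).

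The observability/output-injection step is valid but heavier than necessary. Let $M$ be the monodromy matrix over the period $\pi$. Then $V(M\xi)\le V(\xi)$, and equality forces $\de u\equiv 0$ on the period. Hence $\de z\equiv\xi$ and $4\xi_1\sin 2t+5\xi_2\cos 2t+3\xi_3\equiv 0$, so $\xi=0$. Homogeneity and compactness of the unit sphere then give $V(M\xi)\le(1-\gamma)V(\xi)$, hence $\rho(M)<1$, with no Gronwall estimate needed.

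A few details need fixing:
\begin{itemize}
\item The Gramian of $(0,k\trn)$ over one period is $\mathrm{diag}(8\pi,\,25\pi/2,\,9\pi)$, so the ratios are $16:25:18$, not $16:25:9$. This is harmless.
\item In the transfer step, what you need is that the fourth \emph{row} of $D^{-1}$ is $e_4\trn$. This holds because both the fourth row and the fourth column of $D$ equal $e_4$; the column fact alone does not imply it.
\item The bounded left inverse of the first three columns of $L(t)$ must be chosen with care, e.g.\ the pseudo-inverse of that periodic, full-column-rank $5\times 3$ matrix. The natural candidate, rows $1$--$3$ of $L(t)^{-1}=D\Phi(t)^{-1}$, does not work: its third row contains terms growing linearly in $t$.
\item For maximality, Lemma~\ref{Lem4:invariant_subspace_of_TL} supplies only one conserved quantity, $I=\de z_4$. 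The second non-decaying direction comes from $\de z_5$. Because $L(t)$ is unbounded, you must check that a nonzero constant $\de z_5$ actually prevents $\de X_\bot\to 0$. It does, since
\[
\de z_5=\tfrac{1}{r_c}\bigl(\cos\theta^*\,\de x_{3_\bot}+\sin\theta^*\,\de x_{4_\bot}\bigr)+\de x_{5_\bot}-\tfrac{\theta^*}{r_c}\,\de z_4,
\]
which is a bounded functional on $\{\de z_4=0\}$. Hence every invariant subspace of decaying solutions lies in $\{\de z_4=\de z_5=0\}$, which is three-dimensional.
\end{itemize}
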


The linear state-feedback design \eqref{Lem9:fb1}--\eqref{Lem9:fb} given in Lemma~\ref{Lem9:fb_for_om=2} can be transformed into a family of nonlinear controllers satisfying condition {\em (PL)} in Theorem~\ref{Thm4:main_result}. Therefore, any such nonlinear feedback law orbitally stabilizes the nominal motion $q^*(t)$.
\begin{cor}\label{Corollary_01}\em
Let $\theta_0^*=0~[\mathrm{rad}]$ and $\om_0^*=2~\bigl[\frac{\mathrm{rad}}{\mathrm{s}}\bigr]$, and let \eqref{Lem2:XYTh_c} define the nominal motion of the Dubins car. Consider a smooth nonlinear feedback controller
\begin{equation}\label{Lem9:Nonlinear_fb}
    u = K(X)X_\bot(X),
\end{equation}
where the gain $K(X)$ satisfies the interpolation condition
\begin{equation}\label{Lem9:Nonlinear_fb1}
    \left.K(X)\right|_{X=X^*(t)} = C(t)D\bigl[\Phi(t)\bigr]^{-1},
\end{equation}
and where $X=\bigl[\theta;x;y;\dot\theta;\dot x;\dot y\bigr]$ is the state vector of \eqref{Constrained_dynamics}, $X_\bot(X)$ is the vector function with components defined in \eqref{Transverse_coordinates}, and $C(t)$, $\Phi(t)$, and $D$ are defined in \eqref{Lem9:fb1}--\eqref{Lem9:fb}. Then the closed-loop system \eqref{Constrained_dynamics}, \eqref{Lem9:Nonlinear_fb} satisfies all sufficient conditions of Theorem~\ref{Thm4:main_result}.
One such nonlinear controller \eqref{Lem9:Nonlinear_fb}--\eqref{Lem9:Nonlinear_fb1} is obtained by choosing
\begin{equation}\label{Lem9:Nonlinear_fb2}
    K(X) \triangleq C(s)D\bigl[\Phi(s)\bigr]^{-1},
    \qquad
    s=\frac{\theta-\theta_0^*}{\om_0^*},
\end{equation}
which automatically satisfies the interpolation condition \eqref{Lem9:Nonlinear_fb1}.
\end{cor}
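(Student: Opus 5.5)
The statement to prove is Corollary~\ref{Corollary_01}. The plan is to reduce it to Lemma~\ref{Lem9:fb_for_om=2}, using the structure described in Remark~\ref{rem01}. Three things must be checked:
\begin{enumerate}
\item the controller \eqref{Lem9:Nonlinear_fb} is a $C^1$ state feedback of the form \eqref{Thm4:nonlinear_feedback};
\item the variational dynamics of the transverse coordinates \eqref{Transverse_coordinates} for the closed loop coincide with \eqref{Transverse_linearization} closed by $\de u=K(t)\de X_\bot$, where $K(t)=C(t)D[\Phi(t)]^{-1}$;
\item property (PL) then follows from Lemma~\ref{Lem9:fb_for_om=2}.
\end{enumerate}

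\textbf{Smoothness.} $D$ is a constant invertible matrix, since $d_{11}d_{22}d_{33}\neq 0$. The map $\Phi(s)$ is smooth in $s$. For invertibility, expand $\det\Phi(s)$ along the last column and then the first two columns; this gives $\det\Phi(s)=\cos(\cdot)\varphi_{44}-\sin(\cdot)\varphi_{34}$. In this difference the $t$-linear and $\theta_0^*$-terms cancel, leaving $\cos^2+\sin^2=1$. Hence $\Phi(s)$ is invertible for every $s$, and $[\Phi(s)]^{-1}$ is smooth. The choice \eqref{Lem9:Nonlinear_fb2} is therefore a smooth function of $\theta$, and $K(X)X_\bot(X)$ is $C^\infty$.

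\textbf{Linearization.} Since $X_\bot(X^*(t))=0$, the first variation of $u=K(X)X_\bot(X)$ along $X^*(t)$ is
\[
\de u=K(X^*(t))\,J_\bot(X^*(t))\,\de X=K(X^*(t))\,\de X_\bot .
\]
The term involving $\partial K/\partial X$ drops out because it multiplies $X_\bot=0$. On the nominal motion $\theta^*(t)=\om_0^* t+\theta_0^*$, so $s=t$. Thus \eqref{Lem9:Nonlinear_fb2} satisfies the interpolation condition \eqref{Lem9:Nonlinear_fb1}, and $\de u=K(t)\de X_\bot$. Combining this with Theorem~\ref{Thm1:TL}, the closed-loop variational dynamics are exactly \eqref{Transverse_linearization} with the feedback \eqref{Lem9:fb}.

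\textbf{Property (PL).} Apply Lemma~\ref{Lem6:Linear_system_in_new_coordinates}: in the coordinates $\de Z=D\Phi(t)^{-1}\de X_\bot$ the feedback becomes $\de u=C(t)\de Z$, which depends only on $\de z$. The system then splits into two parts. The subsystem \textbf{LS1} in closed loop is quadratically stable by Lemma~\ref{Lem9:fb_for_om=2}. The coordinates $\de z_4,\de z_5$ are constant.

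The three-dimensional candidate subspace is $\{\de z_4=\de z_5=0\}$, and it is invariant. Pulled back through $L(t)=\Phi(t)D^{-1}$, it is the time-varying subspace spanned by the first three columns of $L(t)$. Quadratic stability transfers to $\de X_\bot$ coordinates because $L(t)$ is bounded with bounded inverse, and the same holds for its restriction to this subspace.

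\textbf{Main obstacle.} The delicate point is that $\Phi(t)$ contains secular terms growing like $t$, through $\vv{\varphi}_4(t)$. So $L(t)$ is not bounded globally, and the transfer of quadratic stability could fail. The resolution should be that $\vv{\varphi}_4$ enters only the fourth column. The columns spanning the invariant subspace, namely columns~$1,2,3$ of $\Phi D^{-1}$, are combinations of columns $1,2,3$ and $5$ of $\Phi$, since $D^{-1}$ mixes column~3 with column~5 only. These columns are bounded and periodic, and they have a uniform lower bound on the minimal singular value. This is what the argument must verify. I would check it explicitly from the upper-left $3\times3$ block of $\Phi$, together with its third and fourth rows, to confirm a uniform rank bound; that step completes the verification of (PL).
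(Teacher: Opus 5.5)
Your proposal is correct and follows essentially the route the paper has in mind (Remark~\ref{rem01} together with Lemmas~\ref{Lem6:Linear_system_in_new_coordinates} and~\ref{Lem9:fb_for_om=2}): the $\partial K/\partial X$ term drops out because $X_\bot(X^*(t))=0$, $s=t$ on the nominal motion gives $\de u=C(t)D[\Phi(t)]^{-1}\de X_\bot=C(t)\de Z$, and (PL) reduces to stability of the closed-loop subsystem \textbf{LS1}. Your concern about the secular terms in $\Phi(t)$ is a refinement the paper does not address, since it simply asserts (PL) in Lemma~\ref{Lem9:fb_for_om=2} and works in $Z$-coordinates in the proof of Theorem~\ref{Thm4:main_result}; it closes immediately, because the first three columns of $L(t)=\Phi(t)D^{-1}$ are $e_1/d_{11}$, $e_2/d_{22}$ and a $T_0$-periodic vector with fifth entry $1/J$, so their minor on rows $1,2,5$ is the nonzero constant $1/(d_{11}d_{22}J)$.
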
 


\section{Results of Computer Simulation}\label{Sec:Result_CS}

To illustrate the main contribution, we study the response of the nonlinear system \eqref{Constrained_dynamics} under the nonlinear feedback controller proposed in Corollary~\ref{Corollary_01} and defined by \eqref{Lem9:Nonlinear_fb} and \eqref{Lem9:Nonlinear_fb2}.
The parameters of the robot and of the nominal motion are
$m=1~[\mathrm{kg}]$, $J=1~[\mathrm{kg\,m^2}]$, $r_c=1~[\mathrm{m}]$, $\theta_0^*=0~[\mathrm{rad}]$, and $\om_0^*=2~\bigl[\frac{\mathrm{rad}}{\mathrm{s}}\bigr]$.
By Corollary~\ref{Corollary_01}, the linearization of the closed-loop system along the nominal motion $q^*(t)$ satisfies property {\em (PL)}.
Therefore, Theorem~\ref{Thm4:main_result} implies that the nonlinear controller defined by \eqref{Lem9:Nonlinear_fb} and \eqref{Lem9:Nonlinear_fb2} orbitally stabilizes the nominal motion $q^*(t)$.
The simulation results are shown in Figs.~\ref{fig1} and~\ref{fig2}. Figure~\ref{fig1} shows the evolution of the five transverse coordinates $X_\bot(t)$ along a solution of the nonlinear closed-loop system, while Fig.~\ref{fig2} shows the corresponding control signal.
For illustration, the initial condition of the perturbed motion of the Dubins car is chosen to satisfy \eqref{Thm4:v0=Om*rc}.
As stated in the second part of Theorem~\ref{Thm4:main_result}, all five transverse variables converge to zero.

\begin{figure}
    \centering
    \includegraphics[width=1.0\linewidth]{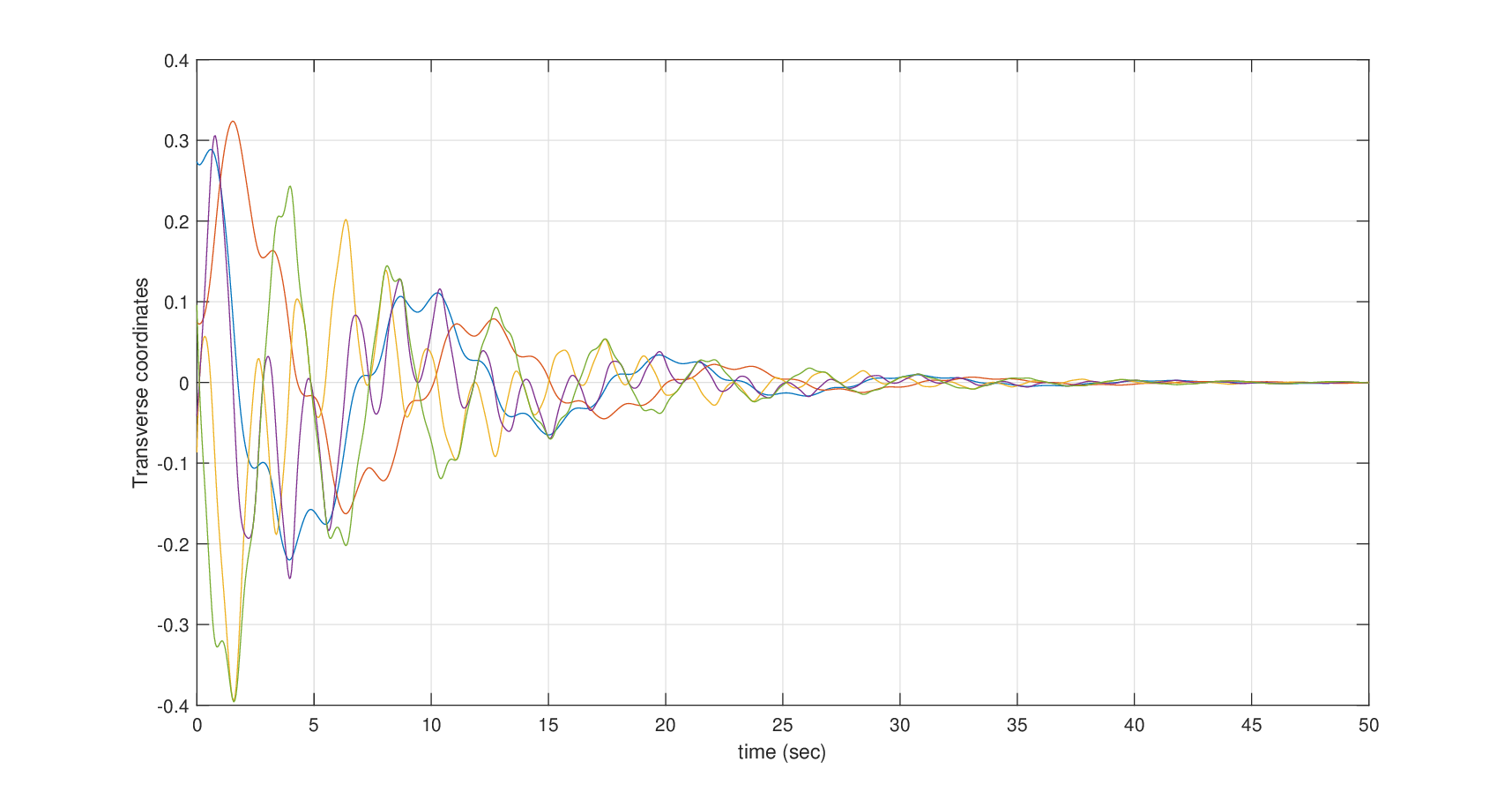}\vskip-2mm
    \caption{Time evolution of the transverse coordinates $X_\bot(t)$ along a solution of the closed-loop system. The initial condition of the perturbed motion of the Dubins car is chosen to satisfy \eqref{Thm4:v0=Om*rc}. As a result, all five transverse variables converge to zero.}
    \label{fig1}
\end{figure}

\begin{figure}
    \centering
    \includegraphics[width=0.9\linewidth]{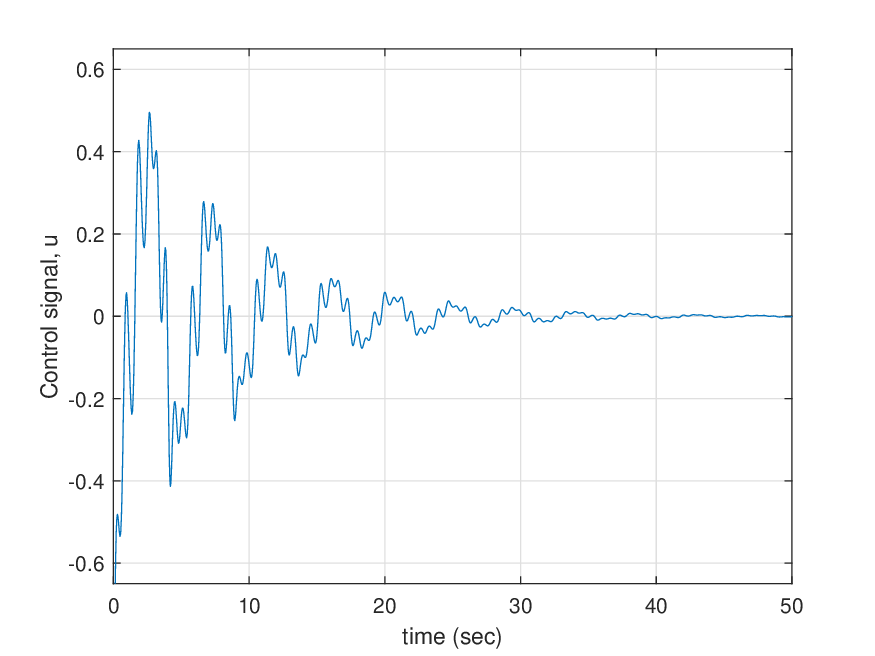}\vskip-2mm
    \caption{Control signal along a solution of the closed-loop system.}
    \label{fig2}
\end{figure}


\section{Concluding Remarks}\label{Sec:Concluding_Remarks}

This paper examined orbital stabilization of a circular motion primitive for a dynamic extension of the Dubins car model through analysis of variational dynamics of transverse coordinates. The main difficulty is that the corresponding transverse linearization is not stabilizable by linear state feedback and, in addition, its uncontrolled dynamics is unstable. Therefore, the standard transverse-linearization-based route to orbital stabilization cannot be applied directly.

To address this difficulty, we derived a new sufficient condition, denoted by {\em (PL)}, formulated in terms of the variational dynamics of the transverse coordinates. This condition requires the existence of an invariant three-dimensional subspace on which the variational dynamics is quadratically stable. For the considered Dubins car example, we showed how this condition can be verified constructively. In particular, we obtained an explicit change of coordinates that decomposes the transverse linearization and separates the non-stabilizable part of the dynamics from the part that can be stabilized by feedback.

Based on this decomposition, we constructed a linear state-feedback law for the variational dynamics and then converted it into a smooth nonlinear state-feedback law for the original constrained system. This construction made it possible to prove orbital stability of the nominal motion for the resulting nonlinear closed-loop system. Moreover, under the additional condition that the initial translational speed satisfies \eqref{Thm4:v0=Om*rc}, the distance from the perturbed motion to the nominal orbit converges to zero exponentially.

The results show that failure of stabilizability of the full transverse linearization does not, by itself, exclude successful orbital stabilization of the periodic solution of the nonlinear system. What is essential is the existence of a sufficiently large invariant stable subspace of the variational dynamics together with a nonlinear feedback law whose linearization reproduces the corresponding partially stabilizing design.

The example considered in the paper is specific, but the presented reasoning is constructive and can be reused in other problems of orbital stabilization for constrained and underactuated mechanical systems. Possible directions for future work include extending the analysis to broader classes of nonholonomic motions, clarifying how the sufficient condition {\em (PL)} can be checked systematically for more general systems, and comparing the proposed construction with alternative transverse-linearization-based and periodic-control designs.

\bibliographystyle{unsrt}
\bibliography{references_for_paper}


\appendix

\section{Proof of Lemma~\ref{Lemma_on_constraint_force}} 
\begin{proof}
The statement follows from the standard assumption that the constraint \eqref{Nonholonomic_constraint} is ideal; see, for example, \cite{Bloch}. Therefore, the instantaneous power of the corresponding reaction force vanishes along any motion:
\begin{equation*}
    R_\theta\bigl(q(t),\dot q(t)\bigr)\dot\theta(t)
    + R_x\bigl(q(t),\dot q(t)\bigr)\dot x(t)
    + R_y\bigl(q(t),\dot q(t)\bigr)\dot y(t)
    \equiv 0.
\end{equation*}
Since
$\dot q(t)=\bigl[\dot\theta(t);\dot x(t);\dot y(t)\bigr],$
the Pfaffian form corresponding to \eqref{Nonholonomic_constraint} is
\(
   \bigl[\,0,\,-\sin\theta,\ \cos\theta\,\bigr]\,\dot q = 0.
\)
Hence, the generalized reaction force is proportional to the corresponding constraint covector, and its components can be written as
\begin{equation}\label{Lem01:R=la}
    R_\theta\bigl(q(t),\dot q(t)\bigr)=0, \qquad
    R_x\bigl(q(t),\dot q(t)\bigr) = -\lambda(t)\sin\bigl(\theta(t)\bigr), \qquad
    R_y\bigl(q(t),\dot q(t)\bigr) = \lambda(t)\cos\bigl(\theta(t)\bigr),
\end{equation}
for some scalar function $\lambda(t)$.

Next, differentiate the constraint \eqref{Nonholonomic_constraint} along a motion:
\begin{align*}
    0
    &= \frac{d}{dt}\Bigl[\dot y\cos\theta - \dot x\sin\theta\Bigr] 
    = \ddot y\cos\theta - \dot y\sin\theta\,\dot\theta
       - \ddot x\sin\theta - \dot x\cos\theta\,\dot\theta.
\end{align*}
Substituting the expressions for $\ddot x$ and $\ddot y$ from \eqref{NE_eqns_with_R}, together with \eqref{Lem01:R=la}, we obtain
\begin{align*}
    0
    &= \left[\frac{\lambda}{m}\cos\theta\right]\cos\theta
       - \dot y\sin\theta\,\dot\theta
       - \left[-\frac{\lambda}{m}\sin\theta\right]\sin\theta
       - \dot x\cos\theta\,\dot\theta \\    &
       = \frac{\lambda}{m}\bigl[\cos^2\theta+\sin^2\theta\bigr]
       - \dot y\sin\theta\,\dot\theta
       - \dot x\cos\theta\,\dot\theta 
       = \frac{\lambda}{m}
       - \bigl[\dot y\sin\theta+\dot x\cos\theta\bigr]\dot\theta.
\end{align*}
Therefore,
\begin{equation*}
    \lambda(t)
    =
    m\bigl[\dot y(t)\sin\bigl(\theta(t)\bigr)+\dot x(t)\cos\bigl(\theta(t)\bigr)\bigr]\dot\theta(t).
\end{equation*}
Substituting this expression into \eqref{Lem01:R=la} and then into \eqref{NE_eqns_with_R} yields \eqref{Constrained_dynamics}.
\end{proof}
\section{Proof of Lemma~\ref{Lem:V=const}} 
\begin{proof}
The statement follows from the identity
\begin{align*}
    \frac{d}{dt}\bigl[\dot x(t)^2+\dot y(t)^2\bigr]
    &=
    2\dot x(t)\ddot x(t)+2\dot y(t)\ddot y(t) \\
    &=
    2\dot x(t)\bigl[-\la(t)\sin\bigl(\theta(t)\bigr)\bigr]
    +2\dot y(t)\bigl[\la(t)\cos\bigl(\theta(t)\bigr)\bigr] \\    &
    =
    2\la(t)\Bigl[-\dot x(t)\sin\bigl(\theta(t)\bigr)+\dot y(t)\cos\bigl(\theta(t)\bigr)\Bigr].
\end{align*}
By the constraint \eqref{Nonholonomic_constraint}, the expression in brackets is identically zero. Hence,
 \(   \frac{d}{dt}\bigl[\dot x(t)^2+\dot y(t)^2\bigr]=0.\)
Therefore, the scalar function
\[
    v(t)=\sqrt{\dot x(t)^2+\dot y(t)^2}
\]
remains constant along every solution of \eqref{Constrained_dynamics}, independently of the control input $u(t)$.
\end{proof}


\section{Proof of Lemma~\ref{Lem:solutions_of_(1)_are_solutions_of_(8)}} 
\begin{proof}
Let $\bigl[\theta^\star(t);x^\star(t);y^\star(t)\bigr]$ be a set of $C^2$ functions satisfying \eqref{kinematics_for_XY}--\eqref{kinematics_constraint_for_V}. To prove the statement, it is enough to check that $x^\star(t)$ and $y^\star(t)$ satisfy the first two equations of \eqref{Constrained_dynamics}, and then define a control input $u(t)$ so that the third equation is also satisfied.

Differentiating \eqref{kinematics_for_XY} for $x^\star(t)$ and using \eqref{kinematics_constraint_for_V}, we obtain
\begin{align*}
    \ddot x^\star(t)
    &= \frac{d}{dt}\Bigl[v\cos\bigl(\theta^\star(t)\bigr)\Bigr] 
    = \dot v \cos\bigl(\theta^\star(t)\bigr)
       - v\sin\bigl(\theta^\star(t)\bigr)\dot\theta^\star(t) 
       = -v\sin\bigl(\theta^\star(t)\bigr)\dot\theta^\star(t).
\end{align*}
Since \eqref{kinematics_for_XY} implies
\[
    v = \dot x^\star(t)\cos\bigl(\theta^\star(t)\bigr)
      + \dot y^\star(t)\sin\bigl(\theta^\star(t)\bigr),
\]
it follows that
\begin{equation*}
    \ddot x^\star(t)
    =
    -\Bigl[\dot y^\star(t)\sin\bigl(\theta^\star(t)\bigr)
      + \dot x^\star(t)\cos\bigl(\theta^\star(t)\bigr)\Bigr]
    \sin\bigl(\theta^\star(t)\bigr)\dot\theta^\star(t),
\end{equation*}
which coincides with the first equation of \eqref{Constrained_dynamics}.

Similarly,
\begin{align*}
    \ddot y^\star(t)
    &= \frac{d}{dt}\Bigl[v\sin\bigl(\theta^\star(t)\bigr)\Bigr] 
    = \dot v \sin\bigl(\theta^\star(t)\bigr)
       + v\cos\bigl(\theta^\star(t)\bigr)\dot\theta^\star(t) 
       = v\cos\bigl(\theta^\star(t)\bigr)\dot\theta^\star(t),
\end{align*}
and therefore
\begin{equation*}
    \ddot y^\star(t)
    =
    \Bigl[\dot y^\star(t)\sin\bigl(\theta^\star(t)\bigr)
      + \dot x^\star(t)\cos\bigl(\theta^\star(t)\bigr)\Bigr]
    \cos\bigl(\theta^\star(t)\bigr)\dot\theta^\star(t),
\end{equation*}
which coincides with the second equation of \eqref{Constrained_dynamics}.

Finally, since $\theta^\star(t)$ is a $C^2$ function, the third equation of \eqref{Constrained_dynamics} is satisfied by choosing
\begin{equation*}
    u(t) \triangleq J\ddot\theta^\star(t).
\end{equation*}
Hence, the triple $\bigl[\theta^\star(t);x^\star(t);y^\star(t)\bigr]$ is a solution of \eqref{Constrained_dynamics} for the control input $u(t)$ defined above.
\end{proof}

\section{Proof of Lemma~\ref{Lemma_on_motions_on_circle}} 
\begin{proof}
If a forced solution of the system \eqref{Nonholonomic_constraint}, \eqref{NE_eqns_with_R} remains on the circle of radius $r_c$ for all $t$, then it must have the form
\begin{equation}\label{Lem2:XYTh_c2}
    x_c(t)=r_c\cos\bigl(\psi(t)\bigr), \qquad
    y_c(t)=r_c\sin\bigl(\psi(t)\bigr), \qquad
    \theta_c(t)=\psi(t)+\frac{\pi}{2},
\end{equation}
where $\psi(t)$ is the polar angle of the point $\bigl(x_c(t),y_c(t)\bigr)$ on the circle.

By Lemma~\ref{Lemma_on_constraint_force}, the functions in \eqref{Lem2:XYTh_c2} must satisfy \eqref{Constrained_dynamics} for some control input $u_c(t)$. Differentiating \eqref{Lem2:XYTh_c2}, we obtain
\begin{align*}
    \dot x_c &= -r_c\sin(\psi)\dot\psi, \qquad
    \dot y_c = r_c\cos(\psi)\dot\psi, \qquad
    \dot\theta_c = \dot\psi, \\
    \ddot x_c &= -r_c\cos(\psi)\dot\psi^2-r_c\sin(\psi)\ddot\psi, \qquad
    \ddot y_c = -r_c\sin(\psi)\dot\psi^2+r_c\cos(\psi)\ddot\psi, \qquad
    \ddot\theta_c = \ddot\psi.
\end{align*}

Next, observe that along any motion of \eqref{Constrained_dynamics}, the weighted sum
\[
    \ddot x_c(t)\cos\bigl(\theta_c(t)\bigr)
    +\ddot y_c(t)\sin\bigl(\theta_c(t)\bigr)
\]
is identically zero, because the first two equations of \eqref{Constrained_dynamics} imply
$\ddot x\cos\theta+\ddot y\sin\theta = 0.$
Substituting the expressions above together with $\theta_c=\psi+\frac{\pi}{2}$, we obtain
\begin{align*}
    0
    &= \ddot x_c\cos\theta_c+\ddot y_c\sin\theta_c \\
    &= \bigl[-r_c\cos(\psi)\dot\psi^2-r_c\sin(\psi)\ddot\psi\bigr]
       \cos\Bigl(\psi+\frac{\pi}{2}\Bigr) 
       + \bigl[-r_c\sin(\psi)\dot\psi^2+r_c\cos(\psi)\ddot\psi\bigr]
       \sin\Bigl(\psi+\frac{\pi}{2}\Bigr) \\
    &= r_c\Bigl[\cos(\psi)\sin\Bigl(\psi+\frac{\pi}{2}\Bigr)
       -\sin(\psi)\cos\Bigl(\psi+\frac{\pi}{2}\Bigr)\Bigr]\ddot\psi \\    &\qquad
       -r_c\Bigl[\cos(\psi)\cos\Bigl(\psi+\frac{\pi}{2}\Bigr)
       +\sin(\psi)\sin\Bigl(\psi+\frac{\pi}{2}\Bigr)\Bigr]\dot\psi^2 \\
    &= r_c\sin\Bigl(\psi+\frac{\pi}{2}-\psi\Bigr)\ddot\psi
       -r_c\cos\Bigl(\psi+\frac{\pi}{2}-\psi\Bigr)\dot\psi^2 
       = r_c\ddot\psi.
\end{align*}
Hence,
\(
    \ddot\psi=0,
\)
and therefore
\(
    \psi(t)=c_1 t+c_0
\)
for some constants $c_0$ and $c_1$.
Define
\(
    \omega_0 \triangleq c_1, ~~
    \theta_0 \triangleq c_0+\frac{\pi}{2}.
\)
Then \eqref{Lem2:XYTh_c2} becomes
\[
    \theta_c(t)=\omega_0 t+\theta_0, \qquad
    x_c(t)=r_c\sin\bigl(\theta_c(t)\bigr), \qquad
    y_c(t)=-r_c\cos\bigl(\theta_c(t)\bigr),
\]
which is exactly the parametrization \eqref{Lem2:XYTh_c}.

Finally, since $\ddot\theta_c=\ddot\psi=0$, the third equation of \eqref{Constrained_dynamics} gives
\(
    u_c(t)=J\ddot\theta_c(t)\equiv 0.
\)
This proves both parts of the lemma.
\end{proof}


\section{Proof of Lemma~\ref{Lem3:Jacobian}} 
\begin{proof}
The explicit form of the Jacobian \eqref{Lem3:J} follows by differentiating each component of the vector function $X_\bot(X)$ defined in \eqref{Transverse_coordinates} with respect to the state vector
$X=\bigl[\theta;x;y;\dot\theta;\dot x;\dot y\bigr].$
In particular, for each $i=1,\dots,5$,
\begin{equation}\label{Lem3:dex_i}
    \delta x_{i_\bot}
    =
    \frac{\partial x_{i_\bot}}{\partial \theta}\delta\theta
    + \frac{\partial x_{i_\bot}}{\partial x}\delta x
    + \frac{\partial x_{i_\bot}}{\partial y}\delta y
    + \frac{\partial x_{i_\bot}}{\partial \dot\theta}\delta\dot\theta
    + \frac{\partial x_{i_\bot}}{\partial \dot x}\delta\dot x
    + \frac{\partial x_{i_\bot}}{\partial \dot y}\delta\dot y.
\end{equation}
Applying \eqref{Lem3:dex_i} to the five functions in \eqref{Transverse_coordinates} gives
\begin{align*}
    \delta x_{1_\bot}
    &=
    -r_c\cos(\theta)\,\delta\theta + \delta x, \\
    \delta x_{2_\bot}
    &=
    -r_c\sin(\theta)\,\delta\theta + \delta y, \\
    \delta x_{3_\bot}
    &=
    r_c\sin(\theta)\dot\theta\,\delta\theta
    -r_c\cos(\theta)\,\delta\dot\theta
    + \delta\dot x, \\
    \delta x_{4_\bot}
    &=
    -r_c\cos(\theta)\dot\theta\,\delta\theta
    -r_c\sin(\theta)\,\delta\dot\theta
    + \delta\dot y, \\
    \delta x_{5_\bot}
    &=
    \delta\dot\theta.
\end{align*}
These relations yield the Jacobian matrix \eqref{Lem3:J} with
\[
    j_1(X)\triangleq -r_c\cos(\theta),
    \qquad
    j_2(X)\triangleq -r_c\sin(\theta).
\]

To verify its rank, consider the $5\times 5$ minor formed by the last five columns of $J_\bot(X)$:
\[
    \left[
    \begin{array}{ccccc}
        1 & 0 & 0      & 0 & 0 \\
        0 & 1 & 0      & 0 & 0 \\
        0 & 0 & j_1(X) & 1 & 0 \\
        0 & 0 & j_2(X) & 0 & 1 \\
        0 & 0 & 1      & 0 & 0
    \end{array}
    \right].
\]
Its determinant is equal to $1$, independently of $X$. Therefore, $\rank J_\bot(X)=5$ for every state vector $X$.
\end{proof}

\section{Proof of Theorem~\ref{Thm1:TL}} 
\begin{proof}
The proof consists of five computational steps. In each step, we derive the variational dynamics of one of the transverse coordinates \eqref{Transverse_coordinates} in a neighborhood of the nominal motion $q^*(t)$ defined by \eqref{Lem2:XYTh_c}, for the nominal input $u^*(t)\equiv 0$.

For convenience, let
\begin{equation}\label{Thm1:xbot=}
    \de x_{i_\bot}(t)\triangleq
    \left.\frac{\partial x_{i_\bot}}{\partial \theta}\right|_{*}\de\theta
    + \left.\frac{\partial x_{i_\bot}}{\partial x}\right|_{*}\de x
    + \left.\frac{\partial x_{i_\bot}}{\partial y}\right|_{*}\de y
    + \left.\frac{\partial x_{i_\bot}}{\partial \dot\theta}\right|_{*}\de\dot\theta
    + \left.\frac{\partial x_{i_\bot}}{\partial \dot x}\right|_{*}\de\dot x
    + \left.\frac{\partial x_{i_\bot}}{\partial \dot y}\right|_{*}\de\dot y,
    \qquad i=1,\dots,5,
\end{equation}
where $\left.[\,\cdot\,]\right|_{*}$ denotes evaluation along the nominal motion and the nominal input.

\medskip
\noindent
{\bf Step 1.} Consider the first transverse coordinate
\(
    x_{1_\bot}(X)=x-r_c\sin(\theta).
\)
Along a solution $\bigl[X(t),u(t)\bigr]$ of \eqref{Constrained_dynamics}, its time derivative is
\begin{equation}\label{Thm1:dx1}
    \hbox{$\frac{d}{dt}$}\bigl[x_{1_\bot}(X(t))\bigr]
    =
    \hbox{$\frac{d}{dt}$}\bigl[x(t)-r_c\sin(\theta(t))\bigr]
    =
    \dot x(t)-r_c\cos(\theta(t))\dot\theta(t)
    \triangleq f_1(X(t),u(t)).
\end{equation}
Since $x_{1_\bot}$ vanishes on the nominal motion, we have
\begin{equation}\label{Thm1:x1}
    x_{1_\bot}(X(t))
    =
    \de x_{1_\bot}(t)+\bigl[\hbox{higher-order terms}\bigr].
\end{equation}
Next, linearize the function
\[
    f_1(X,u)=\dot x-r_c\cos(\theta)\dot\theta
\]
at the nominal motion. Since $f_1$ does not depend explicitly on $u$, the coefficient of $\de u$ is zero. Moreover,
\[
    \left.\frac{\partial f_1}{\partial x}\right|_{*}=0, \qquad
    \left.\frac{\partial f_1}{\partial y}\right|_{*}=0, \qquad
    \left.\frac{\partial f_1}{\partial \dot y}\right|_{*}=0,
\]
while
\[
    \left.\frac{\partial f_1}{\partial \theta}\right|_{*}
    =
    r_c\sin\bigl(\theta^*(t)\bigr)\dot\theta^*(t), \qquad
    \left.\frac{\partial f_1}{\partial \dot x}\right|_{*}=1, \qquad
    \left.\frac{\partial f_1}{\partial \dot\theta}\right|_{*}
    =
    -r_c\cos\bigl(\theta^*(t)\bigr).
\]
Hence,
\begin{align}
    f_1(X(t),u(t))
    &= \de\dot x
    + r_c\sin\bigl(\theta^*(t)\bigr)\dot\theta^*(t)\de\theta
    - r_c\cos\bigl(\theta^*(t)\bigr)\de\dot\theta
    + \bigl[\hbox{higher-order terms}\bigr] \nonumber\\
    &= \de x_{3_\bot}(t)+\bigl[\hbox{higher-order terms}\bigr],
    \label{Thm1:f1}
\end{align}
where the last identity follows from \eqref{Transverse_coordinates} and \eqref{Thm1:xbot=}. Substituting \eqref{Thm1:x1} and \eqref{Thm1:f1} into \eqref{Thm1:dx1}, we obtain
\begin{equation}\label{Thm1:dx1_03}
    \hbox{$\frac{d}{dt}$}\Bigl[\de x_{1_\bot}(t)+\bigl[\hbox{higher-order terms}\bigr]\Bigr]
    =
    \de x_{3_\bot}(t)+\bigl[\hbox{higher-order terms}\bigr].
\end{equation}
Therefore,
\begin{equation}\label{Thm1:Dx1}
    \hbox{$\frac{d}{dt}$}\de x_{1_\bot}(t)=\de x_{3_\bot}(t),
\end{equation}
which yields the first row of \eqref{Transverse_linearization}. In particular,
\[
    a_{11}(t)=a_{12}(t)=a_{14}(t)=a_{15}(t)=b_1(t)\equiv 0, \qquad a_{13}(t)\equiv 1.
\]

\medskip
\noindent
{\bf Step 2.} Consider the second transverse coordinate
\(
    x_{2_\bot}(X)=y+r_c\cos(\theta).
\)
Along a solution $\bigl[X(t),u(t)\bigr]$ of \eqref{Constrained_dynamics}, its time derivative is
\begin{equation}\label{Thm1:dx2}
    \hbox{$\frac{d}{dt}$}\bigl[x_{2_\bot}(X(t))\bigr]
    =
    \hbox{$\frac{d}{dt}$}\bigl[y(t)+r_c\cos(\theta(t))\bigr]
    =
    \dot y(t)-r_c\sin(\theta(t))\dot\theta(t)
    \triangleq f_2(X(t),u(t)).
\end{equation}
Since $x_{2_\bot}$ vanishes on the nominal motion, we have
\begin{equation}\label{Thm1:x2}
    x_{2_\bot}(X(t))
    =
    \de x_{2_\bot}(t)+\bigl[\hbox{higher-order terms}\bigr].
\end{equation}
Next, linearize the function
\[
    f_2(X,u)=\dot y-r_c\sin(\theta)\dot\theta
\]
at the nominal motion. Since $f_2$ does not depend explicitly on $u$, the coefficient of $\de u$ is zero. Moreover,
\[
    \left.\frac{\partial f_2}{\partial x}\right|_{*}=0, \qquad
    \left.\frac{\partial f_2}{\partial y}\right|_{*}=0, \qquad
    \left.\frac{\partial f_2}{\partial \dot x}\right|_{*}=0,
\]
while
\[
    \left.\frac{\partial f_2}{\partial \theta}\right|_{*}
    =
    -r_c\cos\bigl(\theta^*(t)\bigr)\dot\theta^*(t), \qquad
    \left.\frac{\partial f_2}{\partial \dot y}\right|_{*}=1, \qquad
    \left.\frac{\partial f_2}{\partial \dot\theta}\right|_{*}
    =
    -r_c\sin\bigl(\theta^*(t)\bigr).
\]
Hence,
\begin{align}
    f_2(X(t),u(t))
    &= \de\dot y
    - r_c\cos\bigl(\theta^*(t)\bigr)\dot\theta^*(t)\de\theta
    - r_c\sin\bigl(\theta^*(t)\bigr)\de\dot\theta
    + \bigl[\hbox{higher-order terms}\bigr] \nonumber\\
    &= \de x_{4_\bot}(t)+\bigl[\hbox{higher-order terms}\bigr],
    \label{Thm1:f2}
\end{align}
where the last identity again follows from \eqref{Transverse_coordinates} and \eqref{Thm1:xbot=}. Substituting \eqref{Thm1:x2} and \eqref{Thm1:f2} into \eqref{Thm1:dx2}, we obtain
\begin{equation}\label{Thm1:dx2_03}
    \hbox{$\frac{d}{dt}$}\Bigl[\de x_{2_\bot}(t)+\bigl[\hbox{higher-order terms}\bigr]\Bigr]
    =
    \de x_{4_\bot}(t)+\bigl[\hbox{higher-order terms}\bigr].
\end{equation}
Therefore,
\begin{equation}\label{Thm1:Dx2}
    \hbox{$\frac{d}{dt}$}\de x_{2_\bot}(t)=\de x_{4_\bot}(t),
\end{equation}
which yields the second row of \eqref{Transverse_linearization}. In particular,
\[
    a_{21}(t)=a_{22}(t)=a_{23}(t)=a_{25}(t)=b_2(t)\equiv 0, \qquad a_{24}(t)\equiv 1.
\]

\medskip
\noindent
{\bf Step 3.} Consider the third transverse coordinate
\(
    x_{3_\bot}(X)=\dot x-r_c\cos(\theta)\dot\theta.
\)
Along a perturbed solution $\bigl[X(t),u(t)\bigr]$ of \eqref{Constrained_dynamics}, its time derivative is
\begin{align}
    \hbox{$\frac{d}{dt}$}\bigl[x_{3_\bot}(X(t))\bigr]  &= \hbox{$\frac{d}{dt}$}\bigl[\dot x(t)-r_c\cos(\theta(t))\dot\theta(t)\bigr] 
    = \ddot x(t)+r_c\sin(\theta(t))\dot\theta(t)^2-r_c\cos(\theta(t))\ddot\theta(t) \nonumber\\
    &= -\Bigl[\dot y(t)\sin\bigl(\theta(t)\bigr)+\dot x(t)\cos\bigl(\theta(t)\bigr)\Bigr]
       \dot\theta(t)\sin\bigl(\theta(t)\bigr) \nonumber\\    &\qquad
       +r_c\sin\bigl(\theta(t)\bigr)\dot\theta(t)^2
       -\frac{r_c}{J}\cos\bigl(\theta(t)\bigr)u(t)
       \triangleq f_3(X(t),u(t)).
    \label{Thm1:dx3}
\end{align}
Since $x_{3_\bot}$ vanishes on the nominal motion, we have
\begin{equation}\label{Thm1:x3}
    x_{3_\bot}(X(t))
    =
    \de x_{3_\bot}(t)+\bigl[\hbox{higher-order terms}\bigr].
\end{equation}
Next, linearize the function
\[
    f_3(X,u)
    =
    -\Bigl[\dot y\sin(\theta)+\dot x\cos(\theta)\Bigr]\sin(\theta)\dot\theta
    +r_c\sin(\theta)\dot\theta^2
    -\frac{r_c}{J}\cos(\theta)u.
\]
Thus,
\begin{equation}
\begin{split}
    f_3(X(t),u(t))\label{Thm1:f3}
    &=
    a_{31}(t)\de x_{1_\bot}
    +a_{32}(t)\de x_{2_\bot}
    +a_{33}(t)\de x_{3_\bot}
    +a_{34}(t)\de x_{4_\bot}
    +a_{35}(t)\de x_{5_\bot}
    +b_3(t)\de u\\&
    +\bigl[\hbox{higher-order terms}\bigr].
		\end{split}
\end{equation}
At the same time, the first-order Taylor expansion of $f_3$ with respect to the independent state variations has the form
\begin{equation}\label{Thm1:f3_Taylor_series}
\begin{split}
    f_3(X(t),u(t))
    &=
    \left.\frac{\partial f_3}{\partial \theta}\right|_{*}\de\theta
    +\left.\frac{\partial f_3}{\partial \dot x}\right|_{*}\de\dot x
    +\left.\frac{\partial f_3}{\partial \dot y}\right|_{*}\de\dot y
    +\left.\frac{\partial f_3}{\partial \dot\theta}\right|_{*}\de\dot\theta
    +\left.\frac{\partial f_3}{\partial u}\right|_{*}\de u 
    +\bigl[\hbox{higher-order terms}\bigr],
\end{split}
\end{equation}
where
\begin{equation}\label{Thm1:paf3}
    \begin{array}{rcl}
        \displaystyle \left.\frac{\partial f_3}{\partial \dot x}\right|_{*}
        &=&
        -\cos\bigl(\theta^*(t)\bigr)\sin\bigl(\theta^*(t)\bigr)\dot\theta^*(t),\\[3mm]
        \displaystyle \left.\frac{\partial f_3}{\partial \dot y}\right|_{*}
        &=&
        -\sin^2\bigl(\theta^*(t)\bigr)\dot\theta^*(t),\\[3mm]
        \displaystyle \left.\frac{\partial f_3}{\partial \dot\theta}\right|_{*}
        &=&
        \Bigl[2r_c\dot\theta^*(t)
        -\sin\bigl(\theta^*(t)\bigr)\dot y^*(t)
        -\cos\bigl(\theta^*(t)\bigr)\dot x^*(t)\Bigr]
        \sin\bigl(\theta^*(t)\bigr),\\[3mm]
        \displaystyle \left.\frac{\partial f_3}{\partial u}\right|_{*}
        &=&
        -\frac{r_c}{J}\cos\bigl(\theta^*(t)\bigr).
    \end{array}
\end{equation}
The explicit expression for $\left.\frac{\partial f_3}{\partial \theta}\right|_{*}$ is not needed below. Equating the first-order terms in \eqref{Thm1:f3} and \eqref{Thm1:f3_Taylor_series}, we obtain
\begin{equation}\label{Thm1_identity}
\begin{split}
    &a_{31}(t)\de x_{1_\bot}
    +a_{32}(t)\de x_{2_\bot}
    +a_{33}(t)\de x_{3_\bot}
    +a_{34}(t)\de x_{4_\bot}
    +a_{35}(t)\de x_{5_\bot}
    +b_3(t)\de u \\
    &\qquad\equiv
    \left.\frac{\partial f_3}{\partial \theta}\right|_{*}\de\theta
    +\left.\frac{\partial f_3}{\partial \dot x}\right|_{*}\de\dot x
    +\left.\frac{\partial f_3}{\partial \dot y}\right|_{*}\de\dot y
    +\left.\frac{\partial f_3}{\partial \dot\theta}\right|_{*}\de\dot\theta
    +\left.\frac{\partial f_3}{\partial u}\right|_{*}\de u.
\end{split}
\end{equation}
Here the variations $\de x_\bot=\bigl[\de x_{1_\bot};\dots;\de x_{5_\bot}\bigr]$ on the left-hand side are related to the independent state variations by
\[
    \de x_\bot = J_\bot(X(t))
    \left[\!\begin{array}{c}
        \de q\\
        \de\dot q
    \end{array}\!\right],
\]
with the Jacobian matrix $J_\bot(X)$ defined in \eqref{Lem3:J}. Since the independent variations $\de x$ and $\de y$ do not appear on the right-hand side of \eqref{Thm1_identity}, while they enter the left-hand side through $\de x_{1_\bot}$ and $\de x_{2_\bot}$, it follows that
\begin{equation}\label{Thm1:a31a32=0}
    a_{31}(t)\equiv 0,
    \qquad
    a_{32}(t)\equiv 0.
\end{equation}
Similarly,
\begin{equation}\label{Thm1:b3=b3}
    b_3(t)
    =
    \left.\frac{\partial f_3}{\partial u}\right|_{*}
    =
    -\frac{r_c}{J}\cos\bigl(\theta^*(t)\bigr).
\end{equation}
Substituting \eqref{Thm1:a31a32=0} and \eqref{Thm1:b3=b3} into \eqref{Thm1_identity}, and using
\[
    \de x_{3_\bot}
    =
    \de\dot x
    +r_c\sin\bigl(\theta^*(t)\bigr)\dot\theta^*(t)\de\theta
    -r_c\cos\bigl(\theta^*(t)\bigr)\de\dot\theta,
\]\[
    \de x_{4_\bot}
    =
    \de\dot y
    -r_c\cos\bigl(\theta^*(t)\bigr)\dot\theta^*(t)\de\theta
    -r_c\sin\bigl(\theta^*(t)\bigr)\de\dot\theta,
\]
and $\de x_{5_\bot}=\de\dot\theta$, we obtain
\begin{equation}\label{Thm1_identity_02}
\begin{split}
    &a_{33}(t)\de x_{3_\bot}
    +a_{34}(t)\de x_{4_\bot}
    +a_{35}(t)\de x_{5_\bot} 
    \equiv
    \left.\frac{\partial f_3}{\partial \theta}\right|_{*}\de\theta
    +\left.\frac{\partial f_3}{\partial \dot x}\right|_{*}\de\dot x
    +\left.\frac{\partial f_3}{\partial \dot y}\right|_{*}\de\dot y
    +\left.\frac{\partial f_3}{\partial \dot\theta}\right|_{*}\de\dot\theta.
\end{split}
\end{equation}
Expanding the left-hand side in the independent variations yields
\begin{equation}\label{Thm1_identity_03}
\begin{split}
    &a_{33}(t)\de\dot x
    +a_{34}(t)\de\dot y 
    +\Bigl[a_{35}(t)-a_{33}(t)r_c\cos\bigl(\theta^*(t)\bigr)-a_{34}(t)r_c\sin\bigl(\theta^*(t)\bigr)\Bigr]\de\dot\theta
    +\bigl[\dots\bigr]\de\theta \\
    &\qquad\equiv
    \left.\frac{\partial f_3}{\partial \theta}\right|_{*}\de\theta
    +\left.\frac{\partial f_3}{\partial \dot x}\right|_{*}\de\dot x
    +\left.\frac{\partial f_3}{\partial \dot y}\right|_{*}\de\dot y
    +\left.\frac{\partial f_3}{\partial \dot\theta}\right|_{*}\de\dot\theta.
\end{split}
\end{equation}
Matching the coefficients of the independent variations $\de\dot x$, $\de\dot y$, and $\de\dot\theta$, we obtain
\begin{eqnarray}
    a_{33}(t)
    &=&
    \left.\frac{\partial f_3}{\partial \dot x}\right|_{*}
    =
    -\cos\bigl(\theta^*(t)\bigr)\sin\bigl(\theta^*(t)\bigr)\dot\theta^*(t),\label{Thm1:a33}\\
    a_{34}(t)
    &=&
    \left.\frac{\partial f_3}{\partial \dot y}\right|_{*}
    =
    -\sin^2\bigl(\theta^*(t)\bigr)\dot\theta^*(t),\label{Thm1:a34}\\
    a_{35}(t)
    &=&
    \left.\frac{\partial f_3}{\partial \dot\theta}\right|_{*}
    +a_{33}(t)r_c\cos\bigl(\theta^*(t)\bigr)
    +a_{34}(t)r_c\sin\bigl(\theta^*(t)\bigr)\nonumber\\
    &=&
    \Bigl[2r_c\dot\theta^*(t)
    -\sin\bigl(\theta^*(t)\bigr)\dot y^*(t)
    -\cos\bigl(\theta^*(t)\bigr)\dot x^*(t)\Bigr]
    \sin\bigl(\theta^*(t)\bigr)\nonumber\\    &&\quad
    -r_c\cos^2\bigl(\theta^*(t)\bigr)\sin\bigl(\theta^*(t)\bigr)\dot\theta^*(t)
    -r_c\sin^3\bigl(\theta^*(t)\bigr)\dot\theta^*(t)\nonumber\\
    &=&
    r_c\sin\bigl(\theta^*(t)\bigr)\dot\theta^*(t)
    \Bigl[
        2
        -\cos^2\bigl(\theta^*(t)\bigr)
        -\sin^2\bigl(\theta^*(t)\bigr)
        -1
    \Bigr]
    =0,
    \label{Thm1:a35}
\end{eqnarray}
where in the last step we used
\[
    \dot x^*(t)=r_c\cos\bigl(\theta^*(t)\bigr)\dot\theta^*(t),
    \qquad
    \dot y^*(t)=r_c\sin\bigl(\theta^*(t)\bigr)\dot\theta^*(t),
\]
together with $\cos^2(\theta^*(t))+\sin^2(\theta^*(t))=1$. Therefore,
\begin{equation}\label{Thm1:Dx3}
    \hbox{$\frac{d}{dt}$}\de x_{3_\bot}(t)
    =
    a_{33}(t)\de x_{3_\bot}(t)
    +a_{34}(t)\de x_{4_\bot}(t)
    +b_3(t)\de u(t).
\end{equation}

\medskip
\noindent
{\bf Step 4.} Consider the fourth transverse coordinate
\(
    x_{4_\bot}(X)=\dot y-r_c\sin(\theta)\dot\theta.
\)
Along a perturbed solution $\bigl[X(t),u(t)\bigr]$ of \eqref{Constrained_dynamics}, its time derivative is
\begin{align}
    \hbox{$\frac{d}{dt}$}\bigl[x_{4_\bot}(X(t))\bigr]
    &= \hbox{$\frac{d}{dt}$}\bigl[\dot y(t)-r_c\sin(\theta(t))\dot\theta(t)\bigr] 
    = \ddot y(t)-r_c\cos(\theta(t))\dot\theta(t)^2-r_c\sin(\theta(t))\ddot\theta(t) \nonumber\\
    &= \Bigl[\dot y(t)\sin\bigl(\theta(t)\bigr)+\dot x(t)\cos\bigl(\theta(t)\bigr)\Bigr]
       \dot\theta(t)\cos\bigl(\theta(t)\bigr) \nonumber\\    &\qquad
       -r_c\cos\bigl(\theta(t)\bigr)\dot\theta(t)^2
       -\frac{r_c}{J}\sin\bigl(\theta(t)\bigr)u(t)
       \triangleq f_4(X(t),u(t)).
    \label{Thm1:dx4}
\end{align}
Since $x_{4_\bot}$ vanishes on the nominal motion, we have
\begin{equation}\label{Thm1:x4}
    x_{4_\bot}(X(t))
    =
    \de x_{4_\bot}(t)+\bigl[\hbox{higher-order terms}\bigr].
\end{equation}
Next, linearize the function
\[
    f_4(X,u)
    =
    \Bigl[\dot y\sin(\theta)+\dot x\cos(\theta)\Bigr]\cos(\theta)\dot\theta
    -r_c\cos(\theta)\dot\theta^2
    -\frac{r_c}{J}\sin(\theta)u.
\]
Thus,
\begin{equation}
\begin{split}
\label{Thm1:f4}
    f_4(X(t),u(t))
    &=
    a_{41}(t)\de x_{1_\bot}
    +a_{42}(t)\de x_{2_\bot}
    +a_{43}(t)\de x_{3_\bot}
    +a_{44}(t)\de x_{4_\bot}
    +a_{45}(t)\de x_{5_\bot}
    +b_4(t)\de u \\&
    +\bigl[\hbox{higher-order terms}\bigr].
\end{split}
\end{equation}
At the same time, the first-order Taylor expansion of $f_4$ with respect to the independent state variations has the form
\begin{equation}\label{Thm1:f4_Taylor_series}
\begin{split}
    f_4(X(t),u(t))
    &=
    \left.\frac{\partial f_4}{\partial \theta}\right|_{*}\de\theta
    +\left.\frac{\partial f_4}{\partial \dot x}\right|_{*}\de\dot x
    +\left.\frac{\partial f_4}{\partial \dot y}\right|_{*}\de\dot y
    +\left.\frac{\partial f_4}{\partial \dot\theta}\right|_{*}\de\dot\theta
    +\left.\frac{\partial f_4}{\partial u}\right|_{*}\de u 
    +\bigl[\hbox{higher-order terms}\bigr],
\end{split}
\end{equation}
where
\begin{equation}\label{Thm1:paf4}
    \begin{array}{rcl}
        \displaystyle \left.\frac{\partial f_4}{\partial \dot x}\right|_{*}
        &=&
        \cos^2\bigl(\theta^*(t)\bigr)\dot\theta^*(t),\\[3mm]
        \displaystyle \left.\frac{\partial f_4}{\partial \dot y}\right|_{*}
        &=&
        \sin\bigl(\theta^*(t)\bigr)\cos\bigl(\theta^*(t)\bigr)\dot\theta^*(t),\\[3mm]
        \displaystyle \left.\frac{\partial f_4}{\partial \dot\theta}\right|_{*}
        &=&
        \Bigl[
            \sin\bigl(\theta^*(t)\bigr)\dot y^*(t)
            +\cos\bigl(\theta^*(t)\bigr)\dot x^*(t)
            -2r_c\dot\theta^*(t)
        \Bigr]\cos\bigl(\theta^*(t)\bigr),\\[3mm]
        \displaystyle \left.\frac{\partial f_4}{\partial u}\right|_{*}
        &=&
        -\frac{r_c}{J}\sin\bigl(\theta^*(t)\bigr).
    \end{array}
\end{equation}
The explicit expression for $\left.\frac{\partial f_4}{\partial \theta}\right|_{*}$ is not needed below. Equating the first-order terms in \eqref{Thm1:f4} and \eqref{Thm1:f4_Taylor_series}, we obtain
\begin{equation}\label{Thm1_identity4}
\begin{split}
    &a_{41}(t)\de x_{1_\bot}
    +a_{42}(t)\de x_{2_\bot}
    +a_{43}(t)\de x_{3_\bot}
    +a_{44}(t)\de x_{4_\bot}
    +a_{45}(t)\de x_{5_\bot}
    +b_4(t)\de u \\
    &\qquad\equiv
    \left.\frac{\partial f_4}{\partial \theta}\right|_{*}\de\theta
    +\left.\frac{\partial f_4}{\partial \dot x}\right|_{*}\de\dot x
    +\left.\frac{\partial f_4}{\partial \dot y}\right|_{*}\de\dot y
    +\left.\frac{\partial f_4}{\partial \dot\theta}\right|_{*}\de\dot\theta
    +\left.\frac{\partial f_4}{\partial u}\right|_{*}\de u.
\end{split}
\end{equation}
As in Step 3, the variations $\de x_\bot=\bigl[\de x_{1_\bot};\dots;\de x_{5_\bot}\bigr]$ on the left-hand side are related to the independent state variations by
\[
    \de x_\bot = J_\bot(X(t))
    \left[\!\begin{array}{c}
        \de q\\
        \de\dot q
    \end{array}\!\right],
\]
with the Jacobian matrix $J_\bot(X)$ defined in \eqref{Lem3:J}. Since the independent variations $\de x$ and $\de y$ do not appear on the right-hand side of \eqref{Thm1_identity4}, while they enter the left-hand side through $\de x_{1_\bot}$ and $\de x_{2_\bot}$, it follows that
\begin{equation}\label{Thm1:a41a42=0}
    a_{41}(t)\equiv 0,
    \qquad
    a_{42}(t)\equiv 0.
\end{equation}
Similarly,
\begin{equation}\label{Thm1:b4=b4}
    b_4(t)
    =
    \left.\frac{\partial f_4}{\partial u}\right|_{*}
    =
    -\frac{r_c}{J}\sin\bigl(\theta^*(t)\bigr).
\end{equation}
Substituting \eqref{Thm1:a41a42=0} and \eqref{Thm1:b4=b4} into \eqref{Thm1_identity4}, and using
\[
    \de x_{3_\bot}
    =
    \de\dot x
    +r_c\sin\bigl(\theta^*(t)\bigr)\dot\theta^*(t)\de\theta
    -r_c\cos\bigl(\theta^*(t)\bigr)\de\dot\theta,
\]\[
    \de x_{4_\bot}
    =
    \de\dot y
    -r_c\cos\bigl(\theta^*(t)\bigr)\dot\theta^*(t)\de\theta
    -r_c\sin\bigl(\theta^*(t)\bigr)\de\dot\theta,
\]
and $\de x_{5_\bot}=\de\dot\theta$, we obtain
\begin{equation}\label{Thm1_identity_04}
\begin{split}
    &a_{43}(t)\de x_{3_\bot}
    +a_{44}(t)\de x_{4_\bot}
    +a_{45}(t)\de x_{5_\bot} 
    \equiv
    \left.\frac{\partial f_4}{\partial \theta}\right|_{*}\de\theta
    +\left.\frac{\partial f_4}{\partial \dot x}\right|_{*}\de\dot x
    +\left.\frac{\partial f_4}{\partial \dot y}\right|_{*}\de\dot y
    +\left.\frac{\partial f_4}{\partial \dot\theta}\right|_{*}\de\dot\theta.
\end{split}
\end{equation}
Expanding the left-hand side in the independent variations yields
\begin{equation}\label{Thm1_identity_04a}
\begin{split}
    &a_{43}(t)\de\dot x
    +a_{44}(t)\de\dot y 
    +\Bigl[a_{45}(t)-a_{43}(t)r_c\cos\bigl(\theta^*(t)\bigr)-a_{44}(t)r_c\sin\bigl(\theta^*(t)\bigr)\Bigr]\de\dot\theta
    +\bigl[\dots\bigr]\de\theta \\
    &\qquad\equiv
    \left.\frac{\partial f_4}{\partial \theta}\right|_{*}\de\theta
    +\left.\frac{\partial f_4}{\partial \dot x}\right|_{*}\de\dot x
    +\left.\frac{\partial f_4}{\partial \dot y}\right|_{*}\de\dot y
    +\left.\frac{\partial f_4}{\partial \dot\theta}\right|_{*}\de\dot\theta.
\end{split}
\end{equation}
Matching the coefficients of the independent variations $\de\dot x$, $\de\dot y$, and $\de\dot\theta$, we obtain
\begin{eqnarray}
    \!\!\!\! \!\!\!\!a_{43}(t)
    &=&
    \left.\frac{\partial f_4}{\partial \dot x}\right|_{*}
    =
    \cos^2\bigl(\theta^*(t)\bigr)\dot\theta^*(t),\label{Thm1:a43}\\
     \!\!\!\! \!\!\!\!a_{44}(t)
    &=&
    \left.\frac{\partial f_4}{\partial \dot y}\right|_{*}
    =
    \sin\bigl(\theta^*(t)\bigr)\cos\bigl(\theta^*(t)\bigr)\dot\theta^*(t),\label{Thm1:a44}\\
     \!\!\!\! \!\!\!\!a_{45}(t)
    &=&
    \left.\frac{\partial f_4}{\partial \dot\theta}\right|_{*}
    +a_{43}(t)r_c\cos\bigl(\theta^*(t)\bigr)
    +a_{44}(t)r_c\sin\bigl(\theta^*(t)\bigr)\nonumber\\
    &=&
    \Bigl[
        \sin\bigl(\theta^*(t)\bigr)\dot y^*(t)
        +\cos\bigl(\theta^*(t)\bigr)\dot x^*(t)
        -2r_c\dot\theta^*(t)
    \Bigr]\cos\bigl(\theta^*(t)\bigr)\nonumber\\    &&\quad
    +r_c\cos^3\bigl(\theta^*(t)\bigr)\dot\theta^*(t)
    +r_c\sin^2\bigl(\theta^*(t)\bigr)\cos\bigl(\theta^*(t)\bigr)\dot\theta^*(t)\nonumber\\
    &=&
    r_c\cos\bigl(\theta^*(t)\bigr)\dot\theta^*(t)
    \Bigl[
        \sin^2\bigl(\theta^*(t)\bigr)
        +\cos^2\bigl(\theta^*(t)\bigr)
        -2
        +\cos^2\bigl(\theta^*(t)\bigr)
        +\sin^2\bigl(\theta^*(t)\bigr)
    =0,
    \label{Thm1:a45}
\end{eqnarray}
where in the last step we used
\[
    \dot x^*(t)=r_c\cos\bigl(\theta^*(t)\bigr)\dot\theta^*(t),
    \qquad
    \dot y^*(t)=r_c\sin\bigl(\theta^*(t)\bigr)\dot\theta^*(t),
\]
together with $\cos^2(\theta^*(t))+\sin^2(\theta^*(t))=1$. Therefore,
\begin{equation}\label{Thm1:Dx4}
    \hbox{$\frac{d}{dt}$}\de x_{4_\bot}(t)
    =
    a_{43}(t)\de x_{3_\bot}(t)
    +a_{44}(t)\de x_{4_\bot}(t)
    +b_4(t)\de u(t).
\end{equation}

\medskip
\noindent
{\bf Step 5.} Consider the fifth transverse coordinate
\(
    x_{5_\bot}(X)=\dot\theta-\om_0^*.
\)
Along a perturbed solution $\bigl[X(t),u(t)\bigr]$ of \eqref{Constrained_dynamics}, its time derivative is
\begin{equation}\label{Thm1:dx5}
    \hbox{$\frac{d}{dt}$}\bigl[x_{5_\bot}(X(t))\bigr]
    =
    \hbox{$\frac{d}{dt}$}\bigl[\dot\theta(t)-\om_0^*\bigr]
    =
    \ddot\theta(t)
    =
    \frac{1}{J}u(t)
    \triangleq f_5(X(t),u(t)).
\end{equation}
Since $x_{5_\bot}$ vanishes on the nominal motion, we have
\[
    x_{5_\bot}(X(t))
    =
    \de x_{5_\bot}(t)+\bigl[\hbox{higher-order terms}\bigr].
\]
Moreover, the function
\[
    f_5(X,u)=\frac{1}{J}u
\]
does not depend on the state variables. Hence, its first-order Taylor expansion at the nominal motion is simply
\[
    f_5(X(t),u(t))
    =
    \frac{1}{J}\de u(t)+\bigl[\hbox{higher-order terms}\bigr].
\]
Substituting these relations into \eqref{Thm1:dx5}, we obtain
\[
    \hbox{$\frac{d}{dt}$}\Bigl[\de x_{5_\bot}(t)+\bigl[\hbox{higher-order terms}\bigr]\Bigr]
    =
    \frac{1}{J}\de u(t)+\bigl[\hbox{higher-order terms}\bigr].
\]
Therefore,
\begin{equation}\label{Thm1:Dx5}
    \hbox{$\frac{d}{dt}$}\de x_{5_\bot}(t)=\frac{1}{J}\de u(t),
\end{equation}
which gives the fifth row of \eqref{Transverse_linearization}. Thus,
\(
    b_5(t)\equiv \frac{1}{J}.
\)

Collecting the relations obtained in Steps 1--5, namely \eqref{Thm1:Dx1}, \eqref{Thm1:Dx2}, \eqref{Thm1:a31a32=0}, \eqref{Thm1:b3=b3}, \eqref{Thm1:a33}--\eqref{Thm1:a35}, \eqref{Thm1:a41a42=0}, \eqref{Thm1:b4=b4}, \eqref{Thm1:a43}--\eqref{Thm1:a45}, and \eqref{Thm1:Dx5}, we obtain the variational dynamics \eqref{Transverse_linearization}.
\end{proof}

%

\section{Proof of Lemma~\ref{Lem4:invariant_subspace_of_TL}} 
\begin{proof}
We prove the three statements in order.

\medskip
\noindent
{\em Proof of statement 1.}
Consider the scalar function
\[
    I\bigl(t,\de X_\bot(t)\bigr)
    \triangleq
    \de x_{4_\bot}(t)\cos\bigl(\theta^*(t)\bigr)
    -\de x_{3_\bot}(t)\sin\bigl(\theta^*(t)\bigr).
\]
Differentiate it along a solution of \eqref{Transverse_linearization}:
\begin{align}
    \hbox{$\frac{d}{dt}$}I
    &=
    \hbox{$\frac{d}{dt}$}\bigl[\de x_{4_\bot}(t)\bigr]\cos\bigl(\theta^*(t)\bigr)
    +\de x_{4_\bot}(t)\hbox{$\frac{d}{dt}$}\bigl[\cos\bigl(\theta^*(t)\bigr)\bigr] \nonumber\\
    &\qquad
    -\hbox{$\frac{d}{dt}$}\bigl[\de x_{3_\bot}(t)\bigr]\sin\bigl(\theta^*(t)\bigr)
    -\de x_{3_\bot}(t)\hbox{$\frac{d}{dt}$}\bigl[\sin\bigl(\theta^*(t)\bigr)\bigr] \nonumber\\
    &=
    \bigl[a_{43}(t)\de x_{3_\bot}(t)+a_{44}(t)\de x_{4_\bot}(t)+b_4(t)\de u(t)\bigr]
    \cos\bigl(\theta^*(t)\bigr) \nonumber\\
    &\qquad
    -\de x_{4_\bot}(t)\sin\bigl(\theta^*(t)\bigr)\dot\theta^*(t) \nonumber\\
    &\qquad
    -\bigl[a_{33}(t)\de x_{3_\bot}(t)+a_{34}(t)\de x_{4_\bot}(t)+b_3(t)\de u(t)\bigr]
    \sin\bigl(\theta^*(t)\bigr) \nonumber\\
    &\qquad
    -\de x_{3_\bot}(t)\cos\bigl(\theta^*(t)\bigr)\dot\theta^*(t) \nonumber\\
    &=
    \varepsilon_3(t)\de x_{3_\bot}(t)
    +\varepsilon_4(t)\de x_{4_\bot}(t)
    +\varepsilon_u(t)\de u(t),
    \label{Lem4:dI}
\end{align}
where
\begin{align*}
    \varepsilon_3(t)
    &=
    a_{43}(t)\cos\bigl(\theta^*(t)\bigr)
    -a_{33}(t)\sin\bigl(\theta^*(t)\bigr)
    -\om_0^*\cos\bigl(\theta^*(t)\bigr),\\
    \varepsilon_4(t)
    &=
    a_{44}(t)\cos\bigl(\theta^*(t)\bigr)
    -a_{34}(t)\sin\bigl(\theta^*(t)\bigr)
    -\om_0^*\sin\bigl(\theta^*(t)\bigr),\\
    \varepsilon_u(t)
    &=
    b_4(t)\cos\bigl(\theta^*(t)\bigr)
    -b_3(t)\sin\bigl(\theta^*(t)\bigr).
\end{align*}
Now substitute the coefficients from \eqref{Transverse_linearization}:
\begin{align*}
    \varepsilon_3(t)
    &=
    \Bigl[\om_0^*\cos^2\bigl(\theta^*(t)\bigr)\Bigr]\cos\bigl(\theta^*(t)\bigr)
    -\Bigl[-\om_0^*\sin\bigl(\theta^*(t)\bigr)\cos\bigl(\theta^*(t)\bigr)\Bigr]\sin\bigl(\theta^*(t)\bigr) 
    -\om_0^*\cos\bigl(\theta^*(t)\bigr) \\
    &=
    \om_0^*\cos\bigl(\theta^*(t)\bigr)
    \Bigl[\cos^2\bigl(\theta^*(t)\bigr)+\sin^2\bigl(\theta^*(t)\bigr)-1\Bigr]
    \equiv 0,\\[2mm]
    \varepsilon_4(t)
    &=
    \Bigl[\om_0^*\sin\bigl(\theta^*(t)\bigr)\cos\bigl(\theta^*(t)\bigr)\Bigr]\cos\bigl(\theta^*(t)\bigr)
    -\Bigl[-\om_0^*\sin^2\bigl(\theta^*(t)\bigr)\Bigr]\sin\bigl(\theta^*(t)\bigr)
    -\om_0^*\sin\bigl(\theta^*(t)\bigr) \\
    &=
    \om_0^*\sin\bigl(\theta^*(t)\bigr)
    \Bigl[\cos^2\bigl(\theta^*(t)\bigr)+\sin^2\bigl(\theta^*(t)\bigr)-1\Bigr]
    \equiv 0,\\[2mm]
    \varepsilon_u(t)
    &=
    \Bigl[-\frac{r_c}{J}\sin\bigl(\theta^*(t)\bigr)\Bigr]\cos\bigl(\theta^*(t)\bigr)
    -\Bigl[-\frac{r_c}{J}\cos\bigl(\theta^*(t)\bigr)\Bigr]\sin\bigl(\theta^*(t)\bigr)
    \equiv 0.
\end{align*}
Hence, the right-hand side of \eqref{Lem4:dI} is identically zero, that is,
\[
    \hbox{$\frac{d}{dt}$}I\bigl(t,\de X_\bot(t)\bigr)\equiv 0
    \qquad \forall t\ge 0.
\]
Therefore, the value of $I(\cdot)$ remains constant along every solution, independently of the control input $\de u(t)$.

\medskip
\noindent
{\em Proof of statement 2.}
Assume that the initial condition satisfies
\[
    I\bigl(0,\de X_\bot(0)\bigr)
    =
    \de x_{4_\bot}(0)\cos\bigl(\theta^*(0)\bigr)
    -\de x_{3_\bot}(0)\sin\bigl(\theta^*(0)\bigr)
    \neq 0.
\]
Then the corresponding solution of \eqref{Transverse_linearization} cannot converge to the origin, because statement 1 implies that $I\bigl(t,\de X_\bot(t)\bigr)$ is constant for all $t\ge 0$. If the solution converged to the origin, then necessarily
$I\bigl(t,\de X_\bot(t)\bigr)\to 0 \hbox{  as  } t\to+\infty,$ 
which contradicts the fact that this quantity is constant and nonzero at \(t=0\). Therefore, the origin cannot be asymptotically stabilized by any control input, and the linear control system \eqref{Transverse_linearization} is not stabilizable.

\medskip
\noindent
{\em Proof of statement 3.}
To prove the last statement, observe that the vector function $\vv{\varphi}_4(t)$ defined in \eqref{Lem5:z4} is a solution of the homogeneous linear system \eqref{Lem4:linear_system}. Since $\vv{\varphi}_4(t)$ contains terms growing linearly in $t$, this solution is unbounded. Hence, the origin of \eqref{Lem4:linear_system} is unstable, and the system possesses at least a one-dimensional subspace of unbounded solutions.
\end{proof}

\section{Proof of Theorem~\ref{Thm4:main_result} and related claims} 
\begin{proof}
Proof of the main result relies on the following technical statement, whose proof is given later in the Appendix.

\begin{statement}\label{Thm2:Z_transverse coordinates}
The state transformation \(T(X)\) for the nonlinear control system \eqref{Constrained_dynamics},
\begin{equation}\label{Thm2:X_to_Z1}
    X \triangleq \bigl[\theta;x;y;\dot\theta;\dot x;\dot y\bigr]
    \quad\xrightarrow{\;T\;}\quad
    \bigl[\theta;Z\bigr],\qquad \hbox{where}
\end{equation}
\begin{equation}\label{Thm2:X_to_Z2}
    Z \triangleq \bigl[z_1;z_2;z_3;z_4;z_5\bigr]
    =
    D\left[\Phi\!\left(\frac{\theta-\theta_0^*}{\om_0^*}\right)\right]^{-1}X_\bot(X),
\end{equation}
\[    
X_\bot(X)\triangleq \bigl[x_{1_\bot}(X);x_{2_\bot}(X);x_{3_\bot}(X);x_{4_\bot}(X);x_{5_\bot}(X)\bigr],
\]
with \(D\), \(\Phi(\cdot)\), and \(X_\bot(X)\) introduced in \eqref{Lem6:change_of_coordinates}, \eqref{Lem5:Phi}, and \eqref{Transverse_coordinates}, transforms the dynamics of \eqref{Constrained_dynamics} into
\begin{eqnarray}
    \hbox{$\frac{d}{dt}$}\theta
    &=&
    \om_0^*+\frac{1}{J}z_3+z_5(0), \label{Thm3:clp_system_theta_Z123}\\
    \hbox{$\frac{d}{dt}$}
    \left[\begin{array}{c}
        z_1\\[2mm]
        z_2\\[2mm]
        z_3
    \end{array}\right]
    &=&
    \left[\begin{array}{c}
        \phantom{-}\frac{1}{J}\cos(\theta)\,z_3^2\\[2mm]
        -\frac{1}{J}\sin(\theta)\,z_3^2\\[2mm]
        0
    \end{array}\right]
    +
    \left[\begin{array}{c}
        \phantom{-}\cos(\theta)\,z_3\\[2mm]
        -\sin(\theta)\,z_3\\[2mm]
        0
    \end{array}\right]z_5(0)
    +
    \left[\begin{array}{c}
        \sin(\theta)\\[2mm]
        \cos(\theta)\\[2mm]
        1
    \end{array}\right]u, \label{Thm3:clp_system_Z123}\\
    \hbox{$\frac{d}{dt}$}
    \left[\begin{array}{c}
        z_4\\
        z_5
    \end{array}\right]
    &=&
    \left[\begin{array}{c}
        0\\
        0
    \end{array}\right].
    \label{Thm2:Z2_dynamics}
\end{eqnarray}
Furthermore, the determinant of the Jacobian \(J_T(X)\) of the nonlinear state transformation \(T(X)\) defined by \eqref{Thm2:X_to_Z1}--\eqref{Thm2:X_to_Z2} is constant and nonzero on the nominal motion \(q^*(t)\).
\end{statement}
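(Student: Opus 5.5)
We will verify Claim~\ref{Thm2:Z_transverse coordinates} by direct differentiation, reusing the linear structure from Theorem~\ref{Thm1:TL} and Lemma~\ref{Lem6:Linear_system_in_new_coordinates}. Set $s\triangleq(\theta-\theta_0^*)/\om_0^*$, so that $\theta=\theta^*(s)$ and
\[
\dot s=\dot\theta/\om_0^*=1+x_{5_\bot}/\om_0^* .
\]

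\textbf{Step 1 (exact transverse dynamics).} Since $\theta$ together with $X_\bot$ determines $X$ uniquely, we first write the exact dynamics of $X_\bot$ as a function of $(\theta,X_\bot,u)$. We substitute $\dot x=x_{3_\bot}+r_c\cos\theta\,\dot\theta$, $\dot y=x_{4_\bot}+r_c\sin\theta\,\dot\theta$ and $\dot\theta=\om_0^*+x_{5_\bot}$ into $f_1,\dots,f_5$ from the proof of Theorem~\ref{Thm1:TL}. This yields
\[
\dot x_{1_\bot}=x_{3_\bot},\qquad \dot x_{2_\bot}=x_{4_\bot},\qquad \dot x_{5_\bot}=u/J,
\]
together with explicit expressions for $\dot x_{3_\bot}$ and $\dot x_{4_\bot}$. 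We then split the result as
\[
\dot X_\bot=\dot s\,A(s)X_\bot+B(s)u+R(\theta,X_\bot),
\]
where $A(\cdot)$ and $B(\cdot)$ are those of Theorem~\ref{Thm1:TL} evaluated at time $s$; this is possible because they depend on time only through $\theta^*$. We expect the remainder $R$ to be quadratic and to contain the factor $x_{5_\bot}$ or the product structure $[\dot y\sin\theta+\dot x\cos\theta]\dot\theta$.

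\textbf{Step 2 (transformation to $Z$).} From the proof of Lemma~\ref{Lem6:Linear_system_in_new_coordinates}, the fact that $L=\Phi D^{-1}$ removes the drift means $\Phi'=A\Phi$. Also, $DB_z$ is a constant vector and $D\Phi^{-1}(t)B(t)=B_z(t)$. Differentiating $Z=D\Phi(s)^{-1}X_\bot$ and using $\tfrac{d}{dt}\Phi(s)^{-1}=-\dot s\,\Phi(s)^{-1}A(s)$ gives
\[
\dot Z=D\Phi(s)^{-1}\bigl[\dot X_\bot-\dot s\,A(s)X_\bot\bigr]=B_z(s)u+D\Phi(s)^{-1}R .
\]
Note that $B_z(s)$ has entries $\sin\theta,\cos\theta,1$. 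It then remains to compute $D\Phi(s)^{-1}R$ explicitly, using the closed form of $\Phi$ in \eqref{Lem5:Phi}--\eqref{Lem5:z4}. We will rewrite $x_{5_\bot}$ through the last rows: inverting $D$ gives $x_{5_\bot}=z_5+z_3/J$, which is already \eqref{Thm3:clp_system_theta_Z123} once we show $z_5$ is constant.

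The two conservation laws in \eqref{Thm2:Z2_dynamics} have exact nonlinear origins. For $z_5$, we have $\dot z_5=\dot x_{5_\bot}-\dot z_3/J=u/J-u/J=0$. For $z_4$, we expect it to equal (up to scaling) the speed deviation $\sqrt{\dot x^2+\dot y^2}-\om_0^*r_c$ to the needed order. More precisely, the component of $R$ along the fourth row must vanish identically, and Lemma~\ref{Lem:V=const} is what should force this: it is the nonlinear counterpart of the invariant $I$ of Lemma~\ref{Lem4:invariant_subspace_of_TL}.

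\textbf{Step 3 (main obstacle).} The hardest part is this trigonometric bookkeeping. We must check that the first two rows of $D\Phi^{-1}R$ collapse exactly to $\pm\frac1J(\cos\theta,-\sin\theta)z_3^2+(\cos\theta,-\sin\theta)z_3z_5$, and that the rows for $z_3$ and $z_4$ receive no remainder. We will first try to handle this by computing $\Phi^{-1}$ in block form: its columns 1, 2 and 5 are trivial, and the middle $2\times2$ block $\bigl[\begin{smallmatrix}\cos & \varphi_{34}\\ \sin & \varphi_{44}\end{smallmatrix}\bigr]$ has constant determinant $\om_0^*$. We will then project $R$ onto the rotated frame $(\cos\theta,\sin\theta)$, $(-\sin\theta,\cos\theta)$, in which the constraint and speed terms separate.

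\textbf{Step 4 (Jacobian of $T$).} On the nominal motion $X_\bot=0$, so the derivative of $\Phi(s)^{-1}$ with respect to $\theta$ is multiplied by zero. Hence
\[
J_T=\begin{bmatrix}e_1\trn\\ D\Phi(t)^{-1}J_\bot\end{bmatrix}.
\]
Expanding along the $\theta$-row reduces $\det J_T$ to $\pm\det D\cdot\det\Phi(t)^{-1}$ times the last-five-column minor of $J_\bot$, which equals $1$ by Lemma~\ref{Lem3:Jacobian}. Next, $\det\Phi(t)$ is constant by Liouville's formula, since $\operatorname{tr}A=a_{33}+a_{44}=0$; evaluating it at any convenient time gives the nonzero value $\om_0^*$ from the middle block. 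Finally,
\[
\det D=d_{11}d_{22}d_{33}=J^3\om_0^{*2}/r_c^3\neq0,
\]
so $\det J_T$ is a nonzero constant on the nominal motion.
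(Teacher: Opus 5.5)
You have the right framework, but the proof has a genuine gap: Step 3, which is the whole content of the claim, is left undone, and what you expect from it would stop it from closing.

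\textbf{What matches the paper.} Your Steps 1, 2 and 4 follow the paper's route. The paper sets $Y=\Phi_\bot(\theta)^{-1}X_\bot$, uses $\omega_0^*\frac{d}{d\tau}\Phi_\bot=A_\bot\Phi_\bot$ together with $\dot\theta=\omega_0^*+x_{5_\bot}$, puts $Z=DY$, and factors the Jacobian as you do; it even gets $\det J_T\equiv\det D$ on the whole state space. Two slips there are harmless:
\begin{itemize}
\item The middle block has determinant $\cos\tau\,\varphi_{44}-\sin\tau\,\varphi_{34}=\cos^2\tau+\sin^2\tau=1$ with $\tau=\omega_0^*t+\theta_0^*$. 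So $\det\Phi\equiv 1$, not $\omega_0^*$.
\item $DB_z$ is not constant, but you never use it.
\end{itemize}

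\textbf{First problem: the remainder.} It needs no bookkeeping; it is exactly
\[
R=F_\bot(\theta,X_\bot)-\frac{x_{5_\bot}}{\omega_0^*}A_\bot(\theta)X_\bot=-\frac{x_{5_\bot}}{\omega_0^*}\bigl[x_{3_\bot};\,x_{4_\bot};\,0;\,0;\,0\bigr].
\]
The reason is that the exact nonlinear terms in $\dot x_{3_\bot}$ and $\dot x_{4_\bot}$ are
\begin{itemize}
\item $-x_{5_\bot}\sin\theta\,(x_{3_\bot}\cos\theta+x_{4_\bot}\sin\theta)=\frac{x_{5_\bot}}{\omega_0^*}(a_{33}x_{3_\bot}+a_{34}x_{4_\bot})$, and
\item $x_{5_\bot}\cos\theta\,(x_{3_\bot}\cos\theta+x_{4_\bot}\sin\theta)=\frac{x_{5_\bot}}{\omega_0^*}(a_{43}x_{3_\bot}+a_{44}x_{4_\bot})$.
\end{itemize}
$\Phi^{-1}$ is block upper triangular with $I_{2\times 2}$ in the upper-left corner, so $\Phi^{-1}R=R$. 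Hence $D\Phi^{-1}R=\bigl[-\frac{J}{r_c}x_{3_\bot}x_{5_\bot};\,\frac{J}{r_c}x_{4_\bot}x_{5_\bot};\,0;\,0;\,0\bigr]$. This gives $\dot z_3=u$ and $\dot z_4=\dot z_5=0$ identically in the state.

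Lemma~\ref{Lem:V=const} plays no role here. It also could not force an identity in $(\theta,X_\bot)$, since $\frac{d}{dt}(\dot x^2+\dot y^2)$ vanishes only on the constraint surface.

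\textbf{Second problem: $z_4$ and $z_5$ are reversed.} From the block forms of $\Phi^{-1}$ and $D$,
\[
z_4=x_{4_\bot}\cos\theta-x_{3_\bot}\sin\theta=\dot y\cos\theta-\dot x\sin\theta,
\]
which is the nonholonomic constraint \eqref{Nonholonomic_constraint} itself. The speed enters through $z_5$ (Claim~\ref{LemCC:z5=}). Inverting the transformation gives
\begin{itemize}
\item $x_{3_\bot}=-\frac{r_c}{J}\cos\theta\,z_3+(\theta\cos\theta-\sin\theta)z_4$,
\item $x_{4_\bot}=-\frac{r_c}{J}\sin\theta\,z_3+(\theta\sin\theta+\cos\theta)z_4$.
\end{itemize}
So the first two rows do \emph{not} collapse identically to the form in \eqref{Thm3:clp_system_Z123}. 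They keep the terms $-\frac{J}{r_c}(\theta\cos\theta-\sin\theta)z_4x_{5_\bot}$ and $\frac{J}{r_c}(\theta\sin\theta+\cos\theta)z_4x_{5_\bot}$, whose coefficients are not periodic.

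The claimed equations therefore hold only along Dubins-car motions, where $z_4\equiv 0$. The paper finishes exactly this way: it identifies $z_4$ with the constraint (Claim~\ref{Lem12:z4_is_nonholonomic_constraint}) and sets it to zero. Your plan expects an exact identity and never invokes the constraint, so as written it cannot reach the stated dynamics.
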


To prove Theorem~\ref{Thm4:main_result}, we first rewrite the dynamics of the closed-loop system \eqref{Constrained_dynamics}, \eqref{Thm4:nonlinear_feedback} in the coordinates \((\theta;Z)\) introduced by \eqref{Thm2:X_to_Z1}--\eqref{Thm2:X_to_Z2}.
By Claim~\ref{Thm2:Z_transverse coordinates}, the closed-loop system takes the form
\begin{eqnarray}
    \hbox{$\frac{d}{dt}$}\theta
    &=&
    \om_0^*+\frac{1}{J}z_3+z_5(0),
    \label{P:Thm4:theta_dynamics}\\
    \hbox{$\frac{d}{dt}$}
z
    &=&
    F_0(\theta,Z)+F_1(\theta,Z),
    \label{P:Thm4:Z1_dynamics}\\
    \hbox{$\frac{d}{dt}$}
    \left[\begin{array}{c}
        z_4\\
        z_5
    \end{array}\right]
    &=&
    \left[\begin{array}{c}
        0\\
        0
    \end{array}\right],
    \label{P:Thm4:Z2_dynamics}
\end{eqnarray}
where \quad\(z=\bigl[z_1;~z_2;~z_3\bigr]\),\quad while 
\[
    F_0(\theta,Z)\triangleq
    \left[\!\begin{array}{c}
        \phantom{-}\frac{1}{J}\cos\bigl(\theta\bigr)\,z_3\!\left(\frac{1}{J}z_3+z_5(0)\right)\\[2mm]
        -\frac{1}{J}\sin\bigl(\theta\bigr)\,z_3\!\left(\frac{1}{J}z_3+z_5(0)\right)\\[2mm]
        0
    \end{array}\!\right]=
        \left[\!\begin{array}{c}
        f_{01}(\theta;Z)\\[2mm]
        f_{02}(\theta;Z)\\[2mm]
        f_{03}(\theta;Z)
    \end{array}\!\right],
	\]\[
    F_1(\theta,Z)\triangleq
    \left[\!\begin{array}{c}
        \sin\bigl(\theta\bigr)\\[2mm]
        \cos\bigl(\theta\bigr)\\[2mm]
        1
    \end{array}\!\right]U\bigl(T^{-1}(\theta,Z)\bigr)=
        \left[\!\begin{array}{c}
        f_{11}(\theta;Z)\\[2mm]
        f_{12}(\theta;Z)\\[2mm]
        f_{13}(\theta;Z)
    \end{array}\!\right].
\]
Here the controller \(u=U(X)\) is expressed in the new coordinates \((\theta;Z)\).

By assumption, the linearization of the standard set of transverse coordinates \eqref{Transverse_coordinates} along the nominal motion \(q^*(t)\) of the closed-loop system \eqref{Constrained_dynamics}, \eqref{Thm4:nonlinear_feedback} possesses an invariant quadratically stable subspace of dimension three.

For the closed-loop system written in the form \eqref{P:Thm4:theta_dynamics}--\eqref{P:Thm4:Z2_dynamics}, this means that the linearization of subsystem \eqref{P:Thm4:Z1_dynamics},
\begin{equation}\label{P:Thm4:Z1_dynamics_linearization_01}
    \hbox{$\frac{d}{dt}$}
 \de z
    =
    A_{z_{clp}}\bigl(\theta^*(t)\bigr)
  \de z,
\end{equation}
where \quad\(\de z\triangleq \bigl[\de z_1;~\de z_2,~\de z_3\bigr]\)\quad and 
\begin{equation}\label{P:Thm4:Azclp}
    A_{z_{clp}}\bigl(\theta^*(t)\bigr)\triangleq
    \left.
    \left[\begin{array}{ccc}
        \frac{\partial}{\partial z_1}(f_{01}+f_{11}) &
        \frac{\partial}{\partial z_2}(f_{01}+f_{11}) &
        \frac{\partial}{\partial z_3}(f_{01}+f_{11}) \\[1mm]
        \frac{\partial}{\partial z_1}(f_{02}+f_{12}) &
        \frac{\partial}{\partial z_2}(f_{02}+f_{12}) &
        \frac{\partial}{\partial z_3}(f_{02}+f_{12}) \\[1mm]
        \frac{\partial}{\partial z_1}(f_{03}+f_{13}) &
        \frac{\partial}{\partial z_2}(f_{03}+f_{13}) &
        \frac{\partial}{\partial z_3}(f_{03}+f_{13})
    \end{array}\right]
    \right|_{\begin{array}{rcl}
        \theta&=&\theta^*(t)\\[-0.5mm]
        Z&=&0
    \end{array}}
\end{equation}
is quadratically stable.

Now note that every component of \(F_0(\theta,Z)\) contains the factor \(z_3\). In particular, along the nominal motion one has \(Z=0\), and therefore \(z_5(0)=0\). Hence,
\[
\left.
\left[\begin{array}{ccc}
    \frac{\partial f_{01}}{\partial z_1} & \frac{\partial f_{01}}{\partial z_2} & \frac{\partial f_{01}}{\partial z_3}\\[1mm]
    \frac{\partial f_{02}}{\partial z_1} & \frac{\partial f_{02}}{\partial z_2} & \frac{\partial f_{02}}{\partial z_3}\\[1mm]
    \frac{\partial f_{03}}{\partial z_1} & \frac{\partial f_{03}}{\partial z_2} & \frac{\partial f_{03}}{\partial z_3}
\end{array}\right]
\right|_{\begin{array}{rcl}
    \theta\!\!&\!\!=\!\!&\!\!\theta^*(t)\\[-0.5mm]
    Z\!\!&\!\!=\!\!&\!\!0
\end{array}}
=
\left[\begin{array}{ccc}
    0 & 0 & 0\\
    0 & 0 & 0\\
    0 & 0 & 0
\end{array}\right].
\]

Furthermore, since \(U(\cdot)\) is smooth and vanishes on the nominal motion, we obtain
\begin{align}
    A_{z_{clp}}\bigl(\theta^*(t)\bigr)
    &=
    \left.
    \left\{
    \left[\!\!\begin{array}{c}
        \sin\bigl(\theta\bigr)\\
        \cos\bigl(\theta\bigr)\\
        1
    \end{array}\!\!\right]
    \left[
        \frac{\partial U(T^{-1}(\theta,Z))}{\partial z_1},
        \frac{\partial U(T^{-1}(\theta,Z))}{\partial z_2},
        \frac{\partial U(T^{-1}(\theta,Z))}{\partial z_3}
    \right]
    \right\}
    \right|_{\begin{array}{rcl}
        \theta\!\!&\!\!=\!\!&\!\!\theta^*(t)\\[-0.5mm]
        Z\!\!&\!\!=\!\!&\!\!0
    \end{array}}
    \nonumber\\
    &=
    \left[\!\!\begin{array}{c}
        \sin\bigl(\theta^*(t)\bigr)\\
        \cos\bigl(\theta^*(t)\bigr)\\
        1
    \end{array}\!\!\right]
    \Bigl[K_1\bigl(\theta^*(t)\bigr),\,K_2\bigl(\theta^*(t)\bigr),\,K_3\bigl(\theta^*(t)\bigr)\Bigr].
    \label{P:Thm4:Z1_Az}
\end{align}
Hence, the matrix function \(A_{z_{clp}}\bigl(\theta^*(t)\bigr)\) is continuous and \(T_0\)-periodic, where \(T_0=\frac{2\pi}{\om_0^*}\).


Quadratic stability of the \(T_0\)-periodic linear system \eqref{P:Thm4:Z1_dynamics_linearization_01} implies the existence of a Floquet--Lyapunov transformation \cite{Yakubovich}
\begin{equation}\label{Thm4:Floquet-Lyapunov_transformation}
    L\bigl(\theta^*(t)\bigr)\de W
    =
    \de z,
    \qquad
    L\bigl(\theta^*(t)\bigr)=L\bigl(\theta^*(t+T_0)\bigr),
    \qquad
    \det L\bigl(\theta^*(t)\bigr)\neq 0,\ \forall t,
\end{equation}
such that the linear system \eqref{P:Thm4:Z1_dynamics_linearization_01}, expressed in the coordinates
\(
    \de W=\bigl[\de w_1;\de w_2;\de w_3\bigr],
\)
becomes time-invariant:
\begin{equation}\label{P:Thm4:Z1_dL_eqn}
    \hbox{$\frac{d}{dt}$}\de W
    =
    \left[L\bigl(\theta^*(t)\bigr)\right]^{-1}
    \Bigl[
        A_{z_{clp}}\bigl(\theta^*(t)\bigr)L\bigl(\theta^*(t)\bigr)
        -
        \hbox{$\frac{d}{dt}$}L\bigl(\theta^*(t)\bigr)
    \Bigr]\de W
    =
    R_0\de W,
\end{equation}
where \(R_0\) is a strictly Hurwitz matrix.

Comparing the left- and right-hand sides of \eqref{P:Thm4:Z1_dL_eqn}, we obtain the differential equation
\begin{equation}\label{P:Thm4:Z1_dL_eqn_2}
    \hbox{$\frac{d}{dt}$}L\bigl(\theta^*(t)\bigr)
    =
    A_{z_{clp}}\bigl(\theta^*(t)\bigr)L\bigl(\theta^*(t)\bigr)
    -
    L\bigl(\theta^*(t)\bigr)R_0.
\end{equation}
For later use, it is convenient to rewrite this relation in the form
\begin{equation}\label{P:Thm4:Z1_dL_eqn_3}
    \hbox{$\frac{d}{d\tau}$}L(\tau)\,\om_0^*
    =
    A_{z_{clp}}(\tau)L(\tau)-L(\tau)R_0,
    \qquad
    \tau=\theta^*(t).
\end{equation}

Finally, along any perturbed motion of the Dubins car, the variables \(z_4(t)\equiv 0\) and \(z_5(t)\equiv z_5(0)\) are constant; see \eqref{Thm2:Z2_dynamics}. Therefore, in order to establish orbital stability of \(q^*(t)\), it remains to analyze the evolution of \(z_1(t)\), \(z_2(t)\), and \(z_3(t)\) along solutions of the closed-loop system \eqref{Thm4:nonlinear_feedback}, \eqref{Thm3:clp_system_theta_Z123}, and \eqref{Thm3:clp_system_Z123}, or equivalently \eqref{P:Thm4:theta_dynamics}--\eqref{P:Thm4:Z2_dynamics}.


To this end, let us consider the closed-loop system \eqref{P:Thm4:theta_dynamics}--\eqref{P:Thm4:Z2_dynamics} in the new variables
\begin{equation}\label{P:Thm4:Tw_transform}
    \bigl[\theta;z_1;z_2;z_3;z_4;z_5\bigr]
    \xrightarrow{\;T_w\;}
    \bigl[\theta;W;z_4;z_5\bigr],
    \qquad
    W\triangleq \bigl[w_1;w_2;w_3\bigr],
    \qquad
    z\triangleq \bigl[z_1;z_2;z_3\bigr]=L(\theta)W,
\end{equation}
where the matrix function \(L(\cdot)\) is the Floquet--Lyapunov transformation introduced in \eqref{Thm4:Floquet-Lyapunov_transformation}.

In the new coordinates, \eqref{P:Thm4:theta_dynamics} becomes
\begin{equation}\label{P:Thm4:theta_dynamics_W}
    \hbox{$\frac{d}{dt}$}\theta
    =
    \om_0^*+\ga(\theta)W+z_5(0),
    \qquad
    \ga(\theta)\triangleq \frac{1}{J}\bigl[0,\,0,\,1\bigr]L(\theta).
\end{equation}

Next, differentiating the relation \(z=L(\theta)W\), we obtain
\[
    \hbox{$\frac{d}{dt}$}z
    =
    \hbox{$\frac{d}{dt}$}L(\theta)\,W
    +L(\theta)\hbox{$\frac{d}{dt}$}W.
\]
Using \eqref{P:Thm4:Z1_dynamics}, this gives
\begin{equation}\label{P:Thm4:W_dynamics_sum}
    \hbox{$\frac{d}{dt}$}W
    =
    \bigl[L(\theta)\bigr]^{-1}
    \Bigl[
        F_0(\theta;Z)+F_1(\theta;Z)
        -
        \hbox{$\frac{d}{dt}$}L(\theta)\,W
    \Bigr].
\end{equation}
To rewrite the last term, observe that by \eqref{P:Thm4:Z1_dL_eqn_3},
\(
    \hbox{$\frac{d}{d\theta}$}L(\theta)
    =
    \frac{1}{\om_0^*}\Bigl[A_{z_{clp}}(\theta)L(\theta)-L(\theta)R_0\Bigr].
\)
Combining this identity with \eqref{P:Thm4:theta_dynamics_W}, we obtain
\[
    \hbox{$\frac{d}{dt}$}L(\theta)
    =
    \frac{1}{\om_0^*}\Bigl[A_{z_{clp}}(\theta)L(\theta)-L(\theta)R_0\Bigr]
    \bigl[\om_0^*+\ga(\theta)W+z_5(0)\bigr].
\]
Substituting this into \eqref{P:Thm4:W_dynamics_sum} yields
\begin{align}
    \hbox{$\frac{d}{dt}$}W
    &=
    \bigl[L(\theta)\bigr]^{-1}
    \Bigl[
        F_0(\theta;Z)+F_1(\theta;Z)
        -
        \frac{1}{\om_0^*}\bigl[A_{z_{clp}}(\theta)L(\theta)-L(\theta)R_0\bigr]
        \bigl[\om_0^*+\ga(\theta)W+z_5(0)\bigr]W
    \Bigr]
    \nonumber\\
    &=
    \bigl[L(\theta)\bigr]^{-1}F_0(\theta;Z)
    +
    \bigl[L(\theta)\bigr]^{-1}\Bigl[F_1(\theta;Z)-A_{z_{clp}}(\theta)L(\theta)W\Bigr]
    +
    R_0W
    \nonumber\\
    &\qquad
    -
    \bigl[L(\theta)\bigr]^{-1}
    \frac{1}{\om_0^*}
    \bigl[A_{z_{clp}}(\theta)L(\theta)-L(\theta)R_0\bigr]
    \bigl[\ga(\theta)W+z_5(0)\bigr]W.
    \label{P:Thm4:W_dynamics}
\end{align}

We now expand the three nontrivial summands on the right-hand side of \eqref{P:Thm4:W_dynamics}.

First, by \eqref{P:Thm4:Z1_dynamics},
\begin{align}
    \bigl[L(\theta)\bigr]^{-1}F_0(\theta;Z)
    &=
    z_5(0)\,
    \bigl[L(\theta)\bigr]^{-1}
    \left[\begin{array}{ccc}
        0 & 0 & \frac{1}{J}\cos(\theta)\\
        0 & 0 & -\frac{1}{J}\sin(\theta)\\
        0 & 0 & 0
    \end{array}\right]
  \left[\begin{array}{c}
        z_1\\[2mm]
        z_2\\[2mm]
        z_3
    \end{array}\right]
    +
    \biggl[\begin{array}{c}
        \hbox{higher-order }\\
        \hbox{terms in } z_3
    \end{array}\biggr]
    \nonumber\\
    &=
    z_5(0)\,
    \bigl[L(\theta)\bigr]^{-1}
    \left[\begin{array}{ccc}
        0 & 0 & \frac{1}{J}\cos(\theta)\\
        0 & 0 & -\frac{1}{J}\sin(\theta)\\
        0 & 0 & 0
    \end{array}\right]
    L(\theta)W
    +
    \biggl[\begin{array}{c}
        \hbox{higher-order }\\
        \hbox{terms in } W
    \end{array}\biggr]
    \nonumber\\    &
    =
    z_5(0)\,R_{11}(\theta)W
    +
    \biggl[\begin{array}{c}
        \hbox{higher-order}\\
        \hbox{ terms in } W
    \end{array}\biggr],
    \label{P:Thm4:LinvF0}
\end{align}
\begin{equation}\label{P:Thm4:R11}
\hbox{where}\qquad \qquad   R_{11}(\theta)\triangleq
    \bigl[L(\theta)\bigr]^{-1}
    \left[\begin{array}{ccc}
        0 & 0 & \frac{1}{J}\cos(\theta)\\
        0 & 0 & -\frac{1}{J}\sin(\theta)\\
        0 & 0 & 0
    \end{array}\right]
    L(\theta).
\end{equation}

Second,
\begin{equation}\begin{array}{l}
    \bigl[L(\theta)\bigr]^{-1}\Bigl[F_1(\theta;Z)-A_{z_{clp}}(\theta)L(\theta)W\Bigr]
		\\
    =
    \bigl[L(\theta)\bigr]^{-1}
    \left[
        \left[\begin{array}{c}
            \sin(\theta)\\
            \cos(\theta)\\
            1
        \end{array}\right]U(\theta;Z)
        -
        \left[\begin{array}{c}
            \sin(\theta)\\
            \cos(\theta)\\
            1
        \end{array}\right]
        \left.
        \left[
            \frac{\partial U(\theta;Z)}{\partial z_1},
            \frac{\partial U(\theta;Z)}{\partial z_2},
            \frac{\partial U(\theta;Z)}{\partial z_3}
        \right]
        \right|_{Z=0}
        z
    \right]
		\\
    =
    \bigl[L(\theta)\bigr]^{-1}
    \left[\begin{array}{c}
        \sin(\theta)\\
        \cos(\theta)\\
        1
    \end{array}\right]
    \Biggl(
        U(\theta;Z)
        -
        \left.
        \left[
            \frac{\partial U(\theta;Z)}{\partial z_1},
            \frac{\partial U(\theta;Z)}{\partial z_2},
            \frac{\partial U(\theta;Z)}{\partial z_3}
        \right]
        \right|_{Z=0}
        z
    \Biggr)
		\\
    =
    0_{3\times 3}\,W
    +
    \biggl[\begin{array}{c}
        \hbox{higher-order terms}\\
        \hbox{in } W
    \end{array}\biggr].
    \label{P:Thm4:LinvF1-ALW}
\end{array}\end{equation}

Third,
\begin{align}
    \bigl[L(\theta)\bigr]^{-1}
    \frac{1}{\om_0^*}
    \bigl[A_{z_{clp}}(\theta)L(\theta)-L(\theta)R_0\bigr]
    \bigl[\ga(\theta)W+z_5(0)\bigr]W
    &=
    z_5(0)\,
    \bigl[L(\theta)\bigr]^{-1}
    \frac{1}{\om_0^*}
    \bigl[A_{z_{clp}}(\theta)L(\theta)
    \nonumber\\    &    
    -L(\theta)R_0\bigr]W+
    \biggl[\begin{array}{c}
        \hbox{higher-order}\\
        \hbox{ terms in } W
    \end{array}\biggr]
    \nonumber
    \\    &
    =
    z_5(0)\,R_{13}(\theta)W
    +
    \biggl[\begin{array}{c}
        \hbox{higher-order}\\
        \hbox{ terms in } W
    \end{array}\biggr],
    \label{P:Thm4:3rd_term}
\end{align}
\begin{equation}\label{P:Thm4:R13}
\hbox{where}\qquad\qquad    R_{13}(\theta)\triangleq
    \bigl[L(\theta)\bigr]^{-1}
    \frac{1}{\om_0^*}
    \bigl[A_{z_{clp}}(\theta)L(\theta)-L(\theta)R_0\bigr].
\end{equation}

Substituting \eqref{P:Thm4:LinvF0}, \eqref{P:Thm4:LinvF1-ALW}, and \eqref{P:Thm4:3rd_term} into \eqref{P:Thm4:W_dynamics}, we obtain
\begin{equation}\label{P:Thm4:W_dynamics2}
    \hbox{$\frac{d}{dt}$}W
    =
    \Bigl[R_0+z_5(0)R_1(\theta)\Bigr]W
    +
    \biggl[\begin{array}{c}
        \hbox{higher-order terms}\\
        \hbox{in } W
    \end{array}\biggr],
\end{equation}
where
\begin{equation}\label{P:Thm4:R1}
    R_1(\theta)\triangleq R_{11}(\theta)-R_{13}(\theta).
\end{equation}
The matrix functions \(R_{11}(\theta)\) and \(R_{13}(\theta)\) are continuous and globally bounded in \(\theta\). Hence, \(R_1(\theta)\) is continuous and globally bounded as well.

Since \(R_0\) is strictly Hurwitz, there exists a symmetric positive definite matrix \(P\in\mathbb{R}^{3\times 3}\) such that
\begin{equation}\label{P:Thm4:Z1_P}
    P=P^{\trn}>0,
    \qquad
    R_0^{\trn}P+PR_0=-I_{3\times 3}.
\end{equation}
Consider the quadratic form
\[
    V(W)\triangleq W^{\trn}PW.
\]
Along a solution of \eqref{P:Thm4:theta_dynamics_W} and \eqref{P:Thm4:W_dynamics2}, its time derivative is
\begin{align*}
    \hbox{$\frac{d}{dt}$}V\bigl(W(t)\bigr)
    &=
    \left[\hbox{$\frac{d}{dt}$}W(t)\right]^{\trn}PW(t)
    +
    W(t)^{\trn}P\left[\hbox{$\frac{d}{dt}$}W(t)\right] \\
    &=
    W(t)^{\trn}
    \Bigl[
        \bigl(R_0+z_5(0)R_1(\theta)\bigr)^{\trn}P
        +
        P\bigl(R_0+z_5(0)R_1(\theta)\bigr)
    \Bigr]
    W(t)
    +
    \left[\!\!\begin{array}{c}
        \hbox{higher-order}\\
        \hbox{terms}
    \end{array}\!\!\right] \\
    &=
    -\|W(t)\|^2
    +
    z_5(0)\,W(t)^{\trn}Q(\theta)W(t)
    +
    \left[\!\!\begin{array}{c}
        \hbox{higher-order}\\
        \hbox{terms}
    \end{array}\!\!\right],
\end{align*}
where \(Q(\theta)\) is a smooth bounded matrix function.

Therefore, for a solution of the closed-loop system with sufficiently small \(\|W(0)\|\) and \(|z_5(0)|\), the last equality implies
\[
    \hbox{$\frac{d}{dt}$}\bigl[W(t)^{\trn}PW(t)\bigr]
    \le
    -\frac{1}{2}\|W(t)\|^2,
    \qquad
    \forall\,t\ge 0,
\]
which in turn implies exponential convergence of \(W(t)\) to zero.

\begin{statement}\label{LemCC:z5=}
Consider the change of coordinates \eqref{Thm2:X_to_Z1}--\eqref{Thm2:X_to_Z2} for the nonlinear control system \eqref{Constrained_dynamics}. Then the fifth component \(z_5(X(t))\) of the new transverse coordinates satisfies
\begin{equation}\label{LemCC:z5=C}
    z_5(X(t))\equiv \frac{1}{r_c}v_0-\om_0^*,
    \qquad
    v_0=\sqrt{\dot x_0^2+\dot y_0^2},
\end{equation}
and remains constant along every motion of the Dubins car.
\end{statement}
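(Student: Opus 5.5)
The plan is to compute $z_5(X)$ in closed form from the definition \eqref{Thm2:X_to_Z2}, and then to use the nonholonomic constraint \eqref{Nonholonomic_constraint} together with Lemma~\ref{Lem:V=const} to show that the result reduces to a constant.

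\textbf{Step 1: extract the fifth component.} Set $s=(\theta-\theta_0^*)/\om_0^*$, so that $\om_0^*s+\theta_0^*=\theta$, and let $\xi\triangleq[\Phi(s)]^{-1}X_\bot(X)$. The fifth row of $D$ is $[0,0,d_{53},0,1]$, hence
\[
z_5=d_{53}\xi_3+\xi_5=\tfrac{1}{r_c}\xi_3+\xi_5 .
\]

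\textbf{Step 2: invert $\Phi$ on the relevant block.} The fifth row of $\Phi$ is $[0,0,0,\varphi_{54},1]$ with $\varphi_{54}=0$, so $\xi_5=x_{5_\bot}=\dot\theta-\om_0^*$.

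Rows three and four of $\Phi$ involve only the columns three and four. Therefore $(\xi_3,\xi_4)$ solves the $2\times 2$ system
\[
\left[\begin{array}{cc}\cos\theta & \varphi_{34}\\ \sin\theta & \varphi_{44}\end{array}\right]
\left[\begin{array}{c}\xi_3\\ \xi_4\end{array}\right]
=\left[\begin{array}{c}x_{3_\bot}\\ x_{4_\bot}\end{array}\right].
\]
Substituting $t=s$ into \eqref{Lem5:z4} gives $\varphi_{34}=\theta\cos\theta-\sin\theta$ and $\varphi_{44}=\theta\sin\theta+\cos\theta$. The determinant of this block equals $1$. Cramer's rule then yields
\[
\xi_3=\theta\bigl(x_{3_\bot}\sin\theta-x_{4_\bot}\cos\theta\bigr)+x_{3_\bot}\cos\theta+x_{4_\bot}\sin\theta .
\]

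\textbf{Step 3: use the constraint.} Substitute \eqref{Transverse_coordinates} for $x_{3_\bot}$ and $x_{4_\bot}$. The coefficient of $\theta$ becomes
\[
\dot x\sin\theta-\dot y\cos\theta ,
\]
because the $r_c\dot\theta$ terms cancel. This expression vanishes identically by \eqref{Nonholonomic_constraint}. The remaining part is
\[
\xi_3=\dot x\cos\theta+\dot y\sin\theta-r_c\dot\theta .
\]
Combining with Step 2,
\[
z_5=\tfrac{1}{r_c}\bigl(\dot x\cos\theta+\dot y\sin\theta\bigr)-\dot\theta+\dot\theta-\om_0^*=\tfrac{1}{r_c}\bigl(\dot x\cos\theta+\dot y\sin\theta\bigr)-\om_0^* .
\]

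\textbf{Step 4: identify the speed.} Under \eqref{Nonholonomic_constraint}, the velocity $[\dot x;\dot y]$ is parallel to $[\cos\theta;\sin\theta]$. Hence $\dot x\cos\theta+\dot y\sin\theta=\pm\sqrt{\dot x^2+\dot y^2}$. By Lemma~\ref{Lem:V=const}, this magnitude equals $v_0$ for all $t$.

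The only delicate point is the sign. It is continuous in time and nonzero while $v_0\neq0$. In a neighbourhood of the nominal motion it is positive, since the nominal motion has $\dot x^*\cos\theta^*+\dot y^*\sin\theta^*=r_c\om_0^*>0$. Thus $z_5\equiv v_0/r_c-\om_0^*$.

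The main obstacle is Step 2: evaluating $\varphi_{34}$ and $\varphi_{44}$ at $t=s$ and checking that the secular (linear-in-$\theta$) term of $\xi_3$ is exactly the constraint combination. This is what makes $z_5$ independent of $\theta$.
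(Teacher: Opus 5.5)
Your proposal is correct and follows the paper's proof: read off $z_5=\frac{1}{r_c}\xi_3+\xi_5$ from the last row of $D$, invert the $\Sigma$-block (you use Cramer's rule, the paper writes out $\Sigma^{-1}$), observe that the coefficient of $\theta$ is exactly the constraint expression $\dot x\sin\theta-\dot y\cos\theta$, and identify $\dot x\cos\theta+\dot y\sin\theta$ with $v_0$. Your sign argument in Step~4 is a small but worthwhile refinement: the paper simply invokes $\dot x=v_0\cos\theta$, $\dot y=v_0\sin\theta$, which tacitly assumes forward motion with $v_0\ge 0$ and $\om_0^*>0$, and that is exactly the condition you make explicit.
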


If, in addition, \eqref{Thm4:v0=Om*rc} holds, then \(z_5(0)=0\). In that case, exponential convergence of \(W(t)\) to zero implies exponential convergence of \(z(t)=L(\theta(t))W(t)\) to zero. Since \(z_4(t)\equiv 0\) and \(z_5(t)\equiv 0\), it follows that
\[
    Z(t)=\bigl[z_1(t);z_2(t);z_3(t);z_4(t);z_5(t)\bigr]\to 0
    \qquad \text{as } t\to+\infty.
\]
Finally, \eqref{Thm2:X_to_Z2} implies that the standard transverse coordinates \eqref{Transverse_coordinates} also converge to zero. This completes the proof of the main result. The remaining claims used above are proved separately below.
\end{proof}


\section{Proof of Technical Claims}

\subsection{Proof of Claim~\ref{Thm2:Z_transverse coordinates}} 
\begin{proof}
The proof of Claim~\ref{Thm2:Z_transverse coordinates} relies on the following two technical results, which are stated here and proved later in the Appendix.

\begin{statement}\label{LemAA:dynamics_in_(th,Xbot)_coordinates}
The dynamics of the nonlinear control system \eqref{Constrained_dynamics} under the change of coordinates
\begin{equation}\label{LemAA:change_X_to_(th,Xbot)}
    X \triangleq \bigl[\theta;x;y;\dot\theta;\dot x;\dot y\bigr]
    \quad\xrightarrow{\;T_1\;}\quad
    \bigl[\theta;X_\bot\bigr],
\end{equation}
where
\(
    X_\bot \triangleq \bigl[x_{1_\bot};x_{2_\bot};x_{3_\bot};x_{4_\bot};x_{5_\bot}\bigr]
\)
and the functions $x_{i_\bot}(X)$, $i=1,\dots,5$, are defined in \eqref{Transverse_coordinates}, has the form
\begin{eqnarray}
    \hbox{$\frac{d}{dt}$}\theta
    &=&
    \om_0^*+x_{5_\bot},
    \label{LemAA:dth}\\
    \hbox{$\frac{d}{dt}$}X_\bot
    &=&
    A_\bot(\theta)X_\bot + B_\bot(\theta)u + F_\bot(\theta,X_\bot),
    \label{LemAA:dXbot}
\end{eqnarray}
where
\begin{equation}\label{LemAA:A_bot}
    A_\bot(\theta)=
    \left[\begin{array}{ccccc}
        0 & 0 & 1 & 0 & 0\\
        0 & 0 & 0 & 1 & 0\\
        0 & 0 & a_{33}(\theta) & a_{34}(\theta) & 0\\
        0 & 0 & a_{43}(\theta) & a_{44}(\theta) & 0\\
        0 & 0 & 0 & 0 & 0
    \end{array}\right],
    \qquad
    B_\bot(\theta)=
    \left[\begin{array}{c}
        0\\
        0\\
        b_3(\theta)\\
        b_4(\theta)\\
        b_5(\theta)
    \end{array}\right],
\end{equation}
and the higher-order terms are defined by 
\begin{equation}\label{LemAA:F_bot}
    F_\bot(\theta,X_\bot)=
    \left[\begin{array}{c}
        0\\
        0\\
        f_3(\theta,X_\bot)\\
        f_4(\theta,X_\bot)\\
        0
    \end{array}\right].
\end{equation}
Here
\begin{equation}\label{LemAA:abf}
    \begin{array}{rclrcl}
        a_{33}(\theta) &=& -\om_0^*\sin(\theta)\cos(\theta), &
        a_{34}(\theta) &=& -\om_0^*\sin^2(\theta),\\[1mm]
        a_{43}(\theta) &=& \phantom{-}\om_0^*\cos^2(\theta), &
        a_{44}(\theta) &=& \phantom{-}\om_0^*\sin(\theta)\cos(\theta),\\[1mm]
        f_3(\theta,X_\bot)
        &=&
        -x_{5_\bot}\sin(\theta)\bigl[\cos(\theta)x_{3_\bot}+\sin(\theta)x_{4_\bot}\bigr],\\[1mm] 
				f_4(\theta,X_\bot)
        &=&
        x_{5_\bot}\cos(\theta)\bigl[\cos(\theta)x_{3_\bot}+\sin(\theta)x_{4_\bot}\bigr],
    \end{array}
\end{equation}
\[
    b_3(\theta)= -\frac{r_c}{J}\cos(\theta), \qquad
    b_4(\theta)= -\frac{r_c}{J}\sin(\theta), \qquad
    b_5(\theta)\equiv \frac{1}{J}.
\]
\end{statement}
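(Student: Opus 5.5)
The computation is direct: I will differentiate each transverse coordinate \eqref{Transverse_coordinates} along an arbitrary solution of \eqref{Constrained_dynamics}. Then I will substitute the inverse relations
\[
\dot\theta=\om_0^*+x_{5_\bot},\qquad
\dot x=x_{3_\bot}+r_c\cos(\theta)\dot\theta,\qquad
\dot y=x_{4_\bot}+r_c\sin(\theta)\dot\theta,
\]
and read off the exact (not merely linearized) right-hand sides. The transformation $T_1$ is a valid change of coordinates by Lemma~\ref{Lem3:Jacobian}. Indeed, $x$ and $y$ are recovered from $x_{1_\bot},x_{2_\bot},\theta$, and $\dot\theta,\dot x,\dot y$ from $x_{3_\bot},x_{4_\bot},x_{5_\bot},\theta$, so $T_1$ is in fact a global diffeomorphism.

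The easy rows come first. Equation \eqref{LemAA:dth} is just the definition of $x_{5_\bot}$. The first two rows of \eqref{LemAA:dXbot} reproduce Steps~1--2 of the proof of Theorem~\ref{Thm1:TL}, but now exactly: $\frac{d}{dt}x_{1_\bot}=\dot x-r_c\cos(\theta)\dot\theta=x_{3_\bot}$ and $\frac{d}{dt}x_{2_\bot}=x_{4_\bot}$, with no remainder. The fifth row is $\frac{d}{dt}x_{5_\bot}=\ddot\theta=u/J$.

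The substantive step is rows three and four. Following \eqref{Thm1:dx3}, I will write
\[
\tfrac{d}{dt}x_{3_\bot}=-\sigma\,\dot\theta\sin\theta+r_c\sin(\theta)\dot\theta^2-\tfrac{r_c}{J}\cos(\theta)u,
\qquad
\sigma\triangleq\dot y\sin\theta+\dot x\cos\theta .
\]
The key observation is that $\sigma=\cos(\theta)x_{3_\bot}+\sin(\theta)x_{4_\bot}+r_c\dot\theta$, because $r_c\dot\theta(\cos^2\theta+\sin^2\theta)=r_c\dot\theta$. Substituting this, the two $r_c\dot\theta^2\sin\theta$ terms cancel, leaving
\[
-\dot\theta\sin\theta\,[\cos(\theta)x_{3_\bot}+\sin(\theta)x_{4_\bot}]-\tfrac{r_c}{J}\cos(\theta)u .
\]
Writing $\dot\theta=\om_0^*+x_{5_\bot}$ splits this into two parts. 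The part with $\om_0^*$ gives exactly $a_{33}(\theta)x_{3_\bot}+a_{34}(\theta)x_{4_\bot}$ as in \eqref{LemAA:abf}, together with $b_3(\theta)u$. The part with $x_{5_\bot}$ is precisely $f_3(\theta,X_\bot)$. Row four is handled the same way starting from \eqref{Thm1:dx4}. There the $-r_c\cos(\theta)\dot\theta^2$ term cancels against $r_c\dot\theta^2\cos\theta$ from $\sigma\dot\theta\cos\theta$, which yields $a_{43},a_{44},b_4$ and $f_4$.

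I expect no real obstacle here. The only delicate point is bookkeeping: the cancellation in $\sigma$ must be exact, and the coefficients must appear as functions of the current $\theta$, not of $\theta^*(t)$. Once this is checked, evaluating at $\theta=\theta^*(t)$, $X_\bot=0$ gives a consistency check against Theorem~\ref{Thm1:TL}, since $F_\bot$ is quadratic and drops out of the linearization.
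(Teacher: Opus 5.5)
Your proposal is correct and follows essentially the same route as the paper's proof: differentiate each transverse coordinate exactly along solutions of \eqref{Constrained_dynamics}, substitute $\dot x$, $\dot y$, $\dot\theta$ in terms of $X_\bot$ and $\theta$, use the exact cancellation of the $r_c\dot\theta^2$ terms in rows three and four, and split $\dot\theta=\om_0^*+x_{5_\bot}$ to separate $A_\bot(\theta)X_\bot+B_\bot(\theta)u$ from $F_\bot(\theta,X_\bot)$. Writing $\sigma=\cos(\theta)x_{3_\bot}+\sin(\theta)x_{4_\bot}+r_c\dot\theta$ and noting that $T_1$ is a global diffeomorphism are only cosmetic additions to the same computation.
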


\begin{statement}\label{LemBB:on_dPhi_nonlinear}
Define
\begin{equation}\label{LemBB:Phi_bot(th)}
    \Phi_\bot(\tau)=
    \left[\begin{array}{ccccc}
        1 & 0 & \frac{1}{\om_0^*}\sin(\tau)
          & \frac{\tau}{\om_0^*}\sin(\tau)+\frac{2}{\om_0^*}\bigl(\cos(\tau)-1\bigr) & 0\\
        0 & 1 & \frac{1}{\om_0^*}\bigl(1-\cos(\tau)\bigr)
          & -\frac{\tau}{\om_0^*}\cos(\tau)+\frac{2}{\om_0^*}\sin(\tau) & 0\\
        0 & 0 & \cos(\tau)
          & \tau\cos(\tau)-\sin(\tau) & 0\\
        0 & 0 & \sin(\tau)
          & \tau\sin(\tau)+\cos(\tau) & 0\\
        0 & 0 & 0 & 0 & 1
    \end{array}\right]
    \triangleq
    \Phi(t)\Bigl|_{\,t=\frac{\tau-\theta_0^*}{\om_0^*}},
\end{equation}
where \(\Phi(t)\) is introduced in \eqref{Lem5:Phi}. Then the following properties hold:
\begin{enumerate}
    \item
    The matrix function \(\Phi_\bot(\tau)\) satisfies
    \begin{equation}\label{LemmBB:dPhi_bot(tau)}
        \om_0^*\hbox{$\frac{d}{d\tau}$}\Phi_\bot(\tau)\equiv A_\bot(\tau)\Phi_\bot(\tau),
        \qquad
        \Phi_\bot(0)=I_{5\times 5},
    \end{equation}
    where \(A_\bot(\tau)\) is defined by \eqref{LemAA:A_bot} and \eqref{LemAA:abf};

    \item
    The inverse matrix is
    \begin{equation}\label{LemmBB:invPhi_bot(tau)}
        \bigl[\Phi_\bot(\tau)\bigr]^{-1}
        =
        \left[\begin{array}{c|c}
            I_{2\times 2} & \Lambda(\tau)\\
            \hline
            0_{3\times 2} & \Sigma(\tau)
        \end{array}\right]^{-1}
        =
        \left[\begin{array}{c|c}
            I_{2\times 2} & -\Lambda(\tau)\Sigma^{-1}(\tau)\\
            \hline
            0_{3\times 2} & \Sigma^{-1}(\tau)
        \end{array}\right],
    \end{equation}
    where
    \begin{equation}\label{LemBB:LambdaSigma}
        \Lambda(\tau)=
        \left[\begin{array}{ccc}
            \frac{1}{\om_0^*}\sin(\tau)
            & \frac{\tau}{\om_0^*}\sin(\tau)+\frac{2}{\om_0^*}\bigl(\cos(\tau)-1\bigr)
            & 0\\[1mm]
            \frac{1}{\om_0^*}\bigl(1-\cos(\tau)\bigr)
            & -\frac{\tau}{\om_0^*}\cos(\tau)+\frac{2}{\om_0^*}\sin(\tau)
            & 0
        \end{array}\right],
    \end{equation}
    \begin{equation}\label{LemBB:Sigma}
        \Sigma(\tau)=
        \left[\begin{array}{ccc}
            \cos(\tau) & \tau\cos(\tau)-\sin(\tau) & 0\\
            \sin(\tau) & \tau\sin(\tau)+\cos(\tau) & 0\\
            0 & 0 & 1
        \end{array}\right],
    \end{equation}
    are the block matrices in the corresponding partition of \(\Phi_\bot(\tau)\), and
    \(
        \det\Phi_\bot(\tau)=\det\Sigma(\tau)\equiv 1.
    \)
\end{enumerate}
\end{statement}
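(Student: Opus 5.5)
The plan is to exploit the sparsity of $A_\bot(\tau)$ in \eqref{LemAA:A_bot} and check \eqref{LemmBB:dPhi_bot(tau)} column by column. The initial condition and the inverse formula will then follow by direct inspection.

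\textbf{Step 1: factor the $2\times2$ block.} I first observe that the middle $2\times2$ block of $A_\bot(\tau)$ is a rank-one product. Write $c=\cos\tau$, $s=\sin\tau$, $n(\tau)=[-s;\,c]$ and $e(\tau)=[c;\,s]$. By \eqref{LemAA:abf},
\[
\left[\begin{array}{cc} a_{33} & a_{34}\\ a_{43} & a_{44}\end{array}\right]=\om_0^*\, n(\tau)e(\tau)^{\trn}.
\]
Since the first two columns and the last row and column of $A_\bot$ vanish, the matrix ODE $\om_0^*\frac{d}{d\tau}\Phi_\bot=A_\bot\Phi_\bot$ decouples into three pieces.
\begin{itemize}
\item[(a)] Row 5 must be constant. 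Every entry of rows $3,4,5$ in columns $1,2$ must be constant. Column 5 must be constant. These hold trivially for \eqref{LemBB:Phi_bot(th)}.
\item[(b)] The lower block $\Sigma_2$, formed by rows $3,4$ and columns $3,4$, must satisfy $\frac{d}{d\tau}\Sigma_2=n\,e^{\trn}\Sigma_2$.
\item[(c)] Rows $1,2$ must satisfy $\om_0^*\frac{d}{d\tau}(\text{rows }1,2)=(\text{rows }3,4)$.
\end{itemize}

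\textbf{Step 2: verify (b) and (c).} For (b), the first column $[c;\,s]=e$ satisfies $e^{\trn}e=1$ and $\frac{d}{d\tau}e=n$, so the equation holds. For the second column $[\tau c-s;\,\tau s+c]$, the derivative is $\tau n$. Moreover $e^{\trn}[\tau c-s;\,\tau s+c]=\tau$, so the equation holds here as well. For (c), I differentiate each entry of the first two rows, e.g. $\frac{d}{d\tau}\bigl(\frac{\tau}{\om_0^*}s+\frac{2}{\om_0^*}(c-1)\bigr)=\frac{1}{\om_0^*}(\tau c-s)$. Each such derivative reproduces $\frac{1}{\om_0^*}$ times the matching entry of rows $3,4$.

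\textbf{Step 3: initial condition, inverse and determinant.} Evaluating \eqref{LemBB:Phi_bot(th)} at $\tau=0$ gives $I_{5\times5}$; the only entries needing a glance are those with $\cos\tau-1$ and $\tau\cos\tau-\sin\tau$, which vanish. For statement 2, $\Phi_\bot$ is block upper triangular with blocks $I_{2\times2}$, $\Lambda$ and $\Sigma$. The standard block-inverse formula then gives \eqref{LemmBB:invPhi_bot(tau)}, provided $\Sigma$ is invertible. Finally, $\det\Phi_\bot=\det\Sigma$, and
\[
\det\Sigma=c(\tau s+c)-s(\tau c-s)=1 .
\]
This is also consistent with Liouville's formula, since $\operatorname{tr}A_\bot=a_{33}+a_{44}=0$.

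There is no real obstacle here. The only care needed is the bookkeeping of the $\tau$-dependent entries in column 4. The rank-one factorization $n e^{\trn}$ is what keeps that check short, so I would establish it first.
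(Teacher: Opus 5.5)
Your proposal is correct and follows essentially the same route as the paper: a direct verification that $\Phi_\bot$ satisfies $\om_0^*\frac{d}{d\tau}\Phi_\bot=A_\bot\Phi_\bot$, followed by the block upper-triangular inversion formula. The paper multiplies out $A_\bot\Phi_\bot$ in full, whereas your rank-one factorization $\om_0^*\,n\,e^{\trn}$ of the middle block organizes the same entrywise check more economically. You also spell out $\Phi_\bot(0)=I_{5\times 5}$ and $\det\Sigma=1$, which the paper leaves to ``direct computation.''
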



To prove Claim~\ref{Thm2:Z_transverse coordinates}, we proceed in two steps.
First, we verify that the transformation \eqref{Thm2:X_to_Z1}--\eqref{Thm2:X_to_Z2} indeed defines a new coordinate system for the nonlinear control system \eqref{Constrained_dynamics}, that is, that it is smooth and locally smoothly invertible in a neighborhood of the nominal motion $q^*(t)$.
Second, we derive the dynamics of \eqref{Constrained_dynamics} in the new coordinates and verify that the change of coordinates \(\bigl[\theta; Z\bigr]=T(X)\) transforms the system into \eqref{Thm3:clp_system_theta_Z123}--\eqref{Thm2:Z2_dynamics}.

To this end, let us compute the Jacobian \(J_T(X)\) of the transformation \eqref{Thm2:X_to_Z1}--\eqref{Thm2:X_to_Z2}.
It is convenient to represent this transformation as the composition \(T(X)=T_2(T_1(X))\), where
\begin{equation}\label{Thm2:T1_T2}
    X = \bigl[\theta;x;y;\dot\theta;\dot x;\dot y\bigr]
    \quad\xrightarrow{\;T_1\;}\quad
    \bigl[\theta;X_\bot\bigr]
    \quad\xrightarrow{\;T_2\;}\quad
    \bigl[\theta;Z\bigr],
\end{equation}
with
\quad\(
    X_\bot = \bigl[x_{1_\bot};x_{2_\bot};x_{3_\bot};x_{4_\bot};x_{5_\bot}\bigr],
    \quad
    Z = q \bigl[z_1;z_2;z_3;z_4;z_5\bigr].
\)\quad
Here the nontrivial part of \(T_1(X)\) is given by the standard transverse coordinates \eqref{Transverse_coordinates}, and the nontrivial part of \(T_2(\theta;Z)\) is defined by \eqref{Thm2:X_to_Z2}.

The Jacobians of \(T_1(X)\) and \(T_2(\theta;Z)\) are readily computed.
Indeed, except for the trivial first row, Jacobian \(J_1(X)\) of \(T_1(X)\) was already obtained in Lemma~\ref{Lem3:Jacobian}:
\begin{equation}\label{Thm2:JT1}
    J_1(X)=
    \left[
    \begin{array}{cccccc}
        1 & 0 & 0 & 0 & 0 & 0 \\
        j_1(X) & 1 & 0 & 0 & 0 & 0 \\
        j_2(X) & 0 & 1 & 0 & 0 & 0 \\
        -j_2(X)\dot\theta & 0 & 0 & j_1(X) & 1 & 0 \\
        \phantom{-}j_1(X)\dot\theta & 0 & 0 & j_2(X) & 0 & 1 \\
        0 & 0 & 0 & 1 & 0 & 0
    \end{array}
    \right],
\end{equation}
where
\quad\(
    j_1(X)=-r_c\cos(\theta),
    \quad
    j_2(X)=-r_c\sin(\theta).
\)\quad
Its determinant is
\begin{align*}
    \!\!\det\! J_1(X)
    &=
    \det\!\!\left[\!
    \begin{array}{ccc|ccc}
        1 & 0 & 0 & 0 & 0 & 0 \\
        j_1(X) & 1 & 0 & 0 & 0 & 0 \\
        j_2(X) & 0 & 1 & 0 & 0 & 0 \\
        \hline
        -j_2(X)\dot\theta & 0 & 0 & j_1(X) & 1 & 0 \\
        \phantom{-}j_1(X)\dot\theta & 0 & 0 & j_2(X) & 0 & 1 \\
        0 & 0 & 0 & 1 & 0 & 0
    \end{array}
    \!\right]\! 
    =
    \det\!\!\left[\!\begin{array}{ccc}
        1 & 0 & 0 \\
        j_1(X) & 1 & 0 \\
        j_2(X) & 0 & 1
    \end{array}\!\right]\!
    \det\!\!\left[\!\begin{array}{ccc}
        j_1(X) & 1 & 0 \\
        j_2(X) & 0 & 1 \\
        1 & 0 & 0
    \end{array}\!\right]\!
    \equiv 1.
\end{align*}

In turn, the Jacobian \(J_2(\cdot)\) of \(T_2(\theta;Z)\) has the block form
\begin{equation}\label{Thm2:JT2}
    J_2(\theta;X_\bot)=
    \left[
    \begin{array}{c|ccccc}
        1 & 0 & 0 & 0 & 0 & 0 \\
        \star &   &   &   &   &   \\
        \star &   &   &   &   &   \\
        \star &   &   & D\Bigl[\Phi\!\left(\frac{\theta-\theta_0^*}{\om_0^*}\right)\Bigr]^{-1} &   &   \\
        \star &   &   &   &   &   \\
        \star &   &   &   &   &
    \end{array}
    \right].
\end{equation}
The entries marked by \(\star\) are not needed for computing the determinant.
Since \(\det D\neq 0\) and \(\det \Phi(s)\equiv 1\) for every scalar argument \(s\), it follows that \(J_2(\theta;X_\bot)\) has full rank \(6\) for every \((\theta;X_\bot)\).

By the chain rule, the Jacobian \(J_T(X)\) of the total transformation \(T(X)=T_2(T_1(X))\) is
\begin{equation}\label{Thm2:J=J1J2}
    J_T(X)=\left.J_2(\theta;X_\bot)\right|_{X_\bot=X_\bot(X)}\,J_1(X).
\end{equation}
Therefore,
\begin{equation}\label{Thm2:detJ}
    \det J_T(X)
    =
    \left.\det J_2(\theta;X_\bot)\right|_{X_\bot=X_\bot(X)}\det J_1(X)
    \equiv \det D \neq 0,
    \qquad \forall\,X.
\end{equation}
Hence, by the inverse function theorem, the transformation \(T(X)\) defined by \eqref{Thm2:X_to_Z1}--\eqref{Thm2:X_to_Z2} is locally smoothly invertible at every point of the state space of \eqref{Constrained_dynamics}, and in particular in a tubular neighborhood of the nominal motion \(q^*(t)\).

Therefore, the angle \(\theta\) together with the five scalar functions \(Z=\bigl[z_1;\dots;z_5\bigr]\) introduced by \eqref{Thm2:X_to_Z1}--\eqref{Thm2:X_to_Z2} form an alternative coordinate system for \eqref{Constrained_dynamics}.


The {second part of the proof} is devoted to rewriting the system dynamics \eqref{Constrained_dynamics} in the new coordinates \((\theta;Z)\) in order to derive \eqref{P:Thm4:theta_dynamics}--\eqref{P:Thm4:Z2_dynamics}.

To this end, let us introduce an intermediate set of variables
\(
    Y \triangleq \bigl[y_1;\dots;y_5\bigr]
\)
by
\begin{equation}\label{P:Thm2:Y}
    X_\bot(t)=\Phi_\bot\bigl(\theta(t)\bigr)Y(t).
\end{equation}
Differentiating \eqref{P:Thm2:Y} along a solution of \eqref{Constrained_dynamics}, we obtain
\begin{equation}\label{P:Thm2:dY1}
    \hbox{$\frac{d}{dt}$}X_\bot(t)
    =
    \hbox{$\frac{d}{dt}$}\Bigl[\Phi_\bot\bigl(\theta(t)\bigr)\Bigr]Y(t)
    +\Phi_\bot\bigl(\theta(t)\bigr)\hbox{$\frac{d}{dt}$}Y(t).
\end{equation}

On the other hand, by \eqref{LemAA:dXbot}--\eqref{LemAA:abf},
\begin{equation}\label{P:Thm2:dY2}
    \hbox{$\frac{d}{dt}$}X_\bot(t)
    =
    A_\bot\bigl(\theta(t)\bigr)X_\bot(t)
    +B_\bot\bigl(\theta(t)\bigr)u(t)
    +F_\bot\bigl(\theta(t),X_\bot(t)\bigr).
\end{equation}
Substituting \eqref{P:Thm2:Y} into \eqref{P:Thm2:dY2} and comparing with \eqref{P:Thm2:dY1}, we obtain
\begin{align}
    \hbox{$\frac{d}{dt}$}Y(t)
    &=
    \bigl[\Phi_\bot\bigl(\theta(t)\bigr)\bigr]^{-1}
    \Bigl\{
        A_\bot\bigl(\theta(t)\bigr)\Phi_\bot\bigl(\theta(t)\bigr)
        -\hbox{$\frac{d}{dt}$}\Phi_\bot\bigl(\theta(t)\bigr)
    \Bigr\}Y(t)
    \nonumber\\
    &\qquad
    +\bigl[\Phi_\bot\bigl(\theta(t)\bigr)\bigr]^{-1}
    \Bigl\{
        B_\bot\bigl(\theta(t)\bigr)u(t)
        +F_\bot\bigl(\theta(t),X_\bot(t)\bigr)
    \Bigr\}.
    \label{P:Thm2:dY3a}
\end{align}
Now use the chain rule:
\(
    \hbox{$\frac{d}{dt}$}\Phi_\bot\bigl(\theta(t)\bigr)
    =
    \hbox{$\frac{d}{d\theta}$}\Phi_\bot\bigl(\theta(t)\bigr)\hbox{$\frac{d}{dt}$}\theta(t).
\)
By \eqref{LemmBB:dPhi_bot(tau)},
\(
    \hbox{$\frac{d}{d\theta}$}\Phi_\bot(\theta)
    =
    \frac{1}{\om_0^*}A_\bot(\theta)\Phi_\bot(\theta),
\)
and by \eqref{LemAA:dth},
\(
    \hbox{$\frac{d}{dt}$}\theta(t)=\om_0^*+x_{5_\bot}(t).
\)
Therefore,
\begin{align*}
    \hbox{$\frac{d}{dt}$}\Phi_\bot\bigl(\theta(t)\bigr)
    &=
    \frac{1}{\om_0^*}
    A_\bot\bigl(\theta(t)\bigr)\Phi_\bot\bigl(\theta(t)\bigr)
    \bigl[\om_0^*+x_{5_\bot}(t)\bigr] \\    &
    =
    A_\bot\bigl(\theta(t)\bigr)\Phi_\bot\bigl(\theta(t)\bigr)
    +\frac{x_{5_\bot}(t)}{\om_0^*}
    A_\bot\bigl(\theta(t)\bigr)\Phi_\bot\bigl(\theta(t)\bigr).
\end{align*}
Substituting this into \eqref{P:Thm2:dY3a} gives
\begin{align}
    \hbox{$\frac{d}{dt}$}Y(t)
    &=
    \bigl[\Phi_\bot\bigl(\theta(t)\bigr)\bigr]^{-1}
    \Bigl\{
        -\frac{x_{5_\bot}(t)}{\om_0^*}
        A_\bot\bigl(\theta(t)\bigr)\Phi_\bot\bigl(\theta(t)\bigr)
    \Bigr\}Y(t)
    \nonumber\\    &\qquad
    +\bigl[\Phi_\bot\bigl(\theta(t)\bigr)\bigr]^{-1}
    \Bigl\{
        B_\bot\bigl(\theta(t)\bigr)u(t)
        +F_\bot\bigl(\theta(t),X_\bot(t)\bigr)
    \Bigr\}.
    \label{P:Thm2:dY3b}
\end{align}
Hence,
\begin{equation}\label{P:Thm2:dY3}
    \hbox{$\frac{d}{dt}$}Y(t)
    =
    R_1\bigl(\theta(t)\bigr)u(t)
    +
    R_2\bigl(\theta(t),X_\bot(t),Y(t)\bigr),
\end{equation}
\begin{equation}\label{P:Thm2:R1def}
   \hbox{where}\qquad R_1(\theta)
    \triangleq
    \bigl[\Phi_\bot(\theta)\bigr]^{-1}B_\bot(\theta)\qquad \hbox{and}
\end{equation}
\begin{equation}\label{P:Thm2:R2def}
    R_2(\theta,X_\bot,Y)
    \triangleq
    \bigl[\Phi_\bot(\theta)\bigr]^{-1}
    \Bigl\{
        F_\bot(\theta,X_\bot)
        -\frac{x_{5_\bot}}{\om_0^*}A_\bot(\theta)\Phi_\bot(\theta)Y
    \Bigr\}.
\end{equation}
The explicit expressions for \(R_1(\cdot)\) and \(R_2(\cdot)\) are obtained next.


\begin{statement}\label{P:Thm2:statement1}
The vector functions $R_1(\theta)$ and $R_2(\theta,X_\bot)$ defined in \eqref{P:Thm2:R1def} and \eqref{P:Thm2:R2def} are given by
\begin{equation}\label{P:Thm2:R12}
    R_1(\theta)=
    \left[\!\!
    \begin{array}{c}
        \frac{r_c}{J\om_0^*}\sin(\theta)\\[2mm]
        \frac{r_c}{J\om_0^*}\bigl[1-\cos(\theta)\bigr]\\[2mm]
        -\frac{r_c}{J}\\[2mm]
        0\\[2mm]
        \frac{1}{J}
    \end{array}\!\!\right],
    \qquad
    R_2(\theta,X_\bot)=
    -\frac{x_{5_\bot}}{\om_0^*}
    \left[\!\!
    \begin{array}{c}
        x_{3_\bot}\\
        x_{4_\bot}\\
        0\\
        0\\
        0
    \end{array}\!\!\right].
\end{equation}
\end{statement}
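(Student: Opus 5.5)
The plan is to compute both vector functions directly from their definitions \eqref{P:Thm2:R1def}--\eqref{P:Thm2:R2def}. I would use the block structure of $\Phi_\bot(\theta)$ and of its inverse from Claim~\ref{LemBB:on_dPhi_nonlinear}, together with the explicit entries of $A_\bot$, $B_\bot$ and $F_\bot$ from Claim~\ref{LemAA:dynamics_in_(th,Xbot)_coordinates}. The guiding idea is to avoid inverting $\Phi_\bot$ explicitly. Instead, I would either verify a candidate solution of $\Phi_\bot R=\,$(right-hand side), or exploit the zero lower-left block of $[\Phi_\bot]^{-1}$ in \eqref{LemmBB:invPhi_bot(tau)}.

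For $R_1(\theta)$, I would solve $\Phi_\bot(\theta)R_1=B_\bot(\theta)$ blockwise. Write $R_1=[r;w]$ with $r\in\mathbb{R}^2$ and $w\in\mathbb{R}^3$. The lower block reads $\Sigma(\theta)w=[b_3;b_4;b_5]$. With the candidate $w=[-r_c/J;0;1/J]$, only the first column of $\Sigma$ contributes in the first two rows, giving $[-\tfrac{r_c}{J}\cos\theta;-\tfrac{r_c}{J}\sin\theta]$. The third row gives $1/J$, so the lower block is satisfied. Since $\det\Sigma\equiv1$, this $w$ is the unique solution. The upper block reads $r+\Lambda(\theta)w=0$, so $r=-\Lambda(\theta)w=\tfrac{r_c}{J}\,[\tfrac1{\om_0^*}\sin\theta;\tfrac1{\om_0^*}(1-\cos\theta)]$, because only the first column of $\Lambda$ is multiplied by a nonzero entry ($w_2=0$, and the third column of $\Lambda$ vanishes). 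This reproduces the first formula in \eqref{P:Thm2:R12}.

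For $R_2$, the key observation is that $\Phi_\bot(\theta)Y=X_\bot$ by \eqref{P:Thm2:Y}, so $A_\bot(\theta)\Phi_\bot(\theta)Y=A_\bot(\theta)X_\bot$, whose components are
\[
x_{3_\bot},\quad x_{4_\bot},\quad a_{33}x_{3_\bot}+a_{34}x_{4_\bot},\quad a_{43}x_{3_\bot}+a_{44}x_{4_\bot},\quad 0 .
\]
Using \eqref{LemAA:abf}, I would show that $a_{33}x_{3_\bot}+a_{34}x_{4_\bot}=-\om_0^*\sin\theta\,[\cos\theta\, x_{3_\bot}+\sin\theta\, x_{4_\bot}]$. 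Multiplied by $x_{5_\bot}/\om_0^*$, this equals $f_3$ exactly, so the third component of the braces in \eqref{P:Thm2:R2def} cancels. The same computation with $a_{43},a_{44}$ and $f_4$ shows that the fourth component cancels as well. Hence the braces reduce to $-\tfrac{x_{5_\bot}}{\om_0^*}[x_{3_\bot};x_{4_\bot};0;0;0]$.

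Finally, $[\Phi_\bot]^{-1}$ has identity upper-left block and zero lower-left block by \eqref{LemmBB:invPhi_bot(tau)}. It therefore maps any vector of the form $[v;0_{3\times1}]$ to itself, which yields the stated $R_2$. The only step requiring care is the cancellation of the nonlinear terms $f_3,f_4$ against the $x_{5_\bot}$-correction coming from $\tfrac{d}{dt}\Phi_\bot$. This is the structural heart of the claim, but it is a short trigonometric identity rather than a genuine obstacle.
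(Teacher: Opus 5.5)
Your proposal is correct and follows essentially the same route as the paper. For $R_1$ you use the block structure of $\Phi_\bot(\theta)$: you check $\Sigma(\theta)w=[b_3;b_4;b_5]$ for $w=[-r_c/J;0;1/J]$ and take the upper block as $-\Lambda(\theta)w$, which is equivalent to the paper's multiplication by the explicit $\Sigma^{-1}(\theta)$. For $R_2$ you substitute $\Phi_\bot(\theta)Y=X_\bot$, cancel $f_3$ and $f_4$ against the $x_{5_\bot}/\om_0^*$-weighted third and fourth components of $A_\bot(\theta)X_\bot$, and use that $[\Phi_\bot(\theta)]^{-1}$ fixes vectors of the form $[v;0_{3\times 1}]$, exactly as the paper does.
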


Now introduce the additional change of coordinates \((\theta;Y)\mapsto(\theta;Z)\) by
\begin{equation}\label{P:Thm2:Y_to_Z}
    Z=DY,
\end{equation}
where the constant matrix \(D\) is defined in \eqref{Lem6:change_of_coordinates}. Then the dynamics of the \(Z\)-variables follows from \eqref{P:Thm2:dY3} and \eqref{P:Thm2:R12}. Namely,
\begin{equation}\label{P:Thm2:dZ}
    \hbox{$\frac{d}{dt}$}Z
    =
    D\hbox{$\frac{d}{dt}$}Y
    =
    DR_1(\theta)u + DR_2(\theta,X_\bot).
\end{equation}

The corresponding vector functions on the right-hand side are
\begin{align}
    DR_1(\theta)
    &=
    \left[\!\!\begin{array}{ccccc}
        \frac{J\om_0^*}{r_c} & 0 & 0 & 0 & 0\\
        0 & -\frac{J\om_0^*}{r_c} & -\frac{J}{r_c} & 0 & 0\\
        0 & 0 & -\frac{J}{r_c} & 0 & 0\\
        0 & 0 & 0 & 1 & 0\\
        0 & 0 & \frac{1}{r_c} & 0 & 1
    \end{array}\!\!\right]
    \left[\!\!\begin{array}{c}
        \frac{r_c}{J\om_0^*}\sin(\theta)\\[1mm]
        \frac{r_c}{J\om_0^*}\bigl[1-\cos(\theta)\bigr]\\[1mm]
        -\frac{r_c}{J}\\[1mm]
        0\\[1mm]
        \frac{1}{J}
    \end{array}\!\!\right]
    =
    \left[\!\!\begin{array}{c}
        \sin(\theta)\\[2mm]
        \cos(\theta)\\[2mm]
        1\\[2mm]
        0\\[2mm]
        0
    \end{array}\!\!\right],
    \label{P:Thm2:DR1}
\end{align}
and
\begin{align}
    DR_2(\theta,X_\bot)
    &=
    \left[\!\!\begin{array}{ccccc}
        \frac{J\om_0^*}{r_c} & 0 & 0 & 0 & 0\\
        0 & -\frac{J\om_0^*}{r_c} & -\frac{J}{r_c} & 0 & 0\\
        0 & 0 & -\frac{J}{r_c} & 0 & 0\\
        0 & 0 & 0 & 1 & 0\\
        0 & 0 & \frac{1}{r_c} & 0 & 1
    \end{array}\!\!\right]
    \left[\!\!\begin{array}{c}
        -\dfrac{1}{\om_0^*}x_{3_\bot}x_{5_\bot}\\[2mm]
        -\dfrac{1}{\om_0^*}x_{4_\bot}x_{5_\bot}\\[2mm]
        0\\
        0\\
        0
    \end{array}\!\!\right]
    =
    \left[\!\!\begin{array}{c}
        -\dfrac{J}{r_c}x_{3_\bot}x_{5_\bot}\\[2mm]
        \phantom{-}\dfrac{J}{r_c}x_{4_\bot}x_{5_\bot}\\[2mm]
        0\\
        0\\
        0
    \end{array}\!\!\right].
    \label{P:Thm2:DR2}
\end{align}
The formula \eqref{P:Thm2:DR1} gives the vector field multiplying the control input \(u\) in \eqref{Thm3:clp_system_theta_Z123}--\eqref{Thm2:Z2_dynamics}.

It remains to rewrite the vector field \eqref{P:Thm2:DR2} in terms of the variables \((\theta;Z)\).
According to the change of coordinates \eqref{Thm2:X_to_Z2}, the last three components of \(X_\bot\) can be expressed as
\begin{align}
    \left[\!\!\begin{array}{c}
        x_{3_\bot}\\
        x_{4_\bot}\\
        x_{5_\bot}
    \end{array}\!\!\right]
    &=
    \bigl[\,0_{3\times 2}\mid I_{3\times 3}\,\bigr]X_\bot
    =
    \bigl[\,0_{3\times 2}\mid I_{3\times 3}\,\bigr]\Phi_\bot(\theta)D^{-1}Z
    \nonumber\\
    &=
    \bigl[\,0_{3\times 2}\mid \Sigma(\theta)\,\bigr]
    \left[\begin{array}{c|c}
        D_{11}^{-1} & -D_{11}^{-1}D_{12}D_{22}^{-1}\\
        \hline
        0_{3\times 2} & D_{22}^{-1}
    \end{array}\right]Z
   =
    \Sigma(\theta)D_{22}^{-1}
    \left[\!\!\begin{array}{c}
        z_3\\
        z_4\\
        z_5
    \end{array}\!\!\right]
    \nonumber\\
    &=
    \left[\!\!\begin{array}{ccc}
        \cos(\theta) & \theta\cos(\theta)-\sin(\theta) & 0\\
        \sin(\theta) & \theta\sin(\theta)+\cos(\theta) & 0\\
        0 & 0 & 1
    \end{array}\!\!\right]
    \left[\!\!\begin{array}{ccc}
        -\dfrac{r_c}{J} & 0 & 0\\
        0 & 1 & 0\\
        \dfrac{1}{J} & 0 & 1
    \end{array}\!\!\right]
    \left[\!\!\begin{array}{c}
        z_3\\
        z_4\\
        z_5
    \end{array}\!\!\right].
    \label{App:Xb123}
\end{align}

The right-hand side of \eqref{App:Xb123} can be simplified further by using the following property.

\begin{statement}\label{Lem12:z4_is_nonholonomic_constraint}
Consider the change of coordinates \eqref{Thm2:X_to_Z1}--\eqref{Thm2:X_to_Z2} for the nonlinear control system \eqref{Constrained_dynamics}. Then the fourth component \(z_4(X)\) of the newly introduced transverse coordinates \(Z\) admits the representation
\begin{equation}\label{Lem12:z4=f}
    z_4(X)\equiv \dot y\cos(\theta)-\dot x\sin(\theta),
    \qquad
    \forall\,X=\bigl[\theta;x;y;\dot\theta;\dot x;\dot y\bigr],
\end{equation}
that is, it coincides with the nonholonomic constraint \eqref{Nonholonomic_constraint}. Therefore, for every motion of the Dubins car one has \(z_4(t)\equiv 0\).
\end{statement}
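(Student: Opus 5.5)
We will compute $z_4$ directly from the definition \eqref{Thm2:X_to_Z2}, i.e. $Z=D[\Phi_\bot(\theta)]^{-1}X_\bot(X)$, using the block form of the inverse from Claim~\ref{LemBB:on_dPhi_nonlinear}. Since the fourth row of $D$ is $[0,0,0,1,0]$, we have $z_4=y_4$, where $Y=[\Phi_\bot(\theta)]^{-1}X_\bot$. By \eqref{LemmBB:invPhi_bot(tau)}, the lower block of $[\Phi_\bot]^{-1}$ is $[\,0_{3\times2}\mid\Sigma^{-1}(\theta)\,]$. Hence $[y_3;y_4;y_5]=\Sigma^{-1}(\theta)[x_{3_\bot};x_{4_\bot};x_{5_\bot}]$, and $y_4$ is the second row of $\Sigma^{-1}(\theta)$ applied to this vector.

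Because $\det\Sigma(\theta)\equiv1$ and the upper-left $2\times2$ block of $\Sigma$ is
\[
\left[\begin{array}{cc}\cos\theta & \theta\cos\theta-\sin\theta\\ \sin\theta & \theta\sin\theta+\cos\theta\end{array}\right],
\]
its inverse is the adjugate
\[
\left[\begin{array}{cc}\theta\sin\theta+\cos\theta & -\theta\cos\theta+\sin\theta\\ -\sin\theta & \cos\theta\end{array}\right].
\]
The second row gives $y_4=-\sin(\theta)\,x_{3_\bot}+\cos(\theta)\,x_{4_\bot}$, with no dependence on $x_{5_\bot}$, since the third row and column of $\Sigma$ are decoupled.

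Next we substitute the definitions \eqref{Transverse_coordinates}:
\[
-\sin\theta\,(\dot x-r_c\cos\theta\,\dot\theta)+\cos\theta\,(\dot y-r_c\sin\theta\,\dot\theta)=\dot y\cos\theta-\dot x\sin\theta,
\]
because the $\dot\theta$ terms cancel. This is exactly \eqref{Lem12:z4=f}, and it holds for all $X$, not only near the motion.

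Finally, along any motion of the Dubins car, the constraint \eqref{Nonholonomic_constraint} holds identically, which gives $z_4(t)\equiv0$. The only delicate point is a bookkeeping one: we must make sure the argument of $\Phi$ in \eqref{Thm2:X_to_Z2} is indeed $\Phi_\bot(\theta)$, evaluated at $\tau=\theta$ via \eqref{LemBB:Phi_bot(th)}, so that the entries of $\Sigma$ involve $\cos\theta,\sin\theta$ of the current heading. Otherwise the cancellation would fail. Consistency with the linear invariant $I$ of Lemma~\ref{Lem4:invariant_subspace_of_TL} serves as a sanity check.
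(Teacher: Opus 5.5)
Your proof is correct and follows the paper's argument essentially step for step: $z_4=y_4$ from the fourth row of $D$, then the second row of $\Sigma^{-1}(\theta)$ gives $y_4=x_{4_\bot}\cos\theta-x_{3_\bot}\sin\theta$, and substituting \eqref{Transverse_coordinates} cancels the $\dot\theta$ terms. Your adjugate computation of $\Sigma^{-1}$ via $\det\Sigma\equiv1$, and your check that $\Phi\bigl((\theta-\theta_0^*)/\om_0^*\bigr)=\Phi_\bot(\theta)$, only make explicit what the paper takes for granted.
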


Taking this property into account, \eqref{App:Xb123} becomes
\begin{equation}\label{P:Thm3:xbot_345}
    \left[\!\!\begin{array}{c}
        x_{3_\bot}\\[2mm]
        x_{4_\bot}\\[2mm]
        x_{5_\bot}
    \end{array}\!\!\right]
    =
    \left[\!\!\begin{array}{ccc}
        \cos(\theta) & \theta\cos(\theta)-\sin(\theta) & 0\\[2mm]
        \sin(\theta) & \theta\sin(\theta)+\cos(\theta) & 0\\[2mm]
        0 & 0 & 1
    \end{array}\!\!\right]
    \left[\!\!\begin{array}{c}
        -\frac{r_c}{J}z_3\\[2mm]
        0\\[2mm]
        \frac{1}{J}z_3+z_5
    \end{array}\!\!\right]
    =
    \left[\!\!\begin{array}{c}
        -\frac{r_c}{J}\cos(\theta)z_3\\[2mm]
        -\frac{r_c}{J}\sin(\theta)z_3\\[2mm]
        \frac{1}{J}z_3+z_5
    \end{array}\!\!\right].
\end{equation}
Substituting these expressions into \eqref{P:Thm2:DR2}, we obtain \eqref{Thm3:clp_system_theta_Z123}--\eqref{Thm2:Z2_dynamics}. This completes the proof.
\end{proof}


\subsection{Proof of Claim~\ref{P:Thm2:statement1}}
\begin{proof}
According to \eqref{LemAA:dXbot}--\eqref{LemAA:abf} and \eqref{LemmBB:invPhi_bot(tau)}, the vector function \(R_1(\cdot)\) is given by
\begin{align*}
    R_1
    &=
    \left[\begin{array}{c|c}
        I_{2\times 2} & -\Lambda(\theta)\Sigma^{-1}(\theta)\\
        \hline
        0_{3\times 2} & \Sigma^{-1}(\theta)
    \end{array}\right]
    \left[\begin{array}{c}
        0_{2\times 1}\\[1mm]
        -\frac{r_c}{J}\cos(\theta)\\[1mm]
        -\frac{r_c}{J}\sin(\theta)\\[1mm]
        \frac{1}{J}
    \end{array}\right] 
    =
    \left[\begin{array}{c}
        -\Lambda(\theta)\\[1mm]
        I_{3\times 3}
    \end{array}\right]
    \Sigma^{-1}(\theta)
    \left[\begin{array}{c}
        -\frac{r_c}{J}\cos(\theta)\\[1mm]
        -\frac{r_c}{J}\sin(\theta)\\[1mm]
        \frac{1}{J}
    \end{array}\right].
\end{align*}
Using the explicit expression for \(\Sigma^{-1}(\theta)\), we obtain
\begin{align*}
    \Sigma^{-1}(\theta)
    \left[\begin{array}{c}
        -\frac{r_c}{J}\cos(\theta)\\[1mm]
        -\frac{r_c}{J}\sin(\theta)\\[1mm]
        \frac{1}{J}
    \end{array}\right]
    &=
    \left[
    \begin{array}{ccc}
        \theta\sin(\theta)+\cos(\theta) & \sin(\theta)-\theta\cos(\theta) & 0\\[1.5mm]
        -\sin(\theta) & \cos(\theta) & 0\\[1.5mm]
        0 & 0 & 1
    \end{array}\right]
    \left[\begin{array}{c}
        -\frac{r_c}{J}\cos(\theta)\\[1mm]
        -\frac{r_c}{J}\sin(\theta)\\[1mm]
        \frac{1}{J}
    \end{array}\right] 
    =
    \left[\begin{array}{c}
        -\frac{r_c}{J}\\[1mm]
        0\\[1mm]
        \frac{1}{J}
    \end{array}\right].
\end{align*}
Therefore,
\begin{align*}
    R_1
    &=
    \left[\begin{array}{c}
        -\Lambda(\theta)\\[1mm]
        I_{3\times 3}
    \end{array}\right]
    \left[\begin{array}{c}
        -\frac{r_c}{J}\\[3mm]
        0\\[3mm]
        \frac{1}{J}
    \end{array}\right] \\    &
    =
    -\left[\!\!\begin{array}{ccc}
        \frac{1}{\om_0^*}\sin(\theta)
        & \frac{\theta}{\om_0^*}\sin(\theta)-\frac{2}{\om_0^*}\bigl[1-\cos(\theta)\bigr]
        & 0\\[1mm]
        \frac{1}{\om_0^*}\bigl[1-\cos(\theta)\bigr]
        & -\frac{\theta}{\om_0^*}\cos(\theta)+\frac{2}{\om_0^*}\sin(\theta)
        & 0\\[1mm]
        -1 & 0 & 0\\
        0 & -1 & 0\\
        0 & 0 & -1
    \end{array}\!\!\right]
    \left[\!\!\begin{array}{c}
        -\frac{r_c}{J}\\[3mm]
        0\\[3mm]
        \frac{1}{J}
    \end{array}\!\!\right] 
    =
    \left[\!\!\begin{array}{c}
        \frac{r_c}{J\om_0^*}\sin(\theta)\\[1mm]
        \frac{r_c}{J\om_0^*}\bigl[1-\cos(\theta)\bigr]\\[1mm]
        -\frac{r_c}{J}\\[1mm]
        0\\[1mm]
        \frac{1}{J}
    \end{array}\!\!\right].
\end{align*}
This coincides with the first formula in \eqref{P:Thm2:R12}.
Next, using the explicit form of \(F_\bot(\cdot)\), see \eqref{LemAA:dXbot}--\eqref{LemAA:abf}, the inverse matrix \(\Phi_\bot(\theta)^{-1}\), see \eqref{LemmBB:invPhi_bot(tau)}, and the definition of \(Y\), see \eqref{P:Thm2:Y}, we can rewrite \(R_2(\cdot)\) as
\begin{align*}
    R_2
    &=
    \bigl[\Phi_\bot(\theta)\bigr]^{-1}
    \left\{
        F_\bot\bigl(\theta,X_\bot\bigr)
        -\frac{x_{5_\bot}}{\om_0^*}A_\bot(\theta)\Phi_\bot(\theta)Y
    \right\} \\
    &=
    \left[\begin{array}{c|c}
        I_{2\times 2} & -\Lambda(\theta)\Sigma^{-1}(\theta)\\
        \hline
        0_{3\times 2} & \Sigma^{-1}(\theta)
    \end{array}\right]
    \left\{
        x_{5_\bot}
        \left[\!\!\begin{array}{c}
            0\\
            0\\
            -\sin\theta\bigl[x_{3_\bot}\cos\theta+x_{4_\bot}\sin\theta\bigr]\\
            \cos\theta\bigl[x_{3_\bot}\cos\theta+x_{4_\bot}\sin\theta\bigr]\\
            0
        \end{array}\!\!\right]
        -\frac{x_{5_\bot}}{\om_0^*}
        A_\bot(\theta)
X_\bot
    \right\}.
\end{align*}
Since
$A_\bot(\theta)
    \left[\!\!\begin{array}{c}
        x_{1_\bot}\\
        x_{2_\bot}\\
        x_{3_\bot}\\
        x_{4_\bot}\\
        x_{5_\bot}
    \end{array}\!\!\right]
    =
    \left[\!\!\begin{array}{c}
        x_{3_\bot}\\
        x_{4_\bot}\\
        -\om_0^*\sin\theta\cos\theta\,x_{3_\bot}-\om_0^*\sin^2\theta\,x_{4_\bot}\\
        \om_0^*\cos^2\theta\,x_{3_\bot}+\om_0^*\sin\theta\cos\theta\,x_{4_\bot}\\
        0
    \end{array}\!\!\right],$
the expression in braces simplifies to
\(x_{5_\bot}
    \left[\!\!\begin{array}{c}
        -\frac{1}{\om_0^*}x_{3_\bot}\\[1mm]
        -\frac{1}{\om_0^*}x_{4_\bot}\\[1mm]
        0\\
        0\\
        0
    \end{array}\!\!\right].
\)
Therefore,
\begin{align*}
    R_2
    &=
    x_{5_\bot}
    \left[\begin{array}{c|c}
        I_{2\times 2} & -\Lambda(\theta)\Sigma^{-1}(\theta)\\
        \hline
        0_{3\times 2} & \Sigma^{-1}(\theta)
    \end{array}\right]
    \left[\!\!\begin{array}{c}
        -\frac{1}{\om_0^*}x_{3_\bot}\\[1mm]
        -\frac{1}{\om_0^*}x_{4_\bot}\\[1mm]
        0\\
        0\\
        0
    \end{array}\!\!\right] 
    =
    -\frac{x_{5_\bot}}{\om_0^*}
    \left[\!\!\begin{array}{c}
        x_{3_\bot}\\
        x_{4_\bot}\\
        0\\
        0\\
        0
    \end{array}\!\!\right].
\end{align*}
This coincides with the second formula in \eqref{P:Thm2:R12}.
\end{proof}

\subsection{Proof of Claim~\ref{LemAA:dynamics_in_(th,Xbot)_coordinates}}
\begin{proof}
Equation \eqref{LemAA:dth} follows immediately from the definition of the fifth transverse coordinate in \eqref{Transverse_coordinates}:
\(
    x_{5_\bot}(t)=\dot\theta(t)-\om_0^*,
\)
hence
\begin{equation}\label{P:LemAA:dth}
    \dot\theta(t)=\om_0^*+x_{5_\bot}(t).
\end{equation}

Next, by \eqref{Transverse_coordinates},
\[
    \dot x(t)=x_{3_\bot}(t)+r_c\cos\bigl(\theta(t)\bigr)\dot\theta(t),
    \qquad
    x(t)=x_{1_\bot}(t)+r_c\sin\bigl(\theta(t)\bigr).
\]
Differentiating the second relation and equating the two expressions for \(\dot x(t)\), we obtain
\[
    x_{3_\bot}(t)+r_c\cos\bigl(\theta(t)\bigr)\dot\theta(t)
    =
    \hbox{$\frac{d}{dt}$}x_{1_\bot}(t)
    +r_c\cos\bigl(\theta(t)\bigr)\dot\theta(t),
\]
and therefore
\begin{equation}\label{P:LemAA:dx1bot}
    \hbox{$\frac{d}{dt}$}x_{1_\bot}(t)=x_{3_\bot}(t).
\end{equation}

Similarly,
\[
    \dot y(t)=x_{4_\bot}(t)+r_c\sin\bigl(\theta(t)\bigr)\dot\theta(t),
    \qquad
    y(t)=x_{2_\bot}(t)-r_c\cos\bigl(\theta(t)\bigr).
\]
Differentiating the second relation gives
\[
    x_{4_\bot}(t)+r_c\sin\bigl(\theta(t)\bigr)\dot\theta(t)
    =
    \hbox{$\frac{d}{dt}$}x_{2_\bot}(t)
    +r_c\sin\bigl(\theta(t)\bigr)\dot\theta(t),
\]
and hence
\begin{equation}\label{P:LemAA:dx2bot}
    \hbox{$\frac{d}{dt}$}x_{2_\bot}(t)=x_{4_\bot}(t).
\end{equation}

The fifth transverse coordinate satisfies
\begin{equation}\label{P:LemAA:dx5bot}
    \hbox{$\frac{d}{dt}$}x_{5_\bot}(t)
    =
    \ddot\theta(t)
    =
    \frac{1}{J}u(t),
\end{equation}
where the last equality follows from \eqref{Constrained_dynamics}.

We now derive the dynamics of \(x_{3_\bot}(t)\). By definition,
\(
    x_{3_\bot}(t)=\dot x(t)-r_c\cos\bigl(\theta(t)\bigr)\dot\theta(t),
\)
hence
\begin{equation}\label{P:LemAA:dx3bot_1}
    \hbox{$\frac{d}{dt}$}x_{3_\bot}(t)
    =
    \ddot x(t)
    +r_c\sin\bigl(\theta(t)\bigr)\dot\theta(t)^2
    -r_c\cos\bigl(\theta(t)\bigr)\ddot\theta(t).
\end{equation}
Using \eqref{Constrained_dynamics},
\[
    \ddot x(t)
    =
    -\Bigl[\dot y(t)\sin\bigl(\theta(t)\bigr)+\dot x(t)\cos\bigl(\theta(t)\bigr)\Bigr]
    \dot\theta(t)\sin\bigl(\theta(t)\bigr),
    \qquad
    \ddot\theta(t)=\frac{1}{J}u(t),
\]
and substituting
\[
    \dot x(t)=x_{3_\bot}(t)+r_c\cos\bigl(\theta(t)\bigr)\dot\theta(t),
    \qquad
    \dot y(t)=x_{4_\bot}(t)+r_c\sin\bigl(\theta(t)\bigr)\dot\theta(t),
\]
into \eqref{P:LemAA:dx3bot_1}, we obtain
\begin{align*}
    \hbox{$\frac{d}{dt}$}x_{3_\bot}(t)
    &=
    -\Bigl[x_{4_\bot}(t)\sin\bigl(\theta(t)\bigr)
      +x_{3_\bot}(t)\cos\bigl(\theta(t)\bigr)
      +r_c\dot\theta(t)\Bigr]
      \dot\theta(t)\sin\bigl(\theta(t)\bigr) \\
    &\qquad
    +r_c\sin\bigl(\theta(t)\bigr)\dot\theta(t)^2
    -\frac{r_c}{J}\cos\bigl(\theta(t)\bigr)u(t) \\
    &=
    -\Bigl[x_{3_\bot}(t)\cos\bigl(\theta(t)\bigr)
      +x_{4_\bot}(t)\sin\bigl(\theta(t)\bigr)\Bigr]
      \sin\bigl(\theta(t)\bigr)\dot\theta(t)
    -\frac{r_c}{J}\cos\bigl(\theta(t)\bigr)u(t).
\end{align*}
Using \eqref{P:LemAA:dth}, that is, \(\dot\theta(t)=x_{5_\bot}(t)+\om_0^*\), we arrive at
\begin{equation}\label{P:LemAA:dx3bot}
    \hbox{$\frac{d}{dt}$}x_{3_\bot}(t)
    =
    -\Bigl[x_{3_\bot}(t)\cos\bigl(\theta(t)\bigr)
      +x_{4_\bot}(t)\sin\bigl(\theta(t)\bigr)\Bigr]
      \sin\bigl(\theta(t)\bigr)\bigl[x_{5_\bot}(t)+\om_0^*\bigr]
    -\frac{r_c}{J}\cos\bigl(\theta(t)\bigr)u(t).
\end{equation}

We proceed in the same way for \(x_{4_\bot}(t)\). Since
\(
    x_{4_\bot}(t)=\dot y(t)-r_c\sin\bigl(\theta(t)\bigr)\dot\theta(t),
\)
we have
\begin{equation}\label{P:LemAA:dx4bot_1}
    \hbox{$\frac{d}{dt}$}x_{4_\bot}(t)
    =
    \ddot y(t)
    -r_c\cos\bigl(\theta(t)\bigr)\dot\theta(t)^2
    -r_c\sin\bigl(\theta(t)\bigr)\ddot\theta(t).
\end{equation}
Using \eqref{Constrained_dynamics},
\[
    \ddot y(t)
    =
    \Bigl[\dot y(t)\sin\bigl(\theta(t)\bigr)+\dot x(t)\cos\bigl(\theta(t)\bigr)\Bigr]
    \dot\theta(t)\cos\bigl(\theta(t)\bigr),
\]
and substituting the expressions for \(\dot x(t)\), \(\dot y(t)\), and \(\dot\theta(t)\), we obtain
\begin{align*}
    \hbox{$\frac{d}{dt}$}x_{4_\bot}(t)
    &=
    \Bigl[x_{4_\bot}(t)\sin\bigl(\theta(t)\bigr)
      +x_{3_\bot}(t)\cos\bigl(\theta(t)\bigr)
      +r_c\dot\theta(t)\Bigr]
      \dot\theta(t)\cos\bigl(\theta(t)\bigr) \\&\qquad
    -r_c\cos\bigl(\theta(t)\bigr)\dot\theta(t)^2
    -\frac{r_c}{J}\sin\bigl(\theta(t)\bigr)u(t) \\
    &=
    \Bigl[x_{3_\bot}(t)\cos\bigl(\theta(t)\bigr)
      +x_{4_\bot}(t)\sin\bigl(\theta(t)\bigr)\Bigr]
      \cos\bigl(\theta(t)\bigr)\dot\theta(t)
    -\frac{r_c}{J}\sin\bigl(\theta(t)\bigr)u(t).
\end{align*}
Using again \eqref{P:LemAA:dth}, we obtain
\begin{equation}\label{P:LemAA:dx4bot}
    \hbox{$\frac{d}{dt}$}x_{4_\bot}(t)
    =
    \Bigl[x_{3_\bot}(t)\cos\bigl(\theta(t)\bigr)
      +x_{4_\bot}(t)\sin\bigl(\theta(t)\bigr)\Bigr]
      \cos\bigl(\theta(t)\bigr)\bigl[x_{5_\bot}(t)+\om_0^*\bigr]
    -\frac{r_c}{J}\sin\bigl(\theta(t)\bigr)u(t).
\end{equation}

The relations \eqref{P:LemAA:dth}, \eqref{P:LemAA:dx1bot}, \eqref{P:LemAA:dx2bot}, \eqref{P:LemAA:dx3bot}, \eqref{P:LemAA:dx4bot}, and \eqref{P:LemAA:dx5bot} are exactly the equations \eqref{LemAA:dth}, \eqref{LemAA:dXbot}, and \eqref{LemAA:abf} written component-wise.
\end{proof}

\subsection{Proof of Claim~\ref{LemCC:z5=}}
\begin{proof}
We compute the constant value \(z_5(t)\equiv z_5(0)\) as a function of the initial condition \(X_0\) of a perturbed Dubins-car motion.
According to \eqref{Thm2:X_to_Z2},
\(
    Z = D\,Y,
\)
and therefore
\begin{equation}\label{P:Thm3:z5_step1}
    z_5
    =
    \bigl[0,\,0,\,0,\,0,\,1\bigr]D\,Y.
\end{equation}
Using the last row of \(D\), see \eqref{Lem6:change_of_coordinates}, we obtain
\[
    \bigl[0,\,0,\,0,\,0,\,1\bigr]D
    =
    \bigl[0,\,0,\,\tfrac{1}{r_c},\,0,\,1\bigr].
\]
Hence,
\begin{equation}\label{P:Thm3:z5_step2}
    z_5
    =
    \bigl[0,\,0,\,\tfrac{1}{r_c},\,0,\,1\bigr]Y.
\end{equation}

Next, according to \eqref{P:Thm2:Y} and \eqref{LemmBB:invPhi_bot(tau)},
\[
    Y
    =
    \bigl[\Phi_\bot(\theta)\bigr]^{-1}X_\bot
    =
    \left[\begin{array}{c|c}
        I_{2\times 2} & -\Lambda(\theta)\Sigma^{-1}(\theta)\\
        \hline
        0_{3\times 2} & \Sigma^{-1}(\theta)
    \end{array}\right]X_\bot.
\]
Therefore, \eqref{P:Thm3:z5_step2} becomes
\begin{align}
    z_5
    &=
    \left[\!\frac{1}{r_c},\,0,\,1\!\right]\Sigma^{-1}(\theta)
    \left[\!\!\begin{array}{c}
        x_{3_\bot}\\
        x_{4_\bot}\\
        x_{5_\bot}
    \end{array}\!\!\right]
    =
    \left[\!\frac{1}{r_c},\,0,\,1\!\right]
    \left[
    \begin{array}{ccc}
        \theta\sin\theta+\cos\theta & \sin\theta-\theta\cos\theta & 0\\
        -\sin\theta & \cos\theta & 0\\
        0 & 0 & 1
    \end{array}\right]
    \left[\!\!\begin{array}{c}
        x_{3_\bot}\\
        x_{4_\bot}\\
        x_{5_\bot}
    \end{array}\!\!\right] \nonumber\\
    &=
    \frac{\theta\sin\theta+\cos\theta}{r_c}\,x_{3_\bot}
    +\frac{\sin\theta-\theta\cos\theta}{r_c}\,x_{4_\bot}
    +x_{5_\bot}.
    \label{P:Thm3:z5_step3}
\end{align}

Now substitute the definitions of the transverse coordinates from \eqref{Transverse_coordinates}:
\[
    x_{3_\bot}=\dot x-r_c\cos\theta\,\dot\theta, \qquad
    x_{4_\bot}=\dot y-r_c\sin\theta\,\dot\theta, \qquad
    x_{5_\bot}=\dot\theta-\om_0^*.
\]
Then \eqref{P:Thm3:z5_step3} becomes
\begin{align*}
    z_5
    &=
    \frac{\theta\sin\theta+\cos\theta}{r_c}\bigl(\dot x-r_c\cos\theta\,\dot\theta\bigr)
    +\frac{\sin\theta-\theta\cos\theta}{r_c}\bigl(\dot y-r_c\sin\theta\,\dot\theta\bigr)
    +\dot\theta-\om_0^* \\
    &=
    \frac{\theta}{r_c}\bigl(\dot x\sin\theta-\dot y\cos\theta\bigr)
    +\frac{1}{r_c}\bigl(\dot x\cos\theta+\dot y\sin\theta\bigr) \\
    &\qquad
    -\bigl[(\theta\sin\theta+\cos\theta)\cos\theta+(\sin\theta-\theta\cos\theta)\sin\theta\bigr]\dot\theta
    +\dot\theta-\om_0^* \\
    &=
    \frac{\theta}{r_c}\bigl(\dot x\sin\theta-\dot y\cos\theta\bigr)
    +\frac{1}{r_c}\bigl(\dot x\cos\theta+\dot y\sin\theta\bigr)
    -\dot\theta+\dot\theta-\om_0^* \\
    &=
    \frac{\theta}{r_c}\bigl(\dot x\sin\theta-\dot y\cos\theta\bigr)
    +\frac{1}{r_c}\bigl(\dot x\cos\theta+\dot y\sin\theta\bigr)
    -\om_0^*.
\end{align*}
By the constraint \eqref{Nonholonomic_constraint},
$\dot x\sin\theta-\dot y\cos\theta=0,$
and by \eqref{kinematics_for_XY}--\eqref{kinematics_constraint_for_V},
$\dot x\cos\theta+\dot y\sin\theta=v_0.$
Hence,
\begin{equation}\label{P:Thm3:z5=}
    z_5=\frac{v_0}{r_c}-\om_0^*.
\end{equation}
Finally, since \(v_0=\sqrt{\dot x_0^2+\dot y_0^2}\) for the considered Dubins-car motion, \eqref{P:Thm3:z5=} coincides with \eqref{LemCC:z5=C}.
\end{proof}

\subsection{Proof of Claim~\ref{LemBB:on_dPhi_nonlinear}} 
\begin{proof}
One readily verifies that
\begin{equation}\label{P:LemBB:dPhi}
    \hbox{$\frac{d}{d\tau}$}\Phi_\bot(\tau)=
    \left[\begin{array}{ccccc}
        0 & 0 & \frac{1}{\om_0^*}\cos\tau
          & \frac{\tau}{\om_0^*}\cos\tau-\frac{1}{\om_0^*}\sin\tau & 0\\
        0 & 0 & \frac{1}{\om_0^*}\sin\tau
          & \frac{\tau}{\om_0^*}\sin\tau+\frac{1}{\om_0^*}\cos\tau & 0\\
        0 & 0 & -\sin\tau
          & -\tau\sin\tau & 0\\
        0 & 0 & \cos\tau
          & \tau\cos\tau & 0\\
        0 & 0 & 0 & 0 & 0
    \end{array}\right].
\end{equation}
On the other hand,
\[\begin{array}{l}
    A_\bot(\tau)\Phi_\bot(\tau)\\[2mm]
    =
    \left[\!\!
    \begin{array}{ccccc}
        0 & 0 & 1 & 0 & 0\\[1mm]
        0 & 0 & 0 & 1 & 0\\[1mm]
        0 & 0 & -\om_0^*\sin\tau\cos\tau & -\om_0^*\sin^2\tau & 0\\[1mm]
        0 & 0 & \om_0^*\cos^2\tau & \om_0^*\sin\tau\cos\tau & 0\\[1mm]
        0 & 0 & 0 & 0 & 0
    \end{array}
    \!\!\right]
    \left[\!\!
    \begin{array}{ccccc}
        1 & 0 & \frac{1}{\om_0^*}\sin\tau
          & \frac{\tau}{\om_0^*}\sin\tau+\frac{2\bigl(\cos\tau-1\bigr)}{\om_0^*} & 0\\
        0 & 1 & \frac{\bigl(1-\cos\tau\bigr)}{\om_0^*}
          & -\frac{\tau}{\om_0^*}\cos\tau+\frac{2}{\om_0^*}\sin\tau & 0\\
        0 & 0 & \cos\tau
          & \tau\cos\tau-\sin\tau & 0\\
        0 & 0 & \sin\tau
          & \tau\sin\tau+\cos\tau & 0\\
        0 & 0 & 0 & 0 & 1
    \end{array}
    \!\!\right] \\[2mm]
    =
    \left[\begin{array}{ccccc}
        0 & 0 & \cos\tau
          & \tau\cos\tau-\sin\tau & 0\\
        0 & 0 & \sin\tau
          & \tau\sin\tau+\cos\tau & 0\\
        0 & 0 & -\om_0^*\sin\tau
          & -\om_0^*\tau\sin\tau & 0\\
        0 & 0 & \om_0^*\cos\tau
          & \om_0^*\tau\cos\tau & 0\\
        0 & 0 & 0 & 0 & 0
    \end{array}\right]=\om_0 \hbox{$\frac{d}{d\tau}$}\Phi_\bot(\tau).
\end{array}\]
Hence the claimed differential identity holds.

The remaining relations in the statement follow from direct computation of the inverse matrix and standard block-matrix inversion formulas.
\end{proof}

\subsection{Proof of Claim~\ref{Lem12:z4_is_nonholonomic_constraint}} 
\begin{proof}
According to \eqref{P:Thm2:Y_to_Z} and \eqref{P:Thm2:Y},
\begin{align*}
    z_4
    &=
    \bigl[0,\,0,\,0,\,1,\,0\bigr]\,D\,Y 
    =
    \bigl[0,\,0,\,0,\,1,\,0\bigr]
    \left[\!\!\begin{array}{ccccc}
        \frac{J\om_0^*}{r_c} & 0 & 0 & 0 & 0\\
        0 & -\frac{J\om_0^*}{r_c} & -\frac{J}{r_c} & 0 & 0\\
        0 & 0 & -\frac{J}{r_c} & 0 & 0\\
        0 & 0 & 0 & 1 & 0\\
        0 & 0 & \frac{1}{r_c} & 0 & 1
    \end{array}\!\!\right]
    \left[\!\!\begin{array}{c}
        y_1\\[2mm]
        y_2\\[2mm]
        y_3\\[2mm]
        y_4\\[2mm]
        y_5
    \end{array}\!\!\right] 
    = y_4.
\end{align*}
On the other hand,
\begin{align*}
    y_4
    &=
    \bigl[0,\,0,\,0,\,1,\,0\bigr]\bigl[\Phi_\bot(\theta)\bigr]^{-1}X_\bot 
    =
    \bigl[0,\,0,\,0,\,1,\,0\bigr]
    \left[\begin{array}{c|c}
        I_{2\times 2} & -\Lambda(\theta)\Sigma^{-1}(\theta)\\
        \hline
        0_{3\times 2} & \Sigma^{-1}(\theta)
    \end{array}\right]X_\bot \\
    &=
    \bigl[0,\,1,\,0\bigr]\Sigma^{-1}(\theta)
    \left[\!\!\begin{array}{c}
        x_{3_\bot}\\
        x_{4_\bot}\\
        x_{5_\bot}
    \end{array}\!\!\right] 
    =
    \bigl[0,\,1,\,0\bigr]
    \left[\!\!\begin{array}{ccc}
        \theta\sin\theta+\cos\theta & \sin\theta-\theta\cos\theta & 0\\
        -\sin\theta & \cos\theta & 0\\
        0 & 0 & 1
    \end{array}\!\!\right]
    \left[\!\!\begin{array}{c}
        x_{3_\bot}\\
        x_{4_\bot}\\
        x_{5_\bot}
    \end{array}\!\!\right] \\
    &=
    x_{4_\bot}\cos\theta-x_{3_\bot}\sin\theta.
\end{align*}
Therefore,
\(
    z_4=x_{4_\bot}\cos\theta-x_{3_\bot}\sin\theta.
\)
Using the definitions of $x_{3_\bot}$ and $x_{4_\bot}$ from \eqref{Transverse_coordinates}, we obtain
\begin{align*}
    z_4
    &=
    \bigl[\dot y-r_c\sin\theta\,\dot\theta\bigr]\cos\theta
    -\bigl[\dot x-r_c\cos\theta\,\dot\theta\bigr]\sin\theta 
    =
    \dot y\cos\theta-\dot x\sin\theta.
\end{align*}
This proves \eqref{Lem12:z4=f}.
\end{proof}
\end{document}